\documentclass[runningheads,11points]{llncs}

\usepackage{amssymb,amstext}
\usepackage{cite,scrtime}
\usepackage[bookmarks=false]{hyperref}
\usepackage{graphics}
\usepackage{amsmath,dsfont}
\usepackage{mathrsfs}
\usepackage{fancyhdr}
\usepackage{verbatim}
\usepackage{boxedminipage}
\usepackage{colortbl}
\usepackage{tikz}

\usepackage[T1]{fontenc}
\usepackage{tikz}
\usepackage{graphicx,xspace}

\usepackage[ruled, vlined,boxed,commentsnumbered]{algorithm2e}

\usepackage{vmargin}
\setmarginsrb{3.7cm}{1.5cm}{3.7cm}{3.0cm}{1.2cm}{1.3cm}{0.9cm}{1.2cm}

\newcommand{\ie}{i.e.,~}

\newcommand{\N}{\mathbb{N}\xspace}
\newcommand{\opt}{{\sf opt}\xspace}
\newcommand{\NP}{{\sf NP}\xspace}

\newcommand{\conf}{{\sf conf}\xspace}

\newcommand{\tdmod}{{\sf tdmod}\xspace}
\newcommand{\twmod}{{\sf twmod}\xspace}
\newcommand{\degmod}{{\sf degmod}\xspace}
\newcommand{\pfm}{{\sf pfm}\xspace}

\newcommand{\CVD}{{\sf CVD}\xspace}

\newcommand{\IS}{\mbox{\sc IS}\xspace}
\newcommand{\VC}{\textsc{VC}\xspace}
\newcommand{\DS}{\textsc{DS}\xspace}

\newcommand{\FVS}{\textsc{FVS}\xspace}
\newcommand{\OCT}{\textsc{OCT}\xspace}
\newcommand{\RBDS}{\textsc{RBDS}\xspace}
\newcommand{\REF}{\textsc{ref}\xspace}
\newcommand{\PPT}{\textsc{ppt}\xspace}

\newcommand{\AND}{\textsc{and}\xspace}
\newcommand{\OR}{\textsc{or}\xspace}

\renewcommand{\sup}{\text{sup}}
\renewcommand{\inf}{\text{inf}}
\newcommand{\dec}{\text{dec}}

\newcommand{\X}{\mathcal{X}}
\renewcommand{\H}{\mathcal{H}}
\newcommand{\A}{A}
\newcommand{\C}{C}
\newcommand{\R}{R}
\newcommand{\Y}{Y}
\newcommand{\x}{x}
\renewcommand{\a}{a}
\renewcommand{\c}{c}
\renewcommand{\r}{r}
\newcommand{\y}{y}

\newcommand{\tw}{{\sf tw}}
\newcommand{\td}{{\sf td}}
\newcommand{\rw}{{\sf rw}}

\renewcommand{\O}{\mathcal{O}}

\newtheorem{observation}{Observation}

\renewenvironment{proof}[1][]{\par \noindent {\bf Proof:#1}\ }{\hfill$\Box$\\}

\fancyhead[LE, RO]{\thepage}

\title{How much does a treedepth modulator help to obtain polynomial kernels beyond sparse graphs? \thanks{Work supported by the French projects DEMOGRAPH (ANR-16-CE40-0028) and ESIGMA (ANR-17-CE40-0028). An extended abstract of this work appeared in the \emph{Proceedings of the 12th International Symposium on Parameterized and Exact Computation (\textbf{IPEC}), pages 10:1--10:13, volume 89 of LIPIcs, Vienna, Austria, September \textbf{2017}}.}
}

\author{Marin Bougeret~\inst{1}  \and Ignasi Sau~\inst{2}}

\authorrunning{Marin Bougeret and  Ignasi Sau}
\titlerunning{How much does a treedepth modulator help to obtain polynomial kernels?}

\institute{LIRMM, Universit\'e de Montpellier, Montpellier, France.\\
\email{bougeret@lirmm.fr} \and
CNRS, LIRMM, Universit\'e de Montpellier, Montpellier, France, and\\Departamento de Matem\'atica, UFC, Fortaleza, Brazil.\\
\email{sau@lirmm.fr}}

\begin{document}
\maketitle

\begin{abstract}
In the last years, kernelization with structural parameters has been an active area of research within the field of parameterized complexity. As a relevant example, Gajarsk{\'y} et al.~[ESA 2013] proved that every graph problem satisfying a property called finite integer index admits a linear kernel on graphs of bounded expansion and an almost linear kernel on nowhere dense graphs, parameterized by the size of a $c$-treedepth modulator, which is a vertex set whose removal results in a graph of treedepth at most $c$, where $c \geq 1$ is a fixed integer. The authors left as further research to investigate this parameter on general graphs, and in particular to find problems that, while admitting polynomial kernels on sparse graphs, behave differently on general graphs.

In this article we answer this question by finding two very natural such problems: we prove that \textsc{Vertex Cover} admits a polynomial kernel on general graphs for any integer $c \geq 1$, and that \textsc{Dominating Set} does {\sl not} for any integer $c \geq 2$ even on degenerate graphs, unless $\text{NP} \subseteq \text{coNP}/\text{poly}$. For the positive result, we build on the techniques of Jansen and Bodlaender~[STACS 2011], and for the negative result we use a polynomial parameter transformation for $c\geq 3$ and an \textsc{or}-cross-composition for $c = 2$. As existing results imply that \textsc{Dominating Set} admits a polynomial kernel on degenerate graphs for $c = 1$, our result provides a dichotomy about the existence of polynomial kernels for \textsc{Dominating Set} on degenerate graphs with this parameter.


\medskip
\noindent\textbf{Keywords}: parameterized complexity; polynomial kernels; structural parameters; treedepth; treewidth, sparse graphs.
\end{abstract}


 \section{Introduction}
\label{sec:intro}

\noindent \textbf{Motivation}. There is a whole area of parameterized algorithms and kernelization investigating the \textit{complexity ecology} (see for example~\cite{niedermeier2010reflections}),
where the objective is to consider a \emph{structural} parameter measuring how ``complex'' is the input, rather than the size of the solution.
For instance, parameterizing a problem by the treewidth of its input graph has been a great success for {\sf FPT} algorithms, triggered by Courcelle's theorem~\cite{courcelle1990monadic} stating that any problem expressible in MSO logic is {\sf FPT} parameterized by treewidth. However, the situation is not as  good for kernelization, as many problems do not admit polynomial kernels when
    parameterized by treewidth unless $\text{NP} \subseteq \text{coNP}/\text{poly}$~\cite{bodlaender2009problems}.


Of fundamental importance within structural parameters are parameters measuring the so-called ``distance from triviality'' of the input graphs,  like the size of a vertex cover (distance to an independent set) or of a feedback vertex set (distance to a forest). Unlike treewidth, these parameters may lead to both positive and negative results for polynomial kernelization. An elegant way to generalize these parameters is to consider a parameter allowing to {\sl quantify} the triviality of the resulting instance, measured in terms of its treewidth. More precisely, for a positive integer $c$, a  \emph{$c$-treewidth modulator} of a graph $G$ is a set of vertices $X$ such that the treewidth of $G-X$ is at most $c$. Note that for $c=0$ (resp. $c=1$), a $c$-treewidth modulator corresponds to a vertex cover (resp. feedback vertex set).

Treewidth modulators have been extensively studied in kernelization, especially on classes of {\sl sparse} graphs, where they have been at the heart of the recent developments of meta-theorems for obtaining linear and polynomial kernels on graphs on surfaces~\cite{BFL+09}, minor-free graphs~\cite{FLST10}, and topological-minor-free graphs~\cite{garnero2015explicit,KLPRRSS16}, all based in a generic technique known as \emph{protrusion replacement}. However, as observed in~\cite{gajarsky2013kernelization,KLPRRSS16}, if one tries to move further in the families of sparse graphs by considering, for instance, graphs of bounded expansion, for several natural problems such as \textsc{Treewidth-$t$ Vertex Deletion} (minimizing the number of vertices to be removed to get a graph of treewidth at most $t$), parameterizing by a treewidth modulator is as hard as on general graphs.

This observation led Gajarsk{\'y} et al.~\cite{gajarsky2013kernelization} to consider another type of modulators, namely \emph{$c$-treedepth modulators} (defined analogously to $c$-treewidth modulators), where treedepth is a graph invariant --~which we define in Section~\ref{sec:prelim}~-- that plays a crucial structural role on graphs of bounded expansion and nowhere dense graphs~\cite{sparsity}. Gajarsk{\'y} et al.~\cite{gajarsky2013kernelization}  proved that any graph problem satisfying a property called \emph{finite integer index}  admits a linear
kernel on graphs of bounded expansion and an almost linear kernel on nowhere dense graphs when parameterized by the size of a $c$-treedepth modulator. Shortly afterwards this result was obtained, the authors asked~\cite{worker13} to investigate this parameter on {\sl general graphs}, namely to find natural problems that admit and that do not admit polynomial kernels parameterized by the size of a $c$-treedepth modulator. More precisely, are there natural problems $\Pi_1$ and $\Pi_2$ fitting into the framework of~\cite{gajarsky2013kernelization} such that $\Pi_1/c$-\tdmod admits a polynomial kernel on general graphs, but $\Pi_2/c$-\tdmod does not?\footnote{As defined in Section~\ref{sec:prelim}, ``$/c$-\tdmod'' means ``parameterized by the size of a $c$-treedepth modulator''.}






\vspace{.25cm}
\noindent \textbf{Our results}. In this article we answer the above question by proving that \textsc{Vertex Cover} and \textsc{Dominating Set} are such problems $\Pi_1$ and $\Pi_2$, respectively. Let us now elaborate a bit more on our results, the techniques we use to prove them, and how do they compare to previous work in the area (see the preliminaries of Section~\ref{sec:prelim} for any undefined terminology).



Note first  that both \VC/$c$-\tdmod and \DS/$c$-\tdmod (where \DS stands for \textsc{Dominating Set})  are {\sf FPT} on general graphs, as they are {\sf FPT}
by treewidth~\cite{courcelle1990monadic}, which is a smaller parameter than $c$-\tdmod, as for any graph $G$ and any integer $c \geq 0$, it holds that $\tw(G) \leq \td(G) - 1 \leq$ c-$\tdmod(G) + c -1$. Thus, asking for polynomial kernels is a pertinent question.

In Section~\ref{sec:pos} we prove that \VC/$c$-\tdmod admits a polynomial kernel on general graphs. Our approach is based on the techniques introduced by Jansen and Bodlaender~\cite{jansen2011vertex} to prove that \VC/$1$-\twmod (or equivalently, \IS/\FVS) admits a polynomial kernel. As in~\cite{jansen2011vertex}, we in fact provide a polynomial kernel for \IS/$c$-\tdmod, which is easily seen to be equivalent to \VC/$c$-\tdmod. More precisely, we use three reduction rules inspired from the rules given in~\cite{jansen2011vertex}, and we present a recursive algorithm that, starting from a $c$-treedepth modulator, constructs an appropriate $(c-1)$-treedepth modulator and calls itself inductively. The kernel obtained in this manner has $x^{2^{\O(c^2)}}$ vertices, where $x$ is the size of the $c$-treedepth modulator. This result completes the following panorama of structural parameterization for \textsc{Vertex Cover}, which has been a real testbed for structural parameterizations in the last years:
\begin{itemize}
\item[$\bullet$] \VC/$1$-\twmod (or equivalently, \VC/\FVS) admits a polynomial kernel~\cite{jansen2011vertex}.
\item[$\bullet$] \VC/$c$-\twmod for $c \ge 2$ does not admit a polynomial kernel unless $\text{NP} \subseteq \text{coNP}/\text{poly}$~\cite{cygan2014hardness}.
\item[$\bullet$] \VC/$2$-\degmod (distance to a graph of maximum degree $2$) and \VC/$c$-\CVD (distance to a disjoint collection of cliques of size at most $c$) admit a polynomial kernel~\cite{majumdar2015kernels}.   Note that our result generalizes the latter kernel, as a disjoint collection of cliques of size at most $c$ is a particular case of a graph having treedepth at most $c$.
\item[$\bullet$] \VC/\pfm (distance to a pseudoforest) admits a polynomial kernel~\cite{DBLP:journals/corr/FominS16}.
\end{itemize}




In Section~\ref{sec:neg1} we turn to negative results for \textsc{Dominating Set}.
We provide a characterization, according to the value of $c$, of the existence of polynomial kernels for \DS/$c$-\tdmod on degenerate graphs. Indeed, using the results of Philip et al.~\cite{philip2009solving} it is almost immediate to prove that \DS/$1$-\tdmod (or equivalently, \DS/\VC) admits a polynomial kernel on degenerate graphs. For $c \geq 3$, we rule out the existence of polynomial kernels for \DS/$c$-\tdmod on $2$-degenerate graphs by a simple  polynomial parameter transformation  from \DS/$1$-\tdmod on general graphs, which does not admit polynomial kernels unless $\text{NP} \subseteq \text{coNP}/\text{poly}$~\cite{dom2014kernelization}. The remaining case, namely \DS/$2$-\tdmod, turns out to be more interesting, and we rule out the existence of polynomial kernels on 4-degenerate graphs by providing an \textsc{or}-cross-composition from $3$-\textsc{Sat}. This dichotomy for the existence of polynomial kernels for \DS/$c$-\tdmod on degenerate graphs is to be compared with the dichotomy for \VC/$c$-\twmod on general graphs discussed above~\cite{jansen2011vertex,cygan2014hardness}.



As mentioned before, it is  commonly admitted that almost no natural problem admits a polynomial kernel parameterized by $\tw$, or even with $\td$. However, to the best of our knowledge the only published negative results are those in~\cite{bodlaender2009problems}, where the
authors prove that \IS/\tw\ and \DS/\tw\ do not admit a polynomial kernel unless $\text{NP} \subseteq \text{coNP}/\text{poly}$. As this result only holds for general graphs, for the sake of completeness
we complete it in Section~\ref{sec:neg2}, by showing that a large majority of the problems considered in~\cite{gajarsky2013kernelization} having an almost linear kernel parameterized by $c$-\tdmod on nowhere dense graphs do not admit polynomial kernels parameterized by $\td$, even on planar graphs of bounded maximum degree.

\section{Preliminaries}
\label{sec:prelim}

\textbf{Graphs}. Unless explicitly mentioned, all graphs considered here are simple and undirected, and we refer the reader to~\cite{diestel2005graph} for any undefined notation.
Given a graph $G=(V,E)$ and $X \subseteq V$, we denote $N_X(v) = N(v) \cap X$, where $N(v) = \{u \in V \mid \{u,v\} \in E\}$.
We denote by $\alpha(G)$ the size of a maximum independent set of $G$.
For any function $f$ defined on any induced subgraph of a given graph $G$, given a subset of vertices $V'$ of $G$, we denote $f(V')=f(G[V'])$ (for example, $\alpha(V') = \alpha(G[V'])$).
For any integer $n$, we denote $[n]=\{i \in \mathbb{N}, 1 \le i \le n\}$.

For the following definitions related to treedepth, bounded expansion, and nowhere dense graph classes, we refer the reader to~\cite{sparsity} for more details,
and we only recall here some basic notations and facts.
The \emph{treedepth} of a graph $G$ (denoted $\td(G)$) is the minimum height of a rooted forest $F$ (called a \emph{treedepth decomposition}) such that $G$ is a subgraph of the closure of $F$, where the closure of a rooted tree is the graph obtained by adding an edge between any vertex and all its ancestors, and the height of a rooted tree is the number of vertices in a longest path from the root to a leaf.
Let $c \ge 1$ be an integer. A $c$-\emph{treedepth modulator} is a subset of vertices $X \subseteq V$ such that $\td(G[V\setminus X]) \le c$, and we denote by $c$-$\tdmod(G)$ the size of a smallest $c$-treedepth modulator of $G$. A $c$-treewidth modulator is defined in the same way.
Recall that as these parameters are greater than their associated measure (\ie $\tw(G) \le c$-$\twmod(G)+c$ and $\tw(G) \leq \td(G) \leq c$-$\tdmod(G)+c$, where $\tw(G)$ denotes the treewidth of $G$) the negative results for kernelization by treewidth and treedepth do not immediately apply, but
the positive {\sf FPT} results do.


Concerning graph classes, we recall that in the sparse graph hierarchy, graphs of bounded expansion (BE) and nowhere dense graphs (ND) are related to classic sparse families as follows (see~\cite{sparsity} for the definitions): planar graphs $\subseteq$ minor-free graphs $\subseteq$ BE $\subseteq$ ND.
Note also that the class of graphs of bounded degeneracy is a natural superclass of BE (intuitively, BE also requires the shallow minors to be degenerate), and is incomparable with ND.

We refer the reader to Appendix~\ref{ap:problems} for the definition and acronyms of problems considered in the paper, like \IS for the \textsc{independent set} problem.

\vspace{.25cm}
\noindent\textbf{Parameterized complexity}. We refer the reader to~\cite{downey2013fundamentals,CyganFKLMPPS15,FG06,Nie06} for more details on parameterized complexity and kernelization, and we recall here only some basic definitions, with special emphasis on tools for polynomial kernelization.
A \emph{parameterized problem} is a language $L \subseteq \Sigma^* \times \mathbb{N}$, for some finite alphabet $\Sigma$.  For an instance $I=(x,k) \in \Sigma^* \times \mathbb{N}$, $k$ is called the \emph{parameter}.
Given a classical (non-parameterized) decision problem $L_{c} \subseteq \Sigma^*$ and a function $\kappa: \Sigma^* \rightarrow \mathbb{N}$, we denote by
$L_{c}/\kappa = \{(x,\kappa(x)\} \mid x \in L_{c}\}$ the associated parameterized problem. For example, \IS/$c$-\tdmod denotes the \textsc{independent Set} problem parameterized by the size
of a $c$-treedepth modulator.

A parameterized problem is \emph{fixed-parameter tractable} ({\sf FPT}) if there exists an algorithm $A$, a computable function $f$, and a constant $c$ such that given an instance $I=(x,k)$,
$A$ (called an {\sf FPT} algorithm) correctly decides whether $I \in L$ in time bounded by \mbox{$f(k) \cdot |I|^c$}.
Given a computable function $g$, a \emph{kernelization algorithm} (or simply a \emph{kernel}) for a parameterized problem $L$ of \emph{size} $g$ is an algorithm $A$ that given any instance $I=(x,k)$ of $L$, runs in polynomial
time and returns an equivalent instance $I'=(x',k')$ (\ie $I'\in L$ if and only if $I\in L$) with $|I'|+k' \le g(k)$.
It is well-known that the existence of an {\sf FPT} algorithm is equivalent to the existence of a kernel (whose size may be exponential or larger), implying that problems admitting a polynomial kernel form a natural subclass of {\sf FPT}.

\vspace{.25cm}
\noindent\textbf{Some tools for (ruling out) polynomial kernelization}. Among the wide literature on polynomial kernelization, we only recall here the two following tools used in this paper: \emph{compositions} (used to prove that a parameterized problem does not admit a polynomial kernel unless $\text{NP} \subseteq \text{coNP}/\text{poly}$), and \emph{polynomial parameter transformations} from $L_1$ to $L_2$, denoted by $L_1 \le_{\PPT} L_2$ (used to prove that a polynomial kernel for $L_2$ implies a polynomial kernel for $L_1$).
\begin{definition}[\!\!\cite{bodlaender2009problems}]
An \emph{\OR-composition algorithm} for a parameterized problem $L \subseteq \Sigma^* \times \mathbb{N}$ is an algorithm that
\begin{itemize}
\item[$\bullet$] receives as input a sequence $((x_1,k),\dots,(x_t ,k))$, with $(x_i,k) \in  \Sigma^* \times \mathbb{N}$ for each $1 \le i \le t$,
\item[$\bullet$]  uses time polynomial in $\sum_{i=1}^t |x_i|+k$, and
\item[$\bullet$]  outputs $(y,k') \in \Sigma^* \times \mathbb{N}$ such that
 \begin{enumerate}
  \item $(y,k) \in L$ if and only if $(x_i ,k) \in L$ for some $1 \le i \le t$.
  \item $k'$ is polynomially bounded by a function of $k$.
 \end{enumerate}
\end{itemize}
We can similarly define an \AND-composition algorithm.
\end{definition}
A parameterized problem admitting a \OR-composition (resp. \AND-composition) is called an \emph{\OR-compositional} (resp. \AND-\emph{compositional}) problem. We first need the notion of \emph{polynomial equivalence relation}, defined as an equivalence relation $R$ on $\Sigma^*$ such that testing whether two strings $x,y$ are equivalent can be done in time polynomial in $|x|+|y|$, and such that $R$ restricted to the strings of size at most $n$ has at most $p(n)$ equivalence classes, for some polynomial $p$.

\begin{definition}[\!\!\cite{BodlaenderJK14}]\label{def:ORcross}
Let $L \subseteq \Sigma^*$ be a (classical) problem and $Q \subseteq \Sigma^* \times \mathbb{N}$
be a parameterized  problem. We say that $L$ \emph{\OR-cross-composes} (resp. \AND-\emph{cross-composes}) into $Q$ if there exists
a polynomial equivalence relation $R$ and an algorithm $A$, called the \emph{cross-composition}, satisfying the following conditions. The algorithm $A$ takes as
input a sequence of strings $x_1,\dots,x_t \in \Sigma^*$ that are equivalent with respect to $R$,
runs in time polynomial in $\sum_{i=1}^t|x_i|$, and outputs one instance $(y,k) \in \Sigma^* \times \mathbb{N}$ such that:
\begin{itemize}
\item[$\bullet$] $k \le p(\max_{i=1}^t |x_i| + \log(t))$ for some polynomial $p(\cdot)$, and
\item[$\bullet$] $(y, k) \in Q$ if and only if there exists at least one index $i$ such that $x_i \in L$ (resp. if for every $i \in [t], x_i \in L$).
\end{itemize}
\end{definition}

Compositions are a great tool to get negative results for kernelization.

\begin{theorem}[\!\!\cite{downey2013fundamentals,CyganFKLMPPS15}]\label{thm:compo}
Let $L$ be an \OR-compositional or an \AND-compositional parameterized problem whose derived classical problem is \NP-complete.
Then, $L$ does not admit a polynomial kernel unless \emph{$\text{NP} \subseteq \text{coNP}/\text{poly}$}.
\end{theorem}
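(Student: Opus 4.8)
The plan is to reduce the statement to the notion of \emph{distillation} and then invoke the standard machinery asserting that an \NP-hard classical language cannot be distilled unless $\text{NP} \subseteq \text{coNP}/\text{poly}$. Recall that an \emph{\OR-distillation} of a classical language $\tilde{L}\subseteq\Sigma^*$ is a polynomial-time algorithm that, given $t$ strings $w_1,\dots,w_t$, outputs a single string $y$ with $|y|\le p(\max_i|w_i|)$ for some fixed polynomial $p$ and such that $y\in\tilde{L}$ if and only if $w_i\in\tilde{L}$ for some $i$; the \AND-variant replaces ``for some $i$'' by ``for every $i$''. So it suffices to establish two things: \textbf{(i)} if $L$ is \OR-compositional (resp.\ \AND-compositional) and admits a polynomial kernel, then its derived classical problem $\tilde{L}$ admits an \OR-distillation (resp.\ an \AND-distillation); and \textbf{(ii)} that no \NP-hard language admits such a distillation unless $\text{NP}\subseteq\text{coNP}/\text{poly}$.

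For \textbf{(i)} in the \OR\ case, let $\mathcal{C}$ be the \OR-composition algorithm (with polynomial parameter blow-up) and let $\mathcal{K}$ be the polynomial kernelization algorithm. On input $w_1,\dots,w_t$, first decode each $w_i$ as an instance $(x_i,k_i)$ of $L$; since in the derived classical problem the parameter is encoded in unary, $k_i\le|w_i|\le N:=\max_i|w_i|$. Group the instances according to the value of their parameter, obtaining at most $N+1$ groups. Within a group all parameters coincide, so $\mathcal{C}$ applies and returns a single instance whose size is not a priori bounded but whose parameter is polynomial in $N$; feeding it to $\mathcal{K}$ then produces one instance of $L$ of total size $N^{\O(1)}$ that is a yes-instance if and only if some instance of the group is. We are thus left with at most $N+1$ instances of $L$, each of size $N^{\O(1)}$, whose disjunction must be preserved. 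The language $\{\,w_1\#\cdots\#w_s : s\ge 1,\ \exists\,\ell\le s,\ w_\ell\in\tilde{L}\,\}$ belongs to \NP\ (a certificate is an index together with a witness, using $\tilde{L}\in\NP$), hence it many-one reduces in polynomial time to the \NP-hard language $\tilde{L}$; applying this reduction compresses the $N+1$ instances into a single instance of $\tilde{L}$ of size $N^{\O(1)}$, independent of $t$. The whole procedure runs in time polynomial in $\sum_i|w_i|$, so it is an \OR-distillation of $\tilde{L}$. The \AND\ case is identical after replacing every disjunction by a conjunction, the point being that the ``conjunction of instances'' language is again in \NP.

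Step \textbf{(ii)} is the genuine obstacle, and I would not reprove it. For the \OR\ case it is the theorem of Fortnow and Santhanam, obtained through an advice/counting argument on the behaviour of the distillation on no-instances, which yields $\text{coNP}\subseteq\text{NP}/\text{poly}$ (equivalently $\text{NP}\subseteq\text{coNP}/\text{poly}$, with the polynomial hierarchy collapsing to its third level). For the \AND\ case it is Drucker's theorem, whose proof is considerably more delicate and rests on an information-theoretic argument of communication-complexity flavour. Both are presented in the textbooks~\cite{downey2013fundamentals,CyganFKLMPPS15}. Combining \textbf{(i)} and \textbf{(ii)}: if the \NP-complete problem $\tilde{L}$ admitted the distillation produced in \textbf{(i)}, then $\text{NP}\subseteq\text{coNP}/\text{poly}$, which is the claimed statement.
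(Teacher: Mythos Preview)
The paper does not prove this theorem; it is stated as a cited result from~\cite{downey2013fundamentals,CyganFKLMPPS15} and used as a black box. Your sketch is the standard textbook route and is correct: composition followed by kernelization, together with grouping by parameter value and a final many-one reduction exploiting the \NP-completeness of $\tilde{L}$, yields an \OR- (resp.\ \AND-) distillation of $\tilde{L}$, and then the Fortnow--Santhanam (resp.\ Drucker) theorem gives the collapse. Since the paper offers no proof to compare against, there is nothing further to contrast.
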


 A \emph{polynomial compression} of a parameterized language
$Q \subseteq \Sigma^* \times \mathbb{N}$ into a language $R \subseteq \Sigma^*$ is an algorithm that takes as input
an instance $(x,k) \in \Sigma^* \times \mathbb{N}$, works in time polynomial in $|x|+k$, and returns
a string y such that
$|y| \leq p(k)$ for some polynomial $p$, and $y \in R$ if and only if $(x,k) \in Q$.

\begin{theorem}[\!\!\cite{downey2013fundamentals,CyganFKLMPPS15}]\label{thm:cross}
Assume that an \NP-hard language $L$ cross-composes into a
parameterized language $Q$. Then $Q$ does not admit a polynomial compression
unless \emph{$\text{NP} \subseteq \text{coNP}/\text{poly}$}.
\end{theorem}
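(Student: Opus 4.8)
The plan is to reduce the statement to the known impossibility of a (weak) OR-distillation for NP-hard languages, following the same strategy used to establish Theorem~\ref{thm:compo}. Assume, for contradiction, that $Q$ admits a polynomial compression into some language $R \subseteq \Sigma^*$ with output size bounded by a polynomial $q$, and let $A$ and $\rho$ be, respectively, the cross-composition from $L$ into $Q$ and its associated polynomial equivalence relation, with parameter bound $k \le p(\max_i |x_i| + \log t)$. From these ingredients I would build a polynomial-time algorithm that maps any tuple $(x_1,\dots,x_t)$ of strings of length at most $n$ to a single string $z$ of length $\mathrm{poly}(n)$ such that $z$ lies in a fixed language $R'$ if and only if $x_i \in L$ for some $i$; such an algorithm is a weak OR-distillation of the NP-hard language $L$.

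The construction has three steps. First, to dispense with the restriction that $A$ only accepts $\rho$-equivalent inputs, I would partition $x_1,\dots,x_t$ into the classes of $\rho$; since $\rho$ has only $\mathrm{poly}(n)$ classes on strings of length at most $n$, this yields at most $m = \mathrm{poly}(n)$ groups. Second, I would run $A$ inside each group to obtain an instance $(y_j,k_j)$ of $Q$, run the compression on each to obtain a string $z_j$ with $z_j \in R$ iff some member of the $j$-th group lies in $L$, and then set $z = z_1 \# \cdots \# z_m$ (with a fresh separator) and $R' = \{\, w_1 \# \cdots \# w_\ell : \ell \ge 1,\ w_j \in R \text{ for some } j \le \ell \,\}$, so that $z \in R'$ iff $\bigvee_i (x_i \in L)$. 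Third, to control the length: we may assume $t \le |\Sigma|^{n+1}$, as otherwise two of the length-$\le n$ strings coincide and one can be deleted; then $\log t = O(n)$, so $k_j \le p(O(n))$, $|z_j| \le q(p(O(n)))$, and $|z| \le m\cdot(q(p(O(n)))+1) = \mathrm{poly}(n)$, independent of $t$, as required.

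It then remains to invoke the distillation lower bound: by the theorem of Fortnow and Santhanam (see~\cite{downey2013fundamentals,CyganFKLMPPS15}) --~in the strengthened form that also rules out distillations whose output lies in an \emph{arbitrary} language $R'$ (not necessarily in NP), due to Dell and van Melkebeek, and further refined by Drucker so as to absorb the additive $\log t$ slack and to cover the AND case as well~-- the existence of such a weak OR-distillation (resp.\ AND-distillation) of the NP-hard language $L$ implies $\overline{L} \in \text{NP}/\text{poly}$, hence $\text{coNP} \subseteq \text{NP}/\text{poly}$, which is equivalent to $\text{NP} \subseteq \text{coNP}/\text{poly}$; this contradicts the hypothesis, proving the theorem.

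The bookkeeping of the first two paragraphs is routine: chaining the cross-composition with the hypothetical compression, using the polynomial equivalence relation both to legitimize the use of $A$ and to keep the number of produced instances polynomial, and merging those instances into one string over a fixed auxiliary language. The genuine obstacle --~and the reason the statement is cited rather than reproven here~-- is the distillation lower bound itself: the information-theoretic counting argument showing that a string of length $\mathrm{poly}(n)$ cannot encode the OR (or AND) of up to $2^{O(n)}$ independent NP instances without placing $\overline{\text{SAT}}$ in $\text{NP}/\text{poly}$, made robust to an arbitrary output language and to the $\log t$ term via the bucketing step above.
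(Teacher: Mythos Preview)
The paper does not prove this theorem: it is stated with citations to~\cite{downey2013fundamentals,CyganFKLMPPS15} and used as a black box, so there is no ``paper's own proof'' to compare against. Your sketch is the standard argument due to Bodlaender, Jansen, and Kratsch~\cite{BodlaenderJK14} (combined with the distillation lower bounds of Fortnow--Santhanam, Dell--van Melkebeek, and Drucker), and it is essentially correct: partition by the polynomial equivalence relation, cross-compose inside each class, compress, concatenate into an OR-language, and appeal to the impossibility of weak OR- (resp.\ AND-) distillation for NP-hard languages. The bookkeeping you give --~removing duplicate inputs so that $\log t = O(n)$, bounding the number of classes by $\mathrm{poly}(n)$, and chaining the polynomial bounds~-- is the right way to control the output length, and your final paragraph accurately identifies where the real work lies.
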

Note that as a polynomial kernel is also a polynomial compression, the previous theorem also rules out the existence of a polynomial kernel.

\begin{definition}[\!\!\cite{bodlaender2008analysis}]\label{def:ppt}
Let $P$ and $Q$ be parameterized problems. We say that $P$ is \emph{polynomial
time and parameter} reducible to $Q$, written $P\le_{\PPT} Q$, if there exist a polynomial time
computable function $f:\Sigma^* \times \mathbb{N} \rightarrow \Sigma^* \times \mathbb{N}$ and a polynomial
$p$, such that for all $x \in \Sigma^*$ and $k \in \mathbb{N}$, if $f(x,k) = (x',k')$, then the following hold:
\begin{itemize}
\item[$\bullet$]  $(x,k) \in P$, if and only if $(x',k') \in Q$, and
\item[$\bullet$] $k' \le p(k)$.
\end{itemize}
We call $f$ a \emph{polynomial time and parameter transformation} from $P$ to $Q$ (\emph{PPT} for short).
\end{definition}

The following theorem can be used to obtain  either positive or negative results.
\begin{theorem}[\!\!\cite{bodlaender2008analysis}]\label{thm:ppt}
Let $P$ and $Q$ be parameterized problems, and suppose that $P_c$ and $Q_c$ are
the derived classical problems. Suppose that $Q_c$ is \NP-complete, and that $P_c \in$ \NP. Suppose
that $P \le_{\PPT} Q$. If $Q$ has a polynomial kernel, then $P$ has a polynomial kernel.
\end{theorem}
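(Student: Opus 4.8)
The plan is to chain together three maps: the polynomial parameter transformation $f$ from $P$ to $Q$ provided by $P\le_{\PPT}Q$, the assumed polynomial kernelization $\mathcal{K}$ of $Q$, and a classical polynomial-time many-one reduction $g$ from the unparameterized problem $Q_c$ to the unparameterized problem $P_c$. Composing $\mathcal{K}$ after $f$ already maps every instance of $P$ to an \emph{equivalent} instance of $Q$ whose size is bounded by a polynomial in the parameter; the role of $g$ is to push this shrunk instance back into $P$, so that the whole pipeline is a kernelization for $P$ rather than merely a polynomial compression of $P$ into $Q_c$.

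First I would fix notation for the ingredients: a polynomial $p$ bounding the output parameter of $f$ (so on $(x,k)$ the map $f$ runs in time polynomial in $|x|+k$ and returns $(x_1,k_1)$ with $(x_1,k_1)\in Q\Leftrightarrow(x,k)\in P$ and $k_1\le p(k)$); a polynomial $h$ bounding the output size of $\mathcal{K}$ (so $\mathcal{K}$ is polynomial-time and returns, on $(y,\ell)$, an equivalent $(y',\ell')$ with $|y'|+\ell'\le h(\ell)$); and a polynomial $q$ bounding the output length of $g$. On input $(x,k)$, the kernelization for $P$ then proceeds as follows: $(1)$ set $(x_1,k_1)=f(x,k)$; $(2)$ set $(x_2,k_2)=\mathcal{K}(x_1,k_1)$; $(3)$ encode $(x_2,k_2)$ as an equivalent classical instance $z$ of $Q_c$, so that $z\in Q_c\Leftrightarrow(x_2,k_2)\in Q$ and $|z|=\mathcal{O}(|x_2|+k_2)$; $(4)$ set $w=g(z)$; $(5)$ decode $w$ into the instance $(x^\star,k^\star)$ of $P$ that it encodes, or, if $w$ is not a well-formed such encoding (whence $w\notin P_c$), output a fixed no-instance of $P$. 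Correctness is the chain $(x^\star,k^\star)\in P\Leftrightarrow w\in P_c\Leftrightarrow z\in Q_c\Leftrightarrow(x_2,k_2)\in Q\Leftrightarrow(x_1,k_1)\in Q\Leftrightarrow(x,k)\in P$, using in turn the definitional correspondence between $P$ and $P_c$, correctness of $g$, correctness of the encoding, correctness of $\mathcal{K}$, and correctness of $f$. The output size satisfies $|x^\star|+k^\star\le|w|\le q(\mathcal{O}(h(p(k))))$, a polynomial in $k$ obtained by composing $k_1\le p(k)$, $|x_2|+k_2\le h(k_1)$, $|z|=\mathcal{O}(|x_2|+k_2)$ and $|w|\le q(|z|)$; and each of steps $(1)$--$(5)$ runs in time polynomial in $|x|+k$, since $f$ is a \PPT, $|x_1|+k_1$ is polynomial in $|x|+k$ while $\mathcal{K}$ is polynomial-time, $|z|$ is polynomial in $|x|+k$, and $g$ and the (de)coding are polynomial-time.

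The only substantive step --~everything else being routine tracking of polynomial bounds and the standard translation between a parameterized problem and its classical version~-- is step $(4)$: after steps $(1)$--$(3)$ we have compressed $P$, but into $Q_c$ rather than into $P_c$, and to close the loop we need a classical polynomial-time reduction in the direction $Q_c\to P_c$. This is exactly what the complexity-theoretic hypotheses supply: since $Q_c\in\NP$ (indeed it is $\NP$-complete) and $P_c$ is $\NP$-hard, such a reduction $g$ exists by the definition of $\NP$-hardness, and this is the one and only place those hypotheses enter the argument.
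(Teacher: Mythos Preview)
The paper does not give its own proof of this theorem; it is quoted from \cite{bodlaender2008analysis} as a black box. Your argument is exactly the standard one for this result, and the chain of reductions together with the size and running-time bookkeeping is correct.

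There is, however, one genuine mismatch. In your last paragraph you write ``$P_c$ is $\NP$-hard'' and use this, together with $Q_c\in\NP$, to obtain the many-one reduction $g\colon Q_c\to P_c$ needed in step~(4). But the hypotheses \emph{as stated in the paper} say only that $P_c\in\NP$ and that $Q_c$ is $\NP$-complete; from these one gets polynomial reductions \emph{into} $Q_c$, not out of it, so step~(4) is unjustified under the stated assumptions. This is in fact a transcription slip in the paper's statement: in the original source the roles are reversed ($P_c$ is assumed $\NP$-complete and $Q_c\in\NP$), and with those correct hypotheses your proof goes through verbatim. You should flag this discrepancy explicitly rather than silently assert the hypothesis you actually need.
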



\section{A polynomial kernel for \VC/$c$-\tdmod on general graphs}
\label{sec:pos}

In this section we prove that for any positive integer $c$, \VC/$c$-\tdmod admits a polynomial kernel on general graphs.
Recall that this was only known for \VC/$1$-\tdmod and \VC/$2$-\tdmod, as for $c=1$ this corresponds to the
standard parameterization and we can use the linear kernel of~\cite{abu2007crown},
and for $c=2$ we have $1$-\twmod $\le 2$-\tdmod (as a $1$-\twmod corresponds
to the distance to a forest, while $2$-\tdmod corresponds to the
distance to a star forest), and thus we can use the polynomial kernel
of~\cite{jansen2011vertex} for \VC/$1$-\twmod.
We also recall that we cannot expect to extend our result to
\VC/$c$-\twmod for any $c \geq 2$~\cite{cygan2014hardness}.


As \VC/$c$-\tdmod and \IS/$c$-\tdmod are equivalent for this
parameterization (because any $n$-vertex graph has vertex cover of size at most $k$ if and only if it has an independent set of size at least $n-k$), we provide the result for \IS/$c$-\tdmod.
More specifically, in Subsection~\ref{subsec:annot} we provide a polynomial kernel for a-$c$-\tdmod-\IS, an annotated version of our problem defined below,
and in Subsection~\ref{subsec:normal} we derive a polynomial kernel for \IS/$c$-\tdmod.


\subsection{A polynomial kernel for a-$c$-\tdmod-\IS/$(|X|+|\H|)$}\label{subsec:annot}\label{subsec:kernel_annotated}

We will find a polynomial kernel for the following annotated version
of \IS on hypergraphs. Working with hypergraphs is useful because we will use a reduction rule identifying a subset $X'$ of the modulator that cannot be entirely contained in a solution; this will be modeled by adding a hyperedge on the vertex set $X'$.



\vspace{.4cm}
\begin{boxedminipage}{.99\textwidth}
\textsc{Annotated $c$-treedepth modulator Independent Set} (a-$c$-\tdmod-\IS)\vspace{.1cm}

\begin{tabular}{ r l }
\textbf{~~~~Instance:} & $(G,X,k)$ where \\
 & ~\textbullet\ $G=(V,E,\H)$ is a hypergraph structured as follows: $V = X \uplus R$,\\
 & ~~ $E = E_{X,R} \uplus E_{R,R}$ is a set of edges where edges in $E_{A,B}$ have one \\
 & ~~  endpoint in $A$ and the other in $B$, and $\H \le 2^{X}$ is a set of \\
 & ~~  hyperedges where each $H \in \H$ is entirely contained in $X$.\\
 & ~\textbullet\ $X$ is a $c$-treedepth modulator (as $G[V\setminus X]$ is not a hypergraph, \\
 & ~~ its treedepth is correctly defined and we have $\td(V\setminus X) \le c$).\\
 & ~\textbullet\ $k$ is a positive integer.\vspace{.1cm}\\

\textbf{Question:} & Decide whether $\alpha(G) \ge k$ (an independent set in a hypergraph is a \\
 & subset of vertices that does not contain any hyperedge, corresponding  \\
 & here to a subset $S \subseteq V$ such that for every $h \in E \cup H$, $h \nsubseteq S$).\\
\end{tabular}
\end{boxedminipage}
\vspace{.4cm}

Throughout this subsection $I=(G,X,k)$ denotes the input of a-$c$-\tdmod-\IS with $G=(V,E,\H)$ and $V = X \uplus R$.
Note that $G[X]$ is a hypergraph and that $G[R]$ is a graph, and that the parameter we consider here is $|X|+|\H|$.
For any $X' \subseteq X$ and $R' \subseteq R$, observe that the notation $N_{R'}(X')$ is not ambiguous and denotes $\{v \in R' \mid \exists x \in X' \mbox{ with } \{x,v\} \in E \}$.

We use the following definition that was introduced in~\cite{jansen2011vertex} for \VC/$1$-\twmod.
\begin{definition}[\!\!\cite{jansen2011vertex}]
Given $X' \subseteq X$ and $R' \subseteq R$, let $\conf_{R'}(X') =
\alpha(R')-\alpha(R'\setminus N_{R'}(X'))$ be the \emph{conflicts} induced by $X'$
on $R'$.
\end{definition}

Intuitively, $\conf_{R'}(X')$ measures the loss in the size of a maximum independent set of
$R'$ due to $X'$. We extend the previous definition in the following way: for any $R' \subseteq R$ and any $Y' \subseteq R'$,
let $\conf_{R'}(Y') = \alpha(R')-\alpha(R'\setminus Y')$.
 Note that $\conf_{R'}(X')$ describes the impact of having $X'$ in the independent set, while $\conf_{R'}(Y')$ describes the impact of forbidding $Y'$ in the independent set.
 We can see that $\conf_{R'}(Y')=0$ is equivalent to the existence of an independent set $S^* \subseteq R'$ such that
$|S^*|=\alpha(R')$ and $S^* \cap Y' = \emptyset$.

\begin{lemma}\label{lemma:bsup_2c}
Let $R' \subseteq R$ be a connected component of $R$. 
If $\conf_{R'}(Y')>0$, there exists $\bar{Y'} \subseteq Y'$ such that $\conf_{R'}(\bar{Y'})>0$ and $|\bar{Y'}| \le f(c)$ with $f(c)=2^c$.
\end{lemma}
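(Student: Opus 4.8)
The plan is to rephrase the statement in terms of maximum independent sets and then run an induction on $c$ that peels off the root of a treedepth decomposition of $R'$. The starting point is the observation, already recorded in the text, that $\conf_{R'}(Z) = \alpha(R') - \alpha(R'\setminus Z) > 0$ holds if and only if every maximum independent set (MIS) of $R'$ meets $Z$. So it suffices to produce $\bar{Y'}\subseteq Y'$ with $|\bar{Y'}|\le 2^c$ that is met by every MIS of $R'$. I will build $\bar{Y'}$ as a union of at most two recursively obtained sets; this is sound because $\conf_{R'}$ is monotone under inclusion, so $\conf_{R'}(\bar{Y'})>0$ as soon as $\bar{Y'}$ is met by all MISes.

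\textbf{The induction.} The base case is $R'$ equal to a single vertex $v$ (this is forced when $c=1$, since a connected graph has treedepth $1$ exactly when it is $K_1$): then $\conf_{R'}(Y')>0$ forces $Y'=\{v\}$ and we take $\bar{Y'}=Y'$. Otherwise $R'$ has at least two vertices. Fix a treedepth decomposition of $R'$, which is a single rooted tree $T$ of height at most $c$ because $R'$ is connected, and let $v$ be its root. Deleting $v$ from $R'$ yields connected components $C_1,\dots,C_m$, each with $\td(C_i)\le c-1$ (each $C_i$ lives inside a subtree of $T$ hanging off $v$); deleting both $v$ and $N_{R'}(v)$ yields connected components $D_1,\dots,D_\ell$, again each of treedepth at most $c-1$. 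Partition the MISes of $R'$ into those avoiding $v$ (type~II) and those containing $v$ (type~I). A short counting argument shows that if a type-II MIS $S$ exists then $\sum_i\alpha(C_i)=\alpha(R')$, so $S\cap C_i$ is an MIS of $C_i$ for every $i$; likewise if a type-I MIS $S$ exists then $\sum_j\alpha(D_j)=\alpha(R')-1$, so $S\cap D_j$ is an MIS of $D_j$ for every $j$.

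\textbf{Assembling $\bar{Y'}$.} I construct $\bar{Y'}=\bar{Y'}_1\cup\bar{Y'}_2$. If $R'$ has no type-II MIS I set $\bar{Y'}_1=\emptyset$; otherwise, if $\conf_{C_i}(Y'\cap C_i)=0$ for every $i$ then choosing in each $C_i$ an MIS disjoint from $Y'$ and taking their union produces an MIS of $R'$ disjoint from $Y'$, contradicting $\conf_{R'}(Y')>0$; hence $\conf_{C_{i_0}}(Y'\cap C_{i_0})>0$ for some $i_0$, and the induction hypothesis applied to the connected graph $C_{i_0}$ (of treedepth at most $c-1$) gives $\bar{Y'}_1\subseteq Y'\cap C_{i_0}$ of size at most $2^{c-1}$ met by every MIS of $C_{i_0}$, hence by every type-II MIS of $R'$. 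Symmetrically, if $R'$ has no type-I MIS I set $\bar{Y'}_2=\emptyset$; if $v\in Y'$ I set $\bar{Y'}_2=\{v\}$, which is met by every type-I MIS; and if $v\notin Y'$ then the same gluing argument (now attaching $v$ to an MIS of each $D_j$ disjoint from $Y'$) forces $\conf_{D_{j_0}}(Y'\cap D_{j_0})>0$ for some $j_0$, and the induction hypothesis on $D_{j_0}$ yields $\bar{Y'}_2\subseteq Y'\cap D_{j_0}$ of size at most $2^{c-1}$ met by every type-I MIS. Since every MIS of $R'$ is of type~I or type~II, $\bar{Y'}$ is met by all of them, so $\conf_{R'}(\bar{Y'})>0$, while $|\bar{Y'}|\le 2^{c-1}+2^{c-1}=2^{c}$ (a slightly more careful bookkeeping even gives $2^{c-1}$), which closes the induction.

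\textbf{Main obstacle.} I expect the delicate point to be the case in which $R'$ has MISes of both types simultaneously: there one genuinely has to combine two recursively produced sets, which is where the doubling (and hence the $2^c$ bound) comes from, and one must check that the restriction of an arbitrary MIS of $R'$ to the chosen component $C_{i_0}$, respectively $D_{j_0}$, is again a maximum independent set of that component, so that the recursively built hitting set still does its job. By contrast, the facts that $\td(C_i),\td(D_j)\le c-1$ and the counting identities $\sum_i\alpha(C_i)=\alpha(R')$ and $\sum_j\alpha(D_j)=\alpha(R')-1$ are routine.
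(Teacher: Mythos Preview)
Your proof is correct and follows essentially the same inductive strategy as the paper: peel off the root of a treedepth decomposition, and combine at most two recursively obtained hitting sets of size $\le 2^{c-1}$ each, one for the situation where the root is in the maximum independent set and one for the situation where it is not. Your organization is in fact cleaner than the paper's: by directly partitioning the maximum independent sets of $R'$ into type~I (containing $v$) and type~II (avoiding $v$) and building $\bar{Y'}_1,\bar{Y'}_2$ separately, you avoid the paper's long chain of auxiliary properties $\mathbf{p_1},\mathbf{p_2},\mathbf{p'_2},\mathbf{p_3},\mathbf{p'_3}$ and the nested Case~2a/2b split; your use of the components $D_j$ of $R'\setminus(\{v\}\cup N(v))$ in place of the paper's sets $A_i\setminus N_{A_i}(r)$ is a harmless variation. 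Your parenthetical remark that $2^{c-1}$ actually suffices is also correct (the base case $c=1$ gives size $1=2^{0}$ and the doubling recursion preserves this), a small sharpening over the paper's stated bound.
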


\begin{proof}
As it holds that $\td(R')\le c$, let us consider a treedepth decomposition of $R'$ with root $r$ and $t \ge 1$ subtrees rooted at the children of $t$, where $A_i$, $i \in [t]$ is the vertex set of subtree $i$.
We can partition $Y' = \bigcup_{i \in [t+1]}Y'_i$ with $Y'_i \subseteq A_i$ for $i \in [t]$, $Y'_{t+1} \subseteq \{r\}$, where the $Y'_i$'s are possibly empty.
We will prove the lemma by induction on $c$.
Observe that $\sum_{i \in [t]} \alpha(A_i) \le \alpha(R') \le 1+\sum_{i \in [t]} \alpha(A_i)$, and thus we distinguish two cases according to the value of $\alpha(R')$.

\begin{figure}[!htbp]
\begin{center}
\includegraphics[width=1.015\textwidth]{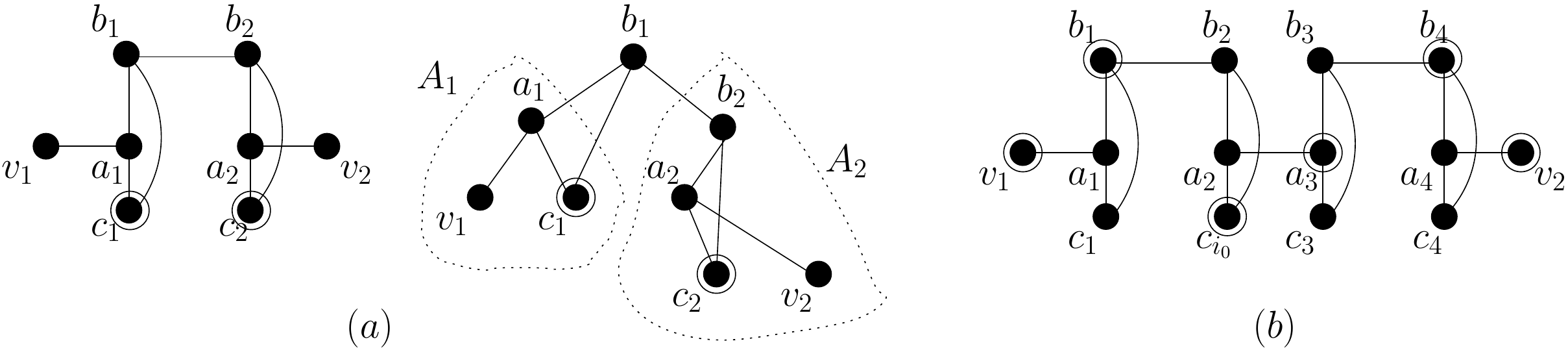}
\end{center}
\caption{(\emph{a}) Example of a graph $G[R']$ (left) with an associated treedepth decomposition (right) as used in Lemma~\ref{lemma:bsup_2c}, with $Y'=\{c_1,c_2\}$.
This case corresponds to one of the subcases of Case 2, as $\alpha(R')=\alpha(A_1)+\alpha(A_2)=4$, $\conf_{A_1}(Y'_1)>0$, $\conf_{A_2}(Y'_2)=0$. Moreover, $\mathbf{p_2}$ and $\mathbf{p'_2}$ are true, while $\mathbf{p_3}$ is false (but $\mathbf{p'_3}$ is true). (\emph{b}) Example for $t=2$ of the construction of Lemma~\ref{lemma:binf_2c}, where the circled vertices belong to $S$.}
\label{fig:bsup_2c}
\end{figure}

\medskip
\noindent{\bf Case 1:} $\alpha(R')=1+\sum_{i \in [t]} \alpha(A_i)$.
In this case every maximum independent set $S^*$ of $R'$ contains $r$. Hence for every $i \in [t]$,
$S^* \cap A_i$ is a maximum independent set in $A_i \setminus N_{A_i}(r)$, and thus $\alpha(A_i \setminus N_{A_i}(r)) = \alpha(A_i)$. Indeed, if we had $\alpha(A_i \setminus N_{A_i}(r)) < \alpha(A_i)$ for some $i$, then $|S^*|$ would be strictly smaller than $1+\sum_{i \in [t]} \alpha(A_i)$.

If $r \in Y'$ (\ie if $Y'_{t+1} \neq \emptyset$) then we can take $\bar{Y'}=\{r\}$ (as every optimal solution of $R'$ must contain $r$ we get
$\alpha(R' \setminus \{r\}) < \alpha(R')$, and $|\bar{Y'}|=1 \le 2^c$), and thus we suppose henceforth that $Y'_{t+1} = \emptyset$.

We claim that there exists $ i_0 \in [t]$ such that $\conf_{A_{i_0} \setminus N_{A_{i_0}}(r)}(Y'_{i_0}) > 0$. Indeed, otherwise we could define for every $i \in [t]$ an independent set $S_i \subseteq A_i \setminus N_{A_i}(r)$
with $|S_i|=\alpha(A_i \setminus N_{A_i}(r)) = \alpha(A_i)$ and $S_i \cap Y'_i = \emptyset$. Thus, $S^*=\{r\} \cup_{i \in [t]} S_i$ would be an independent set of size $\alpha(R')$, and as
$Y'_{t+1} = \emptyset$ we would have $S^* \cap Y' = \emptyset$, a contradiction to the hypothesis that $\conf_{R'}(Y')>0$.
Thus, there exists $ i_0 \in [t]$ such that $\conf_{A_{i_0} \setminus N_{A_{i_0}}(r)}(Y'_{i_0}) > 0$, and as $\td(A_{i_0} \setminus N_{A_{i_0}}(r)) < c$, by induction hypothesis there exists $\bar{Y'_{i_0}} \subseteq Y'_{i_0}$ such that $\conf_{A_{i_0} \setminus N_{A_{i_0}}(r)}(\bar{Y'_{i_0}}) > 0$ and $|\bar{Y'_{i_0}}| \le 2^{c-1}$. Let us verify that $\bar{Y'} = \bar{Y'_{i_0}}$ satisfies $\conf_{R'}(\bar{Y'})>0$. Let $S^*$ be an independent set of $R'$ with $S^* \cap \bar{Y'} = \emptyset$.
If $r \notin S^*$ then clearly $|S^*| < \alpha(R')$. Otherwise, $|S^*| = (\sum_{i \in [t]} |S^* \cap (A_i \setminus N_{A_i}(r))|)+1 \le \alpha(A_{i_0} \setminus N_{A_{i_0}}(r))-1 +
(\sum_{i \in [t], i \neq i_0} \alpha(A_i \setminus N_{A_i}(r)))+1 < \alpha(R')$.

\medskip
\noindent{\bf Case 2:} $\alpha(R')=\sum_{i \in [t]} \alpha(A_i)$. In this case there exists $ i_0 \in [t]$ such that $\conf_{A_{i_0}}(Y'_{i_0})>0$. Indeed, otherwise we could define for every $i \in [t]$ an independent set
$S_i \subseteq A_i$ with $|S_i| = \alpha(A_i)$ and
$S_i \cap Y'_i = \emptyset$, and the existence of $S^* = \cup_{i \in [t]}S_i$ would be a contradiction to the hypothesis that $\conf_{R'}(Y')>0$.
Thus, by the induction hypothesis there exists $\bar{Y'_{i_0}} \subseteq Y'_{i_0}$ such that $\conf_{A_{i_0}}(\bar{Y'_{i_0}}) > 0$ and $|\bar{Y'_{i_0}}| \le 2^{c-1}$.

If $r \in Y'$ (\ie if $Y'_{t+1} \neq \emptyset$) then we can take $\bar{Y'}=\bar{Y'_{i_0}} \cup \{r\}$.
Let us verify that $\conf_{R'}(\bar{Y'})>0$. Let $S^*$ be an independent set of $R'$ with $S^* \cap \bar{Y'} = \emptyset$.
As $S^*$ cannot contain $r$ we have $|S^*| = \sum_{i \in [t]} |S^* \cap A_i| < \alpha(A_{i_0}) + \sum_{i \in [t], i \neq i_0} |S^* \cap A_i| = \alpha(R')$.
Thus, we suppose from now on
\begin{equation*}
\text{Property } \mathbf{p_1}: Y'_{t+1} = \emptyset.
\end{equation*}
Note that in this case (when $\mathbf{p_1}$ is true) we cannot simply set $\bar{Y'} = \bar{Y'_{i_0}}$, as shown in the example depicted in Fig.~\ref{fig:bsup_2c}. Indeed, in this example we would have $\bar{Y'}=\bar{Y'_{i_0}}=\{c_1\}$,
however $\conf_{R'}(\{c_1\})=0$ as $S^*=\{b_1,v_1,c_2,v_2\}$ verifies $|S^*|=\alpha(R')$ and $S^* \cap \{c_1\} = \emptyset$.

\medskip
\noindent{\em Properties related to $\alpha$.} Let us prove that we can always assume the following
\begin{equation*}
\text{Property}\ \mathbf{p_2}: \text{for every}\  i \neq i_0, \alpha(A_i \setminus N_{A_i}(r)) = \alpha(A_{i}).
\end{equation*}
Indeed, if $\mathbf{p_2}$ is not true, then there exists $i_1 \neq i_0$, $i_1 \in [t]$ such that $\alpha(A_{i_1} \setminus N_{A_{i_1}}(r)) < \alpha(A_{i_1})$, and we set $\bar{Y'}=\bar{Y'_{i_0}}$.
Let $S^*$ be an independent set of $R'$ with $S^* \cap \bar{Y'} = \emptyset$. If $r \notin S^*$ then as previously $|S^*| < \alpha(R')$, otherwise we get
$|S^*| \le \alpha(A_{i_0})-1+\alpha(A_{i_1})-1+ (\sum_{i \in [t],i \neq i_0, i \neq i_1} \alpha(A_i))+1< \alpha(R')$.
Thus, we now assume $\mathbf{p_2}$. Let us now prove the following
\begin{equation*}
\text{Property}\ \mathbf{p'_2}: \alpha(A_{i_0} \cup \{r\}) = \alpha(A_{i_0}).
\end{equation*}
By contradiction, suppose that there exists an independent set $S_1^*$ of $A_{i_0} \cup \{r\}$ containing $r$ such that $|S_1^*| = \alpha(A_{i_0})+1$. According to $\mathbf{p_2}$, for every $i \neq i_0$ there exists an independent set $S_i$ of $A_i \setminus N_{A_i}(r)$ of size $\alpha(A_i)$, and thus $\alpha(R') > \sum_{i \in [t]} \alpha(A_i)$, a contradiction.
Thus, we now assume $\mathbf{p'_2}$.

\medskip
\noindent{\em Properties related to $\conf_{A_i}(Y'_i)$.} Let us prove than we can assume the following
\begin{equation*}
\text{Property}\ \mathbf{p_3}: \text{for every}\  i \neq i_0,  \conf_{A_i \setminus N_{A_i}(r)}(Y'_i)=0.
\end{equation*}
Indeed, if $\mathbf{p_3}$ is not true we can get the desired result as follows. Let $i_1 \neq i_0$, $i_1 \in [t]$ such that $\conf_{A_{i_1} \setminus N_{A_{i_1}}(r)}(Y'_{i_1})>0$. We use the same arguments as in the previous paragraph and define $\bar{Y'}=  \bar{Y'_{i_0}} \cup  \bar{Y'_{i_1}}$. Note that $|\bar{Y'}| \le |\bar{Y'_{i_0}}|+|\bar{Y'_{i_1}}| \le 2^c$. Using the same notation, if $r \notin S^*$ then $|S^*| = (\sum_{i \in [t]} |S^* \cap A_i|) \le \alpha(A_{i_0})-1 + (\sum_{i \in [t], i \neq i_0} \alpha(A_i))< \alpha(R')$, and otherwise $|S^*| = (\sum_{i \in [t]} |S^* \cap (A_i \setminus N_{A_i}(r))|)+1 \le \alpha(A_{i_0})-1 + \alpha(A_{i_1})-1 + (\sum_{i \in [t], i \neq i_0, i \neq i_1} \alpha(A_i))+1 < \alpha(R')$.
Thus, we now assume $\mathbf{p_3}$. Note that $\mathbf{p_2}$ and $\mathbf{p_3}$ imply
\begin{equation*}
\text{Property}\ \mathbf{p'_3}: \text{for every}\   i \neq i_0,  \conf_{A_i}(Y'_i)=0.
\end{equation*}



\noindent \textbf{Case 2a}:  There does not exist a maximum independent set $S^*$  of $R'$ such that $r \in S^*$.
In this case, we set $\bar{Y'} = \bar{Y'_{i_0}}$. Let us prove that $\conf_{R'}(\bar{Y'})>0$. Let $S^*$ be a maximum independent set of $R'$ with $S^* \cap \bar{Y'} = \emptyset$.
As $r \notin S^*$, we get  $|S^*| = \sum_{i \in [t]} |S^* \cap A_i| \le \alpha(A_{i_0})-1+  \sum_{i \in [t], i \neq i_0} \alpha(A_i) < \alpha(R')$.

\vspace{.2cm}
\noindent \textbf{Case 2b}: There exists a maximum independent set $S^*$ of $R'$ such that $r \in S^*$.
This implies that $\alpha(A_{i_0} \setminus N_{A_{i_0}}(r))=\alpha(A_{i_0})-1$.
Let us prove that $\conf_{A_{i_0} \setminus N_{A_{i_0}}(r)}(Y'_{i_0})>0$. If it was not the case, there would exist an independent set $S^*_{i_0}$ of $A_{i_0} \setminus N_{A_{i_0}}(r)$ of size
 $\alpha(A_{i_0} \setminus N_{A_{i_0}}(r))=\alpha(A_{i_0})-1$ such that $S^*_{i_0} \cap Y'_{i_0} = \emptyset$. By $\mathbf{p_3}$, there would exist, for every $i \neq i_0$, an independent set $S^*_i$ of $A_i \setminus N_{A_i}(r)$ of size $\alpha(A_i \setminus N_{A_i}(r))=\alpha(A_i)$ (by $\mathbf{p_2}$) such that $S^*_i \cap Y'_i = \emptyset$. Thus, $S^*=\{r\} \cup (\bigcup_{i \in [t]}S^*_i)$ would be an independent set of size $\alpha(R')$ such that $S^* \cap Y' = \emptyset$ (recall that by $\mathbf{p_1}$, $r \notin Y'$), a contradiction.
Thus, we know that both $\conf_{A_{i_0} \setminus N_{A_{i_0}}(r)}(Y'_{i_0})>0$ and $\conf_{A_{i_0}}(Y'_{i_0})>0$ (which was established at the beginning of Case 2).
Using twice the induction hypothesis  we get that there exists $\bar{Y'_{i_0}}^1 \subseteq Y'_{i_0}$ such that
$\conf_{A_{i_0} \setminus N_{A_{i_0}}(r)}(\bar{Y'_{i_0}}^1)>0$ and there exists $\bar{Y'_{i_0}}^2 \subseteq Y'_{i_0}$ such that
$\conf_{A_{i_0}}(\bar{Y'_{i_0}}^2)>0$, with both $|\bar{Y'_{i_0}}^1|$ and $|\bar{Y'_{i_0}}^2|$ bounded by $2^{c-1}$.
Thus, we set $\bar{Y'}=\bar{Y'_{i_0}}^1 \cup \bar{Y'_{i_0}}^2$. Let us verify that $\conf_{R'}(\bar{Y'})>0$. Let $S^*$ be an independent set of $R'$ with $S^* \cap \bar{Y'} = \emptyset$.
If $r \in S^*$, then $|S^*| = \sum_{i \in [t]} |S^* \cap (A_i \setminus N_{A_i}(r))| + 1 =  \alpha(A_{i_0} \setminus N_{A_{i_0}}(r))-1 + \sum_{i \in [t],i \neq i_0}\alpha(A_i) + 1= \alpha(A_{i_0})-2 + \sum_{i \in [t],i \neq i_0}\alpha(A_i) + 1 < \alpha(R')$.
Otherwise, $|S^*| = \sum_{i \in [t]} |S^* \cap A_i| =  \alpha(A_{i_0})-1 + \sum_{i \in [t],i \neq i_0}\alpha(A_i)  < \alpha(R')$. \end{proof}

A first lower bound on the function $f$ of Lemma~\ref{lemma:bsup_2c} can be obtained by considering a clique $R'$ on $c$ vertices (hence, with $\td(R')=c$) and
$Y'=R'$, as every $\bar{Y'} \subsetneq Y'$ satisfies $\conf_{R'}(\bar{Y'})=0$.
However, as shown in Lemma~\ref{lemma:binf_2c} below, we can even obtain an exponential lower bound, showing that the function $f(c) = 2^c$ of Lemma~\ref{lemma:bsup_2c} is almost tight.


\begin{lemma}\label{lemma:binf_2c}
There exists a constant $\lambda$ such that for every $c\ge \lambda$ there exists a graph $G=(R,E)$ and $Y \subseteq R$ such that
$\td(G)=c$,
$|Y| \ge 2^{c-3}$,
 $\conf_{R}(Y)>0$, and
for every $\bar{Y} \subsetneq Y$, $\conf_{R}(\bar{Y})=0$.
\end{lemma}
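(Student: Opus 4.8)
The plan is to construct, by induction on $c$, a graph $G_c$ of treedepth exactly $c$ together with a distinguished set $Y_c\subseteq V(G_c)$ that is a \emph{minimal blocker} --- that is, $\conf_{G_c}(Y_c)>0$ while $\conf_{G_c}(Y')=0$ for every $Y'\subsetneq Y_c$ --- whose size essentially doubles at each step, so that $|Y_c|\ge 2^{c-3}$. The idea is to realise, level by level, the worst case of the proof of Lemma~\ref{lemma:bsup_2c}: at the top of a treedepth decomposition of $G_c$ with a root $r$, the set $Y_c$ should force the situation of Case~2b, where one must combine a minimal blocker ``for $A_{i_0}$'' with a minimal blocker ``for $A_{i_0}$ minus the neighbourhood of $r$'', these two being genuinely disjoint and both unavoidable. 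The base case handles a fixed small value $\lambda$ by an explicit construction of a graph of treedepth $\lambda$ with a minimal blocker of size $\ge 2^{\lambda-3}$ (and satisfying the strengthened invariant below); this is the source of the constant $\lambda$.

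For the inductive step I would carry along a strengthened invariant: besides $G_c$ and $Y_c$, keep a set $N_c\subseteq V(G_c)$ with $\alpha(G_c-N_c)=\alpha(G_c)-1$ and a second minimal blocker $Z_c$ of $G_c-N_c$, subject to $Y_c\cap N_c=\emptyset$, $Y_c\cap Z_c=\emptyset$, $|Z_c|\ge 2^{c-3}$, together with a \emph{mutual minimality} clause: for every $v\in Y_c$ there is a maximum independent set of $G_c$ meeting $Y_c$ exactly in $\{v\}$ and disjoint from $Z_c$, and symmetrically for every $v\in Z_c$ with respect to $G_c-N_c$ and $Y_c$. Then $G_{c+1}$ is obtained by adding a new vertex $r$ adjacent exactly to $N_c$ on top of one or two suitably glued copies of $G_c$; placing $r$ above the decomposition of $G_c$ gives a decomposition of height $c+1$, and one checks that $\alpha(G_{c+1})=\alpha(G_c)$ with the maximum independent sets of $G_{c+1}$ being exactly those of $G_c$ together with the sets $\{r\}\cup M$ for $M$ a maximum independent set of $G_c-N_c$. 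Consequently $Y_{c+1}:=Y_c\cup Z_c$ (disjoint, and avoiding $r$) is a blocker of $G_{c+1}$ of size $\ge 2\cdot 2^{c-3}=2^{c-2}$, and its minimality reduces precisely to the mutual-minimality clause: deleting $v\in Z_c$ still leaves $Y_c$ blocking all maximum independent sets of $G_c$, so one must produce a maximum independent set of $G_c-N_c$ that avoids $Y_c$ and meets $Z_c$ only in $v$ and then extend it by $r$; the case $v\in Y_c$ is symmetric and uses the maximum independent sets of $G_c$ directly.

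The step I expect to be the main obstacle is re-establishing the full strengthened invariant for $G_{c+1}$: one must exhibit a new deletion set $N_{c+1}$ with $\alpha(G_{c+1}-N_{c+1})=\alpha(G_{c+1})-1$, a second minimal blocker $Z_{c+1}$ that is disjoint from $Y_{c+1}$ and itself doubled ($|Z_{c+1}|\ge 2^{c-2}$), and the corresponding mutual-minimality clauses --- this is exactly why a single copy of $G_c$ does not suffice and at least two copies must be glued together, so that $Z_{c+1}$ can be housed essentially disjointly from $Y_{c+1}$ while retaining enough structure to admit a blocker of doubled size. A secondary point is the treedepth \emph{lower bound} $\td(G_{c+1})\ge c+1$: since treedepth is subgraph-monotone and $G_{c+1}\supseteq G_c$ we get $\ge c$, and one argues by a direct structural analysis that no height-$c$ decomposition of $G_{c+1}$ can exist (the construction is arranged so that the new vertex $r$ together with an identified subgraph forces the extra level). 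The remaining verifications --- that the independence numbers behave as claimed (the analogues of properties $\mathbf{p_1},\mathbf{p_2},\mathbf{p_3}$ from Lemma~\ref{lemma:bsup_2c}) and that all the disjointness requirements are preserved through the recursion --- are routine but somewhat lengthy case checks.
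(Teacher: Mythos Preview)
Your inductive scheme is a natural attempt to realise, level by level, the worst case of the recursion in Lemma~\ref{lemma:bsup_2c}, but the paper proceeds entirely differently and much more simply: it gives a single explicit construction exploiting the fact that paths have logarithmic treedepth. Concretely, take a path on $2t+2$ vertices (call the internal ones $a_1,b_1,\ldots,a_{2t},b_{2t}$, with endpoints $v_1,v_2$), and for each $i\in[2t]$ add one new vertex $c_i$ adjacent to $a_i$ and $b_i$ (forming a triangle). Set $Y=\{c_1,\ldots,c_{2t}\}$. Then $\alpha(R)=2t+2$, $\alpha(R\setminus Y)=2t+1$, and for any $c_{i_0}\notin\bar Y$ one writes down explicitly a maximum independent set avoiding $\bar Y$ by routing the path-choices so that $c_{i_0}$ is the unique $c$-vertex used. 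Since $\td(P_{2t+2})\le\lceil\log(2t+3)\rceil$ and each $c_i$ can be hung as a leaf below whichever of $a_i,b_i$ is lower in the decomposition, $\td(R)\le\log t+3$; hence $|Y|=2t\ge 2^{c-3}$ when $c=\td(R)$. No induction, no auxiliary invariants.

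By contrast, the obstacle you flag in your plan is real and unresolved: once you spend both $Y_c$ and $Z_c$ to form $Y_{c+1}$, you have no room left in a single copy of $G_c$ for a fresh $N_{c+1}$ and a disjoint doubled $Z_{c+1}$ with the required mutual-minimality; and if you pass to two disjoint copies, the $\alpha$-behaviour changes (a maximum independent set just takes a maximum in each copy, so $Y_c^{(1)}\cup Z_c^{(1)}$ is no longer minimal --- $Y_c^{(1)}$ alone already blocks). ``Suitably glued'' is doing all the work here and you have not said what the gluing is, nor why it keeps $\td$ increasing by exactly one while producing the needed disjoint second blocker. The paper's path trick sidesteps all of this: the exponential gap comes not from doubling a gadget at each treedepth level, but from the logarithmic compression of treedepth on paths.
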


\begin{proof}
Given an integer $t$, let $G=(R,E)$ with  $R = \bigcup_{i\in [2t]}\{a_i,b_i,c_i \} \cup \{v_1,v_2\}$ and
$E=\bigcup_{i \in [2t]}\{\{a_i,b_i\},\{c_i,a_i\},\{c_i,b_i\}\} \cup \bigcup_{i \in [t]}\{b_{2i-1},b_{2i}\} \cup \bigcup_{i \in [t-1]}\{a_{2i},a_{2i+1}\} \cup \{v_1,a_1\} \cup \{a_{2t},v_2\}$ ($G$ is a path on $2t+2$ vertices); see Fig.~\ref{fig:bsup_2c}(a) for an example for $t=1$.
We would like to point out that $R$ corresponds to the edge gadget of~\cite{cygan2014hardness}, except that we removed some edges (namely, $\{a_{2i-1},a_{2i}\}$) to lower its treedepth by a factor $2$.
Let $Y = C$.

We have $\alpha(R) = 2t+2$, and $\alpha(R \setminus Y) = \alpha(R)-1 < \alpha(R)$.
Let $\bar{Y} \subsetneq Y$, and let $i_0$ such that $c_{i_0} \notin \bar{Y}$. By symmetry, we can suppose that $i_0$ is even with $i_0=2i'_0$.
Let $S = \{v_1,v_2\} \cup \bigcup_{i \in [i'_0-1]}\{b_{2i-1},a_{2i}\} \cup \{b_{2i'_0-1},c_{2i'_0}\} \cup \bigcup_{i \in [i'_0+1,t]}\{a_{2i-1},b_{2i}\}$; see Fig.~\ref{fig:bsup_2c}(b) for an example for $t=2$. As $S$ is an independent set of size $\alpha(R)$
with $S \cap \bar{Y}=\emptyset$, we get that $\conf_{R}(\bar{Y})=0$.

Observe also that $\td(R) \le \log(t)+3$. Indeed, the treedepth of the initial path $P_{2t+2}$ on $2t+2$ vertices is at most $\lceil \log(2t+3) \rceil \le \log(t)+2$, for $t$ large enough.
Then, $\td(R) \le \td(P_{2t+2})+1$ as for every $i \in [2t]$, we can add to the treedepth decomposition of $P_{2t+2}$ a vertex $c_i$ as a new leaf attached to the lowermost vertex of $\{a_i,b_i\}$ in the decomposition.
\end{proof}

\begin{observation}
Lemma~\ref{lemma:bsup_2c} was proven in~\cite{jansen2011vertex} when $R'$ is a forest and $|\bar{Y'}| \le 2$. Even if we already know that \IS/$2$-\twmod does not admit a
polynomial kernel unless $\text{NP} \subseteq \text{coNP}/\text{poly}$~\cite{cygan2014hardness}, it remains interesting to observe that, in
particular, this lemma becomes false for $2$-\twmod, as the graph of Lemma~\ref{lemma:binf_2c} has treewidth $2$.
This points out one crucial difference between $c$-treewidth and $c$-treedepth modulators.
\end{observation}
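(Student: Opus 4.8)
The plan is to establish the single non-routine fact underlying this remark, namely that the graph $G=(R,E)$ constructed in the proof of Lemma~\ref{lemma:binf_2c} satisfies $\tw(G)=2$ \emph{independently} of the parameter $t$ (hence independently of $c$), and then to record why this one fact already refutes the ``$2$-\twmod'' analogue of Lemma~\ref{lemma:bsup_2c}. For the treewidth bound I would argue through the block decomposition of $G$: each $c_i$ is a degree-$2$ vertex, so deleting all the triangle edges $\{a_i,b_i\},\{a_i,c_i\},\{b_i,c_i\}$ leaves a forest (indeed, just a matching among the vertices $a_i$, $b_i$, $v_1$, $v_2$ together with the isolated vertices $c_i$); consequently every cycle of $G$ is one of the triangles $\{a_i,b_i,c_i\}$, so every block (maximal $2$-connected subgraph) of $G$ is a $K_2$ or a $K_3$. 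Since the treewidth of a graph is the maximum of the treewidths of its blocks, this gives $\tw(G)\le\max\{\tw(K_2),\tw(K_3)\}=2$, and $\tw(G)\ge 2$ because $G$ contains a triangle; equivalently, $G$ is $K_4$-minor-free (a ``cactus of triangles'', hence series--parallel). The key point is that this value does not depend on $t$.

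Then I would conclude as follows. Rephrasing Lemma~\ref{lemma:bsup_2c} with ``$\tw(R')\le c$'' in place of ``$\td(R')\le c$'' would assert the existence of a function $f$ such that, for every connected graph $R'$ with $\tw(R')\le 2$ and every $Y'\subseteq R'$ with $\conf_{R'}(Y')>0$, there is $\bar{Y'}\subseteq Y'$ with $\conf_{R'}(\bar{Y'})>0$ and $|\bar{Y'}|\le f(2)$. But Lemma~\ref{lemma:binf_2c} provides, for every $c\ge\lambda$, a connected graph $G$ with $\conf_{R}(Y)>0$ and $\conf_{R}(\bar{Y})=0$ for every $\bar{Y}\subsetneq Y$; by the previous paragraph $\tw(G)=2$, so the only subset of $Y$ witnessing the conflict is $Y$ itself, of size $|Y|\ge 2^{c-3}$. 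Picking $c$ with $2^{c-3}>f(2)$ contradicts the existence of $f$, hence the statement is false for $2$-\twmod, whereas by Lemma~\ref{lemma:bsup_2c} it holds for every $c$-\tdmod with $f(c)=2^c$. This is exactly the announced difference between the two notions of modulator, and it is consistent with the fact that \IS/$2$-\twmod admits no polynomial kernel unless $\text{NP} \subseteq \text{coNP}/\text{poly}$~\cite{cygan2014hardness}.

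I do not anticipate a real obstacle here: all the mathematical content is already packaged in Lemmas~\ref{lemma:bsup_2c} and~\ref{lemma:binf_2c}. The only mildly technical step is the verification that every non-triangle edge of $G$ is a bridge (equivalently, that $G$ has no $K_4$ minor), and this is immediate from the degree-$2$ vertices $c_i$ together with the path-like arrangement of the triangles.
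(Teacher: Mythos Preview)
Your proposal is correct. In the paper this statement is an unproved remark: the authors simply assert that the graph of Lemma~\ref{lemma:binf_2c} has treewidth~$2$ and leave the verification to the reader. Your block-decomposition argument (each $c_i$ has degree~$2$, the non-triangle edges form a matching, hence every block is a $K_2$ or a $K_3$) is exactly the natural way to certify $\tw(G)=2$ independently of~$t$, and your conclusion that no bound $f(2)$ can then work for the ``$\tw(R')\le 2$'' variant of Lemma~\ref{lemma:bsup_2c} is the intended point of the observation.
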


Let us now start the description of the kernel for a-$c$-\tdmod-\IS/$(|X|+|\H|)$.
Given an input $(G,X,k)$ of a-$c$-\tdmod-\IS, we define the following three rules.
Note that these rules and definitions (and the associated safeness proofs) correspond to Rules 1, 2, and 3 of~\cite{jansen2011vertex}, except that we now bound the sizes of the subsets by a function $f(c)$ instead of by $2$.

\begin{definition}
Given an input $(G,X,k)$ of a-$c$-\tdmod-\IS (with $\td(G[R]) \le c$ where $R = V \setminus X$), the \emph{chunks} of the input are defined by  $\X = \{X' \subseteq X \mid \mbox{ there is no } H \in \H \mbox{ such that } H \subseteq X', \mbox{ and } 0 < |X'| \le f(c)\}$, where $f(c)=2^c.$
\end{definition}
Intuitively, the chunks correspond to all possible small traces of an independent set of $G$ in $X$. We are now ready to define the first two rules.

\medskip
\noindent \textbf{Reduction Rule 1:} If there exists $u \in X$ such that $\conf_R(\{u\}) > |X|$, remove $u$ from the graph.

\medskip
\noindent \textbf{Reduction Rule 2:} If there exists $X' \in \X$ such that $\conf_R(X') > |X|$, add $X'$ to $\H$. 

\medskip

\begin{lemma}\label{lemma:rule12}
Rule 1 and Rule 2 are safe: if $I=(G,X,k)$ is the original input of a-$c$-\tdmod-\IS and $I^1=(G^1,X^1,k)$ is the input after the application of Rule 1 or Rule 2, then $I$ and $I^1$ are equivalent.
\end{lemma}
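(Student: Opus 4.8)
The plan is to handle the two rules separately, since each relies on a slightly different structural observation about conflicts. In both cases the key point is that deleting a vertex of $X$, or forbidding a small subset $X'\subseteq X$, from any candidate independent set can only ``hurt'' the solution by at most $|X|$ in terms of the $R$-side, whereas the hypothesis of the rule guarantees that the conflict it induces is strictly larger than $|X|$; hence no maximum-size independent set of $G$ can actually use $u$ (resp. all of $X'$), and so the reduction is safe.

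First I would prove safeness of Rule~1. Let $I=(G,X,k)$ and let $G^1=G-u$ with $X^1=X\setminus\{u\}$. One direction is trivial: any independent set of $G^1$ is an independent set of $G$, so $\alpha(G^1)\le\alpha(G)$ and if $I^1$ is a yes-instance then so is $I$. For the converse, suppose $\alpha(G)\ge k$ and take a maximum independent set $S$ of $G$. The claim is that $u\notin S$; given this, $S$ is an independent set of $G^1$ of size $\ge k$ and we are done. To see $u\notin S$, suppose $u\in S$. Write $S=S_X\uplus S_R$ with $S_X\subseteq X$, $S_R\subseteq R$. Since $S$ is independent, $S_R$ avoids $N_R(u)$, so $S_R\subseteq R\setminus N_R(\{u\})$ and hence $|S_R|\le\alpha(R\setminus N_R(\{u\}))=\alpha(R)-\conf_R(\{u\})<\alpha(R)-|X|$. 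On the other hand $|S_X|\le|X|$, so $|S|=|S_X|+|S_R|<|X|+\alpha(R)-|X|=\alpha(R)$. But the set consisting of $X$'s contribution replaced by nothing and an optimal independent set of $R$ (which is independent in $G$, since it lies entirely in $R$) has size $\alpha(R)$, contradicting maximality of $S$. Hence $u\notin S$, as required.

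Next I would prove safeness of Rule~2, which adds the hyperedge $X'$ to $\H$; so $G^1$ has the same vertices and (graph) edges as $G$ but one extra hyperedge, and $X^1=X$. Again one direction is immediate: an independent set of $G^1$ is also independent in $G$ (it satisfies strictly more constraints), so $\alpha(G^1)\le\alpha(G)$. For the converse, assume $\alpha(G)\ge k$ and take a maximum independent set $S$ of $G$; it suffices to show $X'\not\subseteq S$, for then $S$ is independent in $G^1$ as well. Suppose $X'\subseteq S$. Writing $S=S_X\uplus S_R$ as before, independence of $S$ forces $S_R\subseteq R\setminus N_R(X')$, so $|S_R|\le\alpha(R\setminus N_R(X'))=\alpha(R)-\conf_R(X')<\alpha(R)-|X|$, while $|S_X|\le|X|$; thus $|S|<\alpha(R)$, again contradicting that $S$ is maximum (since a maximum independent set of $G[R]$ already has size $\alpha(R)$ and is independent in $G$). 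Therefore $X'\not\subseteq S$. Note that adding $X'$ to $\H$ preserves the fact that $X$ is a $c$-treedepth modulator (no new graph edges, $\td(G[R])$ unchanged) and does not change $k$, so $I^1$ is a legitimate instance of a-$c$-\tdmod-\IS.

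The arguments are essentially routine counting once the right decomposition $S=S_X\uplus S_R$ is in place; the only thing to be careful about is the ``exchange'' step, i.e.\ checking that an optimal independent set of $G[R]$ is genuinely available as a competitor in $G$, which holds because $R$-only vertex sets are never constrained by edges in $E_{X,R}$ nor by any hyperedge of $\H$ (hyperedges live inside $X$). This is exactly the place where the annotation-by-hyperedges design pays off, and it is the one point I would state explicitly rather than leave implicit.
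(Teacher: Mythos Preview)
Your proof is correct and follows essentially the same approach as the paper: both arguments decompose $S=S_X\uplus S_R$ and use the bound $|S|\le |X|+(\alpha(R)-\conf_R(X'))<\alpha(R)$ together with the observation that a maximum independent set of $G[R]$ is itself independent in $G$ (since all hyperedges lie in $X$). The only cosmetic differences are that the paper treats Rule~1 as the special case $X'=\{u\}$ of Rule~2, and that it directly exhibits the replacement set $S^1\subseteq R$ rather than phrasing the step as a contradiction to maximality of $S$.
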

\begin{proof}
Let us only prove the safeness of Rule 2, as Rule 1 corresponds to Rule 2 using $X' = \{u\}$. Indeed, adding hyperedge $\{u\}$ is equivalent to removing $u$, as by definition no independent set could contain $u$ anymore.
Let us prove that $\alpha(G) \ge k$ implies that $\alpha(G^1) \ge k$.
Let $S$ be an independent set of $G$ of size at least $k$. If $X' \nsubseteq S$ then $S$ is also an independent set of $G^1$ of size $k$.
Otherwise, let $S^1$ be an independent set of $R$ of size $\alpha(R)$.
Observe that $k \le |S| = |S \cap X|+|S \cap R| < |X|+(\alpha(R)-|X|)  = |S^1|$ (the strict inequality holds as $\conf_R(X')>|X|$), and we get the desired result.
\end{proof}

\medskip
\noindent \textbf{Reduction Rule 3:} If $R$ contains a connected component $R'$ such that for every $X' \in \X$, $\conf_{R'}(X')=0$, delete $R'$ from the graph and decrease $k$ by $\alpha(R')$.
\medskip

To prove that Rule 3 is safe we need the following lemma.
Recall that we say that $X' \subseteq X$ is an independent set if and only if there is no $H \in \H$ such that $H \subseteq X'$.
\begin{lemma}\label{lemma:prerule3}
Let $I=(G,X,k)$ be an instance of a-$c$-\tdmod-\IS.
Let $R'$ be a connected component of $R$.
If there exists an independent set $X' \subseteq X$ such that $\conf_{R'}(X') > 0$, then there exists $\bar{X'} \in \X$ such that
$\conf_{R'}(\bar{X'}) > 0$.
\end{lemma}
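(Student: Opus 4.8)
The plan is to reduce the statement to Lemma~\ref{lemma:bsup_2c} via the observation that conflicts induced by a subset of the modulator depend only on the trace of its neighborhood in $R'$. First I would note that $\conf_{R'}(X') = \alpha(R') - \alpha(R'\setminus N_{R'}(X'))$ is literally $\conf_{R'}(Y')$ in the sense of the extended definition, where $Y' := N_{R'}(X') \subseteq R'$. So the hypothesis gives a subset $Y' \subseteq R'$ with $\conf_{R'}(Y') > 0$.

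Next, since $R'$ is a connected component of $R$ we have $\td(R') \le c$, so Lemma~\ref{lemma:bsup_2c} applies and yields $\bar{Y'} \subseteq Y'$ with $\conf_{R'}(\bar{Y'}) > 0$ and $|\bar{Y'}| \le f(c) = 2^c$. Note $\bar{Y'} \neq \emptyset$, since $\conf_{R'}(\emptyset) = 0$. Now I would lift $\bar{Y'}$ back to the modulator: for each $y \in \bar{Y'} \subseteq N_{R'}(X')$ pick a vertex $x(y) \in X'$ with $\{x(y),y\} \in E$, and set $\bar{X'} := \{x(y) : y \in \bar{Y'}\}$. Then $\bar{X'} \subseteq X'$ and $|\bar{X'}| \le |\bar{Y'}| \le 2^c$; moreover $\bar{X'} \neq \emptyset$ because $\bar{Y'} \neq \emptyset$. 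Since $X'$ is independent, no $H \in \H$ satisfies $H \subseteq X'$, hence none satisfies $H \subseteq \bar{X'}$ either; combined with $0 < |\bar{X'}| \le f(c)$ this gives $\bar{X'} \in \X$.

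Finally I would check $\conf_{R'}(\bar{X'}) > 0$. By construction each $y \in \bar{Y'}$ is adjacent to $x(y) \in \bar{X'}$, so $\bar{Y'} \subseteq N_{R'}(\bar{X'})$, whence $R' \setminus N_{R'}(\bar{X'}) \subseteq R' \setminus \bar{Y'}$ and therefore $\alpha(R'\setminus N_{R'}(\bar{X'})) \le \alpha(R'\setminus \bar{Y'})$. Consequently $\conf_{R'}(\bar{X'}) = \alpha(R') - \alpha(R'\setminus N_{R'}(\bar{X'})) \ge \alpha(R') - \alpha(R'\setminus \bar{Y'}) = \conf_{R'}(\bar{Y'}) > 0$, which concludes the proof.

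I do not expect a serious obstacle here: the only points requiring a little care are that $R'$ being a connected component is exactly what licenses the use of Lemma~\ref{lemma:bsup_2c}, and that the constructed set $\bar{X'}$ genuinely lies in $\X$ (nonempty, independent, size at most $f(c)$) — the latter following because it is a subset of the independent set $X'$ and has size bounded by that of $\bar{Y'}$.
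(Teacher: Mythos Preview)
Your proposal is correct and follows essentially the same approach as the paper's proof: set $Y' = N_{R'}(X')$, apply Lemma~\ref{lemma:bsup_2c} to obtain a small $\bar{Y'}$, and lift it back to $\bar{X'} \subseteq X'$ by picking one preimage in $X'$ for each vertex of $\bar{Y'}$. If anything, you are slightly more thorough than the paper in explicitly verifying $\bar{X'} \neq \emptyset$ and spelling out why $\conf_{R'}(\bar{X'}) \ge \conf_{R'}(\bar{Y'}) > 0$.
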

\begin{proof}
Let $Y' = N_{R'}(X')$. As $\conf_{R'}(X') > 0$, $\conf_{R'}(Y') > 0$. By Lemma~\ref{lemma:bsup_2c}, there exists $\bar{Y'} \subseteq Y'$ such that
$\conf_{R'}(\bar{Y'})>0$ and $|\bar{Y'}| \le f(c)$. For every $y' \in \bar{Y'}$, there exists a vertex $g(y') \in X'$ such that $\{g(y'),y'\} \in E$,
and thus we define $\bar{X'}=\cup_{y' \in \bar{Y'}}g(y')$. As $\bar{X'} \subseteq X'$, $\bar{X'}$ is still an independent set, and $|\bar{X'}| \le |\bar{Y'}| \le f(c)$, we get that
$\bar{X'} \in \X$.
\end{proof}

\begin{lemma}\label{lemma:rule3}
Rule 3 is safe: if $I=(G,X,k)$ is the original input of a-$c$-\tdmod-\IS and $I'=(G',X',k')$ is the input after the application of Rule 3, then $I$ and $I'$ are equivalent.
\end{lemma}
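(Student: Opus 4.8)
The goal is to show that deleting a connected component $R'$ of $R$ satisfying "$\conf_{R'}(X')=0$ for every chunk $X'\in\X$" and decreasing $k$ by $\alpha(R')$ yields an equivalent instance. First I would use Lemma~\ref{lemma:prerule3} to upgrade the hypothesis: if \emph{no} chunk creates a conflict on $R'$, then by the contrapositive of Lemma~\ref{lemma:prerule3} \emph{no} independent set $X'\subseteq X$ creates a conflict on $R'$ either, i.e. $\conf_{R'}(X')=0$ for every independent set $X'\subseteq X$. Concretely this means: for every independent set $S$ of $G$, writing $S_X=S\cap X$ (which is an independent set in the hypergraph sense) and $Y'=N_{R'}(S_X)$, we have $\conf_{R'}(Y')=0$, so there is a maximum independent set of $R'$ avoiding $N_{R'}(S_X)$ entirely. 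This is the crucial structural fact that lets the component $R'$ be ``solved independently'' of the rest.

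\textbf{The two directions.} For the forward direction, let $S$ be an independent set of $G$ with $|S|\ge k$. Let $S_X=S\cap X$, let $S_{R\setminus R'}=S\cap(R\setminus R')$, and replace the part of $S$ inside $R'$ by a maximum independent set $S^*$ of $R'$ that avoids $N_{R'}(S_X)$, which exists by the fact just established. The set $S'=S_X\cup S_{R\setminus R'}\cup S^*$ is independent in $G$: no hyperedge of $\H$ is contained in $S_X$; no $E_{R,R}$ edge inside $R\setminus R'$ or inside $R'$ is violated (the part outside $R'$ is unchanged, and $S^*$ is independent in $R'$); there is no edge between $R'$ and $R\setminus R'$ since $R'$ is a connected component of $R$; and no $E_{X,R}$ edge is violated because $S_X$'s neighbourhood in $R'$ is avoided by $S^*$ and the rest of $S'$ inside $R$ is unchanged from $S$. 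Since $|S^*|=\alpha(R')$, we get $|S'\setminus R'|=|S|-|S\cap R'|\ge |S|-\alpha(R')\ge k-\alpha(R')=k'$, so $S'\setminus R'$ witnesses $\alpha(G')\ge k'$ (here $G'=G-R'$, $X'=X$, $k'=k-\alpha(R')$). For the reverse direction, let $S'$ be an independent set of $G'$ with $|S'|\ge k'$, let $S'_X=S'\cap X$, and let $S^*$ be a maximum independent set of $R'$ avoiding $N_{R'}(S'_X)$ (again using the structural fact, applied to the independent set $S'_X\subseteq X$). Then $S=S'\cup S^*$ is independent in $G$ by the same case analysis as above, and $|S|=|S'|+\alpha(R')\ge k'+\alpha(R')=k$.

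\textbf{Main obstacle.} The only genuinely delicate point is making rigorous that ``$\conf_{R'}(Y')=0$ for $Y'=N_{R'}(S_X)$'' really does give a maximum independent set of $R'$ \emph{simultaneously} avoiding all of $N_{R'}(S_X)$ — but this is exactly the remark recorded right after the definition of $\conf$, namely that $\conf_{R'}(Y')=0$ is equivalent to the existence of $S^*\subseteq R'$ with $|S^*|=\alpha(R')$ and $S^*\cap Y'=\emptyset$. I would invoke that equivalence explicitly. A secondary (but routine) point is to double-check that when we decrease $k$ by $\alpha(R')$, the new $k'$ is still a positive integer or, if it can become non-positive, that the instance is then trivially a yes-instance of $\alpha(G')\ge k'$; in the latter case the equivalence is immediate since deleting $R'$ from a yes-instance with $|S\cap R'|\le\alpha(R')$ components still leaves $\ge k'$ vertices. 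Modulo these bookkeeping remarks, the proof is a straightforward exchange argument driven by the fact that $R'$ is a full connected component of $R$, so it interacts with the rest of the graph only through the (avoidable) neighbourhoods $N_{R'}(X)$.
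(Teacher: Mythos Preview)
Your proof is correct and follows the same approach as the paper: both use the contrapositive of Lemma~\ref{lemma:prerule3} to upgrade ``no chunk has a conflict'' to ``no independent subset of $X$ has a conflict'', and then glue a maximum independent set of $R'$ avoiding $N_{R'}(S'\cap X)$ onto a solution of $G'$ for the backward direction. The only (harmless) difference is that your forward direction takes a detour: you replace $S\cap R'$ by a maximum independent set $S^*$ of $R'$ and then immediately discard $S^*$ by taking $S'\setminus R'$; the paper simply observes that $S\setminus R'$ is already an independent set of $G'$ of size at least $k-\alpha(R')$, with no need to invoke Lemma~\ref{lemma:prerule3} in that direction.
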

\begin{proof}
$\alpha(G) \ge k \Rightarrow \alpha(G') \ge k'=k-\alpha(R')$ is straightforward, as if $S$ is an independent set of $G$ of size at least $k$ then $S \setminus R'$ is an independent
set of $G'$ of size at least $k-\alpha(R')$.

$\alpha(G) \ge k \Leftarrow \alpha(G') \ge k'=k-\alpha(R')$:
Let $S'$ be an independent set of $G'$ of size at least $k'$.
As Rule 3 applied, we know that for every $X_1 \subseteq \X$, $\conf_{R'}(X_1)=0$.
Using the contrapositive of Lemma~\ref{lemma:prerule3}, it follows that for every independent set $X_1 \subseteq X$, $\conf_{R'}(X_1) = 0$.
In particular we get that $X_S = S' \cap X$ verifies $\conf_{R'}(X_S)=0$. Thus, there exists an independent set $S_{R'}$ of $G[R']$ of size $\alpha(R')$ and such that $N_{R'}(X_S) \cap S_{R'} = \emptyset$,
and thus $S' \cup S_{R'}$ is an independent set of $G$ of size at least $k$.
\end{proof}

\begin{lemma}\label{lemma:s}
Let $I=(G,X,k)$ be an instance of a-$c$-\tdmod-\IS, and let $s$ be the number of connected components of $R=V \setminus X$.
If none of Rule 1, Rule 2 and Rule 3 can be applied, then $s = \O(|X|^{f(c)+2})$, where $f$ is the function of Lemma~\ref{lemma:bsup_2c}.
\end{lemma}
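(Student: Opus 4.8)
The plan is to combine the non-applicability of the three rules: Rule 3 guarantees that every connected component of $R$ is ``witnessed'' by some chunk, Rule 2 bounds how many components a single chunk can witness, and then it only remains to bound the number of chunks, which is polynomial in $|X|$ since a chunk has size at most $f(c)=2^c$.

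First I would record the consequence of Rule 3 being inapplicable: for every connected component $R'$ of $R$ there is at least one chunk $X'\in\X$ with $\conf_{R'}(X')>0$ (otherwise Rule 3 would delete $R'$). So I can fix, for each of the $s$ components of $R$, one such witnessing chunk, obtaining a map $\phi$ from the set of components of $R$ to $\X$. The goal is then to show that $\phi$ is ``few-to-one''.

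The key step is the additivity of $\conf$ over connected components: for every $X'\subseteq X$, $\conf_R(X')=\sum_{R'}\conf_{R'}(X')$, where the sum ranges over the connected components $R'$ of $R$. This is immediate because $\alpha$ is additive over connected components and $N_{R'}(X')=N_R(X')\cap R'$, so both $\alpha(R)$ and $\alpha(R\setminus N_R(X'))$ decompose as sums over the components. Now, since Rule 2 cannot be applied, $\conf_R(X')\le|X|$ for every $X'\in\X$; as each summand $\conf_{R'}(X')$ is a nonnegative integer, at most $|X|$ components $R'$ can have $\conf_{R'}(X')>0$. In particular $|\phi^{-1}(X')|\le|X|$ for every $X'\in\X$ (recall that Rule~1 is the special case $X'=\{u\}$ of Rule~2, so it needs no separate treatment here).

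Finally, $s=\sum_{X'\in\X}|\phi^{-1}(X')|\le|X|\cdot|\X|$, and since every chunk is a subset of $X$ of size between $1$ and $f(c)$, we have $|\X|\le\sum_{i=1}^{f(c)}\binom{|X|}{i}=\O(|X|^{f(c)})$, where $c$ (hence $f(c)$) is treated as a constant. Therefore $s=\O(|X|^{f(c)+1})=\O(|X|^{f(c)+2})$, as claimed. I do not expect a genuine obstacle here: the only points needing a little care are the additivity identity for $\conf$ and the fact that the witness map $\phi$ is everywhere defined, which is precisely what exhaustively applying Rule 3 guarantees.
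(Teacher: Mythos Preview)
Your proof is correct and follows essentially the same approach as the paper. The paper packages the argument as a double-counting of $\sigma=\sum_{X'\in\X}\conf_R(X')$, bounding it above by $|\X|\cdot|X|$ (Rules~1--2) and below by $s$ (Rule~3 and additivity), whereas you phrase the same idea via a witness map $\phi$ and a bound on its fibres; both rely on exactly the same three ingredients (additivity of $\conf$ over components, the $|X|$ bound per chunk, and the existence of a witnessing chunk for each component), and you even obtain the slightly sharper $\O(|X|^{f(c)+1})$ along the way.
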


\begin{proof}
First, as Rule 1 and Rule 2 cannot be applied, we have $\sigma=\sum_{X' \in \X} \conf_R(X') \le \sum_{i=1}^{f(c)} {|X| \choose i}|X|=\O(|X|^{f(c)+2})$.
On the other side, as Rule 3 cannot be applied, for every connected component $R' \subseteq R$ there exists $X' \in \X$ such that
$\conf_{R'}(X')>0$, and thus we have $\sigma \ge s$, implying the desired result.
\end{proof}


We are now ready to present  in Algorithm~\ref{algo:kernel} our polynomial kernel for a-$c$-\tdmod-\IS.









\SetKwInput{KwData}{Input}


\medskip
\begin{algorithm}[t]\label{algo:kernel}
\SetEndCharOfAlgoLine{}
 \KwData{$(I,c)$, where $I=(G,X,k)$ and $X$ is a $c$-treedepth modulator of $G$.} \vspace{.2cm}

 \eIf{$c=0$}{
 \textbf{return} $X$.
 }{
 Apply Rule 1 exhaustively.

 \tcc*{this rule suppresses vertices of $X$}

 Apply Rule 2 exhaustively.

 \tcc*{this rule adds hyperedges of size at most $f(c)$ to $\H$}

 Define the set $\X$ of chunks.


 Apply Rule 3 exhaustively.

 \tcc*{this rule suppresses some connected components}

  \tcc*{of $R$ and decreases $k$ accordingly}

 Let $I_3=(G_3,X_3,k_3)$ be the obtained instance, where $G_3=(V_3,E_3)$ and $R_3=V_3\setminus X_3$.

 \smallskip

 For every connected component $R' \subseteq R_3$, compute an optimal treedepth decomposition of $R'$ with root $r_{R'}$.

 \smallskip

Let $X_r = \cup_{R' \subseteq R_3, R' \mbox{ {\scriptsize connected}}} \{r_{R'}\}$ be the set of roots.

\smallskip

Let $I'=(G'=(V_3,E',\H'),X',k_3)$ be defined as follows:


$\ \ \ \ $ $X' = X_3 \cup X_r$,

$\ \ \ \ $ $Z = \{e \in E_3 \mid e \cap X_r \neq \emptyset \mbox{ and } e \cap X_3 \neq \emptyset\}$,

$\ \ \ \ $ $E' = E_3 \setminus Z$,

$\ \ \ \ $ $\H' = \H_3 \cup Z$.


\tcc*{$I'$ corresponds to $I_3$ where we added $X_r$ to the modulator,
removed}

\tcc*{edges $Z$ from $E_3$, and added them as hyperedges of $X'$}

\tcc*{Note that $X'$ is now a $(c-1)$-treedepth modulator}
\textbf{return} $A(I',c-1)$.
 }
 \caption{A polynomial kernel for a-$c$-\tdmod-\IS/$(|X|+|\H|)$.}
\end{algorithm}


\begin{theorem}\label{thm:kernel-ann}
For every fixed integer $c \ge 0$, Algorithm~\ref{algo:kernel} is a polynomial kernel for \emph{a}-$c$-\tdmod-\IS/$(|X|+|\H|)$. More precisely, for every input $I=(G,X,k)$ (with $G = (V,E,\H)$, $R=V\setminus X$) where $X$ is a $c$-treedepth modulator,
Algorithm~\ref{algo:kernel} produces an equivalent instance $\tilde{I}=(\tilde{G},\tilde{X},\tilde{k})$
(with $\tilde{G} = (\tilde{V},\tilde{E},\tilde{H})$, $\tilde{R}=\tilde{V}\setminus \tilde{X}$) where $|\tilde{X}| = \O(|X|^{2^{(c+1)(c+2)/2}})$, $|\tilde{\H}| = |\H| + \O(|X|^{2^{(c+1)(c+2)/2}})$, and $\tilde{R} = \emptyset$.
\end{theorem}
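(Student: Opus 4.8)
The plan is to argue by induction on $c$, following exactly the recursion of Algorithm~\ref{algo:kernel}. For the base case $c=0$ I would observe that a $0$-treedepth modulator $X$ of $G$ forces $\td(G[V\setminus X])\le 0$, hence $R=V\setminus X=\emptyset$; so $\tilde R=\emptyset$ already, the algorithm returns $X$ unchanged, and the claimed bounds hold since $2^{(0+1)(0+2)/2}=2$ and $|\tilde X|=|X|$, $|\tilde\H|=|\H|$. For the inductive step I would first appeal to the safeness Lemmas~\ref{lemma:rule12} and~\ref{lemma:rule3} to get that the instance $I_3$ produced after exhaustively applying Rules~1--3 is equivalent to $I$, and then analyse the ``root-lifting'' transformation $I_3\mapsto I'$ that the algorithm performs before recursing, finally applying the induction hypothesis to the call $A(I',c-1)$.

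Two facts about the inductive step would need to be checked. \emph{(i) The map $I_3\mapsto I'$ preserves equivalence and outputs a well-formed instance.} Here $G'$ has the same vertices and the same (hyper)edges as $G_3$, except that the set $Z$ --~which is exactly the set of edges of $G_3$ with one endpoint in the root set $X_r$ and one in $X_3$, i.e.\ precisely the edges that would otherwise lie \emph{inside} the enlarged modulator $X'=X_3\cup X_r$~-- is reclassified as a family of size-two hyperedges. Since a vertex set contains such a pair as a subset iff it contains it as an edge, $G'$ and $G_3$ have the same independent sets, so $\alpha(G')=\alpha(G_3)$ and $k$ is unchanged, which together with the safeness lemmas yields that $I$ and $I'$ are equivalent. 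To see that $I'$ is a syntactically valid instance of a-$(c-1)$-\tdmod-\IS\ I would use that $X_r$ is an independent set of $G_3$ --~two roots of distinct connected components of $R_3$ are non-adjacent, and each component contributes exactly one root~-- so that after deleting $Z$ no edge of $E'$ has both endpoints in $X'$, and $E'$ splits as $E'_{X',R'}\uplus E'_{R',R'}$ with $R'=R_3\setminus X_r$. \emph{(ii) $X'$ is a $(c-1)$-treedepth modulator.} This is because, for each connected component of $R_3$, an optimal treedepth decomposition has height at most $c$, and deleting its root leaves a rooted forest of height at most $c-1$ whose closure still contains the component minus its root; hence $\td(G'[R'])\le c-1$ and the induction hypothesis applies.

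It then remains to push the size bounds through the recursion, which I expect to be the most delicate part. Only Rule~1 changes $X$ and only by deletions, so $|X_3|\le|X|$; by Lemma~\ref{lemma:s} the number $s$ of connected components of $R_3$ is $\O(|X|^{f(c)+2})=\O(|X|^{2^c+2})$, so $|X'|=|X_3|+|X_r|=|X_3|+s=\O(|X|^{2^c+2})$. For hyperedges, Rule~2 can be applied at most once per chunk and hence adds $\O(|X|^{f(c)})$ of them, while $|Z|\le|X_3|\cdot|X_r|=\O(|X|^{2^c+3})$, so $|\H'|\le|\H|+\O(|X|^{2^c+3})$. Writing $|\tilde X|=\O(|X|^{g(c)})$, the recursion reads $g(0)=1$ and $g(c)\le(2^c+2)\,g(c-1)$, whence $g(c)\le\prod_{i=1}^c(2^i+2)\le\prod_{i=1}^c 2^{i+1}=2^{(c^2+3c)/2}\le 2^{(c+1)(c+2)/2}$; the same kind of estimate, applied over the $c+1$ levels of recursion, shows that the total number of added hyperedges is $\O(|X|^{2^{(c+1)(c+2)/2}})$, since the contribution of the level with parameter $j$ has exponent at most $2\prod_{i=j}^c(2^i+2)\le 2\cdot 2^{(c^2+3c)/2}=2^{(c+1)(c+2)/2}$. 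Finally I would note that every rule is applied a polynomial number of times and runs in polynomial time, that an optimal treedepth decomposition of a graph of treedepth at most $c$ can be computed in polynomial time for fixed $c$, and that the recursion has constant depth $c+1$, so Algorithm~\ref{algo:kernel} runs in polynomial time and is therefore a polynomial kernel.

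The genuinely new content beyond Lemmas~\ref{lemma:bsup_2c}--\ref{lemma:s} is the root-lifting step, so the two checks in (i)--(ii) above --~that $Z$ is exactly the set of newly ``internal'' edges and that $X_r$ is independent, so that the output is a legitimate instance, and that removing one vertex per component lowers the treedepth by one, so that the recursion is well-founded~-- are where I would spend the most care; the remainder is the induction together with the exponent bookkeeping just sketched.
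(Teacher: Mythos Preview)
Your proof is correct and follows the same approach as the paper: induction on $c$, equivalence via Lemmas~\ref{lemma:rule12} and~\ref{lemma:rule3} together with the observation that $I_3$ and $I'$ have the same independent sets, the treedepth drop obtained by removing one root per component, and essentially the same size bookkeeping (the paper uses the slightly cruder $|X'|=\O(|X|^{2^{c+1}})$ in place of your $\O(|X|^{2^c+2})$, reaching the identical final exponent $2^{(c+1)(c+2)/2}$). You are in fact more careful than the paper in checking that $I'$ is a syntactically valid a-$(c{-}1)$-\tdmod-\IS\ instance --- the paper simply asserts that ``the underlying input is the same (except that some vertices were added to the modulator)'' without spelling out that $X_r$ is independent and that $Z$ is exactly the set of edges internal to $X'$.
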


\begin{proof}
Observe first that Algorithm~\ref{algo:kernel} is polynomial for fixed $c$. Indeed, computing $\conf_{R'}(X')$ is polynomial
(as $\tw(R') \le \td(R')$ and it is well-known that \IS/$\tw$ is {\sf FPT}~\cite{courcelle1990monadic}) and there are at most $\O(|X|^c)$ applications of Rules 1 and 2, and $\O(s|X|^c)$ applications of Rule 3.
Moreover, an optimal treedepth decomposition of each connected component can be computed in {\sf FPT} time parameterized by $c$, using~\cite{sparsity} or~\cite{ReidlRVS14}.
Let us prove the result by induction on $c$.
The result is trivially true for $c=0$. Let us suppose that the result holds for $c-1$ and prove it for $c$.
Observe that $X'$ is now a $(c-1)$-treedepth modulator, and thus we can apply the induction hypothesis on $A(I',c-1)$.
For every $\ell \in [3]$, let $I_\ell=(G_\ell,X_\ell,k_\ell)$ with $G_\ell=(V_\ell,E_\ell,\H_\ell)$ and $R_\ell=V_\ell \setminus X_\ell$ denote the instance after exhaustive application of Rule $\ell$, respectively.

\medskip
\noindent \emph{Equivalence of the output}.
By Lemma~\ref{lemma:rule12} and Lemma~\ref{lemma:rule3}, we know that Rules 1, 2, and 3 are safe, and thus that $I$ and $I_3$ are equivalent.
Note that $I_3$ is equivalent to $I'$ as the underlying input is the same (except that some vertices were added to the modulator).
As using induction hypothesis $A(I',c-1)$ outputs an instance $\tilde{I}$ equivalent to $I'$, we get the desired result.

\newpage
\noindent \emph{Size of the output}. We have
\begin{itemize}
\item[$\bullet$] $|X_1| \le |X|$, $|\H_1|=|\H|$.
\item[$\bullet$] $|X_2| = |X_1|$, $|\H_2| \le |\H_1|+|X_1|^{f(c)}$.
\item[$\bullet$] $|X_3| = |X_2|$, $|\H_3| = |\H_2|$. By Lemma~\ref{lemma:s}, $s$, the number of connected components of $R_3$, verifies $s = \O(|X_3|^{f(c)+2})$.
\item[$\bullet$] $|X'| \le |X_3| + s$, and $|\H'| \le |\H_3|+s|X_3|$.
\end{itemize}

Thus we get $|X'| = \O(|X|^{f(c)+2}) = \O(|X|^{2^{c+1}})$ and $|\H'| = |\H| + \O(|X|^{f(c)+3})$.
Using induction hypothesis we get that $|\tilde{X}| = \O(|X'|^{2^{c(c+1)/2}}) = \O(|X|^{2^{(c+1)(c+2)/2}})$, and that
$|\tilde{\H}| = |\H'|+ \O(|X'|^{2^{c(c+1)/2}}) =  |\H| + \O(|X|^{2^c+3})+\O(|X|^{2^{(c+1)(c+2)/2}}) = |H|+\O(|X|^{2^{(c+1)(c+2)/2}})$, as claimed.
\end{proof}

\subsection{Deducing a polynomial kernel for \IS/$c$-\tdmod}\label{subsec:normal}
Observe first that we can suppose that the modulator is given in the input, \ie that \IS/$c$-\tdmod $\le_{\PPT} $ $c$-\tdmod-\IS/$|X|$ ($\le_{\PPT}$ is defined in
Definition~\ref{def:ppt}).
Indeed, given an input $(G,x,k)$ of \IS/$c$-\tdmod (where $x$ denotes the size of a $c$-treedepth modulator), using the $2^c$-approximation algorithm of~\cite{gajarsky2013kernelization} for computing a $c$-treedepth modulator, wet get in polynomial time a set $X$ such that $|X| \le 2^c \cdot x$ and $\td(R) \le c$, where $R=V\setminus X$.

Observe also that \IS/$|X| \le_{\PPT} $ a-$c$-\tdmod-\IS/$(|X|+|\H|)$ using the same set $X$ and with $|\H| \le |X|^2$.
Now, as usual when using bikernels, we could claim that as \IS is Karp $\NP$-hard and as a-$c$-\tdmod-\IS is in $\NP$, there exists a polynomial reduction from a-$c$-\tdmod-\IS, implying the existence
of a polynomial kernel for \IS/c-\tdmod. However, let us make such a reduction explicit to provide an explicit bound on the size of the kernel.

\begin{lemma}\label{lem:explicit-reduction-annotated}
Let $I=(G,k)$ with $G=(X,\H)$ be an instance of a-$c$-\tdmod-\IS as produced by Theorem~\ref{thm:kernel-ann} (as $R= \emptyset$ the set of vertices is reduced to $X$, and $\H$ is a set of hyperedges on $X$).
We can build in polynomial time an equivalent instance $I'=(G',k')$ of \IS with $G'=(V',E')$ where $|V'| \le \O(|X| \cdot |\H|)$.
\end{lemma}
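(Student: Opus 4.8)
The goal is to get rid of the hyperedges while preserving the independent-set structure, using only polynomially many new vertices per hyperedge. The plan is to replace each hyperedge $H \in \H$ by an ordinary gadget that forbids exactly the situation "all of $H$ is selected" and nothing else. I would proceed as follows. First, observe that a subset $S \subseteq X$ is independent in $G=(X,\H)$ if and only if for every $H \in \H$ we have $H \not\subseteq S$, equivalently $S$ misses at least one vertex of each $H$. Second, for each hyperedge $H = \{h_1,\dots,h_r\} \in \H$ (note $r \le f(c) = 2^c$, a constant, but one can work with arbitrary $r \le |X|$ as well), introduce a fresh gadget: I would add, say, one new vertex $w_H$ together with edges $\{w_H, h_i\}$ for all $i$, and then force $w_H$ into any maximum independent set by attaching to $w_H$ a pendant path or a small pendant clique of the appropriate size; alternatively, and more cleanly for the counting bound $\O(|X|\cdot|\H|)$, replace $H$ by a "selection gadget" consisting of $|H|$ new vertices plus a large independent set of pendant vertices, arranged so that an optimal solution is forced to avoid at least one $h_i$. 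Concretely, the simplest device: create a vertex $w_H$ adjacent to every vertex of $H$, and add $|X|+1$ pendant vertices attached only to $w_H$. Then any maximum independent set contains $w_H$ (swapping in $w_H$ and its $|X|+1$ private pendants beats any solution that uses some vertices of $H$ instead), hence avoids all of $H$'s... — wait, that is too strong; it would forbid picking \emph{any} vertex of $H$, not just all of them.

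So the correct gadget must forbid only the "full" configuration. I would instead use the following: for hyperedge $H$ with $|H| = r$, add $r-1$ new vertices $w_H^1, \dots, w_H^{r-1}$ forming an independent set, make each $w_H^j$ adjacent to all $r$ vertices of $H$, and attach to the $w_H^j$'s a gadget (a pendant path or pendant clique on each, sized to force them all into any maximum independent set). Then a maximum independent set is forced to contain all $r-1$ vertices $w_H^j$, and since each is adjacent to every $h_i$, that blocks all of $H$ — again too strong. The right trade-off is: what we want is a graph fragment $F_H$ on $|H|$ "terminal" slots plus $\O(|X|)$ extra vertices, with a target contribution $t_H$, such that $F_H$ together with a chosen subset $T \subseteq H$ of its terminals being occupied from outside admits an independent set of size $t_H$ avoiding the occupied terminals if and only if $T \neq H$. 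Such a gadget is standard; a clean realization is a single new vertex $w_H$ adjacent to all of $H$, plus $M := |X|+1$ pendant leaves on $w_H$, but now with target value $t_H = M$ for the gadget \emph{only when $w_H$ is used}, and $t_H = M$ is also achievable using the $M$ leaves when $w_H$ is not used — so using the leaves is always at least as good, and the only reason to NOT take all leaves is never triggered. This shows a one-vertex gadget cannot distinguish "some $h_i$ missing" from "all present"; hence I would instead encode each hyperedge by an explicit OR over its $r$ vertices: introduce, for $H=\{h_1,\dots,h_r\}$, a path or caterpillar $P_H$ whose internal structure realizes the Boolean constraint $\bar h_1 \vee \cdots \vee \bar h_r$, which is exactly a single clause and can be simulated by an independent-set gadget of size $\O(r) = \O(1)$ with a constant target offset — the classical clause gadget from the NP-hardness reduction of \IS from \textsc{SAT}.

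The key steps, in order, are: (1) fix the clause/OR gadget $F_H$ with its terminal vertices identified with $H$, its size $n_H = \O(|H|)$, and its ``target'' $t_H$; (2) set $G' $ to be $G[X]$ restricted to the \emph{ordinary} edges (here there are none, $G$ has only hyperedges on $X$, though in general one keeps $E$), with one copy of $F_H$ glued on for each $H \in \H$, its terminals identified with the corresponding vertices of $X$; (3) set $k' = k + \sum_{H \in \H} t_H$; (4) prove the equivalence in both directions: given an independent set $S$ of $G$ of size $\ge k$, for each $H$ we have $H \not\subseteq S$, so by the gadget property $F_H$ contributes a disjoint independent set of size $t_H$ compatible with $S \cap H$, and conversely an independent set $S'$ of $G'$ of size $\ge k'$ restricts on $X$ to an independent set of $G$ (the gadget forces $(S'\cap X)$ to miss at least one vertex of each $H$, since a contribution of $t_H$ from $F_H$ is impossible when all terminals are occupied), and $|S' \cap X| \ge k' - \sum t_H = k$. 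Since $|\H|$ hyperedges each add $\O(|H|) = \O(|X|)$ vertices, we get $|V'| = |X| + \O(|\H|\cdot|X|) = \O(|X|\cdot|\H|)$, as claimed, and the construction is clearly polynomial-time.

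The main obstacle, and the place to be careful, is the \emph{exact design of the clause gadget $F_H$} so that (a) it realizes precisely the OR constraint (forbid the all-selected configuration, allow every other), (b) its optimal internal contribution $t_H$ is a fixed constant depending only on $|H|$ and is \emph{independent of which proper subset of terminals is occupied from outside} — otherwise the additive bookkeeping in $k'$ breaks — and (c) distinct gadgets interact only through their shared terminals in $X$, so that their contributions are genuinely additive. The standard textbook gadget (for each clause a triangle, or a small tree, with one terminal edge per literal) satisfies all three; I would state it explicitly, verify the three properties by a short case analysis on how many terminals are occupied, and then the size and correctness bounds follow immediately.
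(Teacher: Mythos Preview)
Your overall strategy --- attach one clause-style gadget $F_H$ per hyperedge $H$ with target contribution $t_H$, and set $k' = k + \sum_H t_H$ --- is the right one and is what the paper does too. But your backward implication, as written, does not go through. You assert that ``the gadget forces $(S'\cap X)$ to miss at least one vertex of each $H$, since a contribution of $t_H$ from $F_H$ is impossible when all terminals are occupied''; this does not follow. Nothing prevents an independent set $S'$ of $G'$ of size $\ge k'$ from containing all of some $H$ and simply collecting $0$ from $F_H$: with a light gadget (say the clique $K_{|H|}$ with $t_H=1$) the shortfall of $1$ is compensated by the extra vertex of $X$ you took. Concretely, with $X=\{a,b\}$, $\H=\{\{a,b\}\}$, and your clique gadget, $G'$ is the path $a\!-\!w_1\!-\!w_2\!-\!b$ and $S'=\{a,b\}$ has size $k'=2$ yet $S'\cap X$ violates the hyperedge. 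What you actually get is $|S'\cap X|\ge k+v$, where $v$ is the number of hyperedges $H$ with $H\subseteq S'$; you must then argue that deleting one vertex per violated hyperedge from $S'\cap X$ yields an independent set of $(X,\H)$ of size $\ge k$. This repair is easy, but it is a genuine missing step --- and given the several false starts in your write-up, you should also pin down $F_H$ explicitly rather than deferring to ``the standard textbook gadget''.

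The paper sidesteps this repair by making the gadgets heavy enough that skipping one is never profitable. It first replaces each $v_i\in X$ by a path $y_i^a\!-\!z_i\!-\!y_i^b$ (so ``$v_i$ selected'' means taking both $y$'s, worth $2$, versus taking $z_i$, worth $1$), and for each $H$ it uses a complete $|H|$-partite graph whose parts each have size $n=|X|$, the $\ell$-th part being completely joined to $\{y^a_{H_\ell},y^b_{H_\ell}\}$; the target becomes $k'=n+k+nm$. Any $S'$ with $|S'|\ge k'$ must then take a full part from \emph{every} hyperedge gadget, since omitting even one caps $|S'|$ at $2n+n(m-1)<k'$. That immediately gives, for each $H$, an index $\ell_H$ with $y^a_{H_{\ell_H}},y^b_{H_{\ell_H}}\notin S'$, and $\{v_i: y_i^a,y_i^b\in S'\}$ is directly an independent set of $(X,\H)$ of size $k$, with no exchange on $X$ required. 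Your lighter construction would also work and even gives a slightly smaller $G'$, but only after you supply the missing argument in the backward direction.
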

\begin{proof}
Let $n = |X|$, $X = \{v_i \mid i \in [n]\}$ and $m = |\H|$.
We refer the reader to Fig.~\ref{fig:reduc} for an example of the construction of $G'$.
For every $i \in [n]$, we add to $G'$ the vertex gadget constituted of $V'_i = \{y^a_i,y^b_i,z_i\}$ and edges $\{z_i,y^a_i\}$ and $\{z_i,y^b_i,\}$.
Taking vertices $y^a_i$ and $y^b_i$ in a solution for $I'$ will correspond to taking $v_i$ in the corresponding solution of $I$.
For every $H \in \H$, we add to $G'$ the edge gadget $W'_H = \bigcup_{\ell \in [|H|]}W'^\ell_H$, where each $W'^\ell_H$ is an independent set of size $n$, and
 we add edges to make $G[W'_H]$ a complete $|H|$-partite graph with ${|H| \choose 2} n^2$ edges.
Finally, for every $H \in \H$, $H=\{v_{H_\ell} \mid \ell \in [|H|]\}$ and $\ell \in [|H|]$, we add edges to make $G[W'^\ell_H \cup \{y^a_{H_\ell},y^b_{H_\ell}\}]$ a complete bipartite graph with $2n$ edges.
Thus, the $\ell$-th ``column'' of $W'_H$ corresponds to the $\ell$-th vertex of $H$.
This completes the description of $G'$. Let $k'=n+k+n m$.

\begin{figure}[t]
\begin{center}\vspace{-.5cm}
\includegraphics[width=0.45\textwidth]{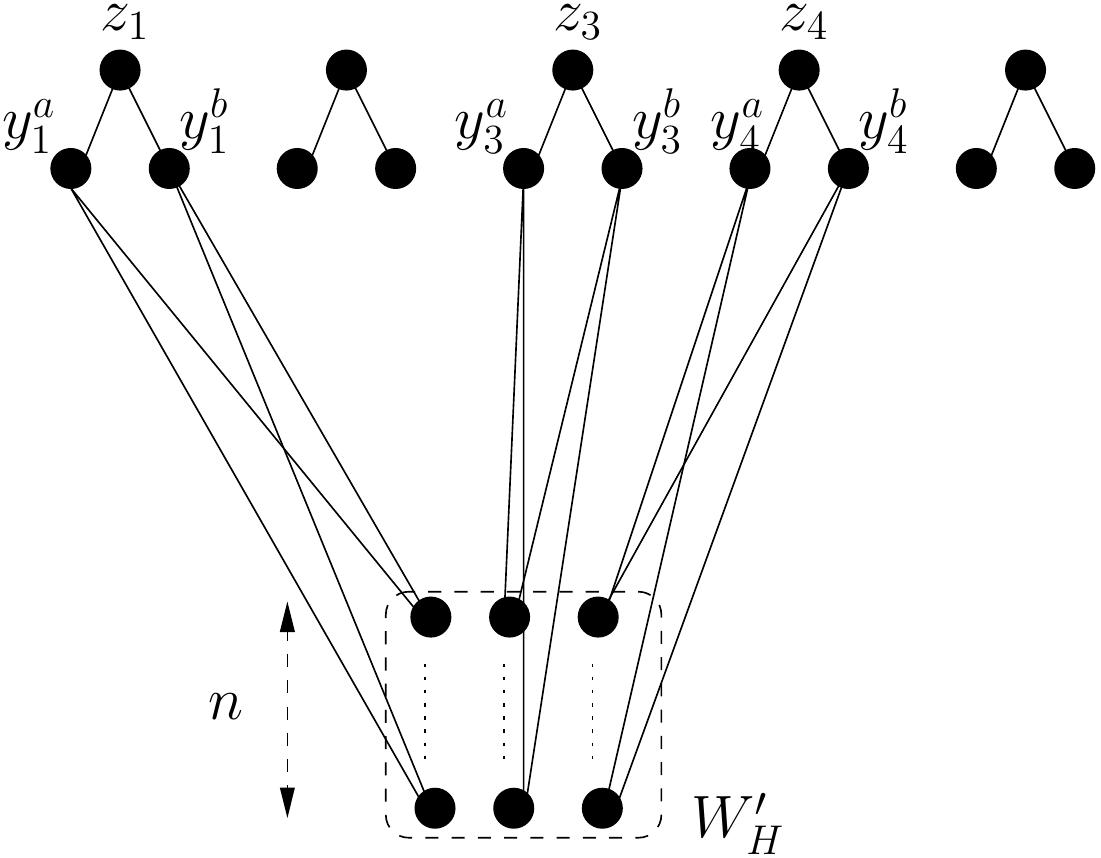}
\end{center}\vspace{-.15cm}
\caption{Example of the construction of $G'$ for $n=5$, $m=1$, and $H = \{1,3,4\}$.}
\label{fig:reduc}
\end{figure}

\noindent $\alpha(G) \ge k \Rightarrow \alpha(G') \ge k'$: Without loss of generality let $S=\{v_i \mid i \in [k]\}$ be an independent set of $G$.
For every $H=\{v_{H_\ell} \mid \ell \in [|H|]\}$ there exists $\ell$ such that $v_{H_\ell} \notin S$.
We define $S'=\bigcup_{i \in [k]}\{y^a_i,y^b_i\} \cup \bigcup_{i \in [n]\setminus [k]}\{z_i\} \cup \bigcup_{H \in \H}W'^{H_\ell}_H$.

\medskip
\noindent
$\alpha(G) \ge k \Leftarrow \alpha(G') \ge k'$: Let $S'$ be an independent set of $G'$ of size at least $k'$.
We can always assume that for every $H \in \H$ and $\ell \in [|H|]$, if $W'^\ell_H \cap S' \neq \emptyset$ then $W'^\ell_H \subseteq S'$
(as all vertices of $W'^\ell_H$ have the same neighborhood, we can safely add $W'^\ell_H$).
Note that there cannot exist $H \in \H$ such that $W'_H \cap S' = \emptyset$. Indeed, otherwise $|S'| \le n(m-1)+2n < k'$.

Thus, for every $H=\{v_{H_\ell} \mid \ell \in [|H|]\}$ there exists (a unique) $\ell_H \in [|H|]$ such that $W'^{\ell_H}_H \subseteq S'$.
As there remain $n+k$ vertices to take in the $V'_i$'s, and as $y^a_i$ and $y^b_i$ have the same neighborhood, we get that, without loss of generality,
for every $i \in [k]$ we have $\{y^a_{i},y^b_{i}\} \subseteq S'$, and for every $i \in [n]\setminus [k]$ we have $z_i \in S'$. Thus, we define $S = \{v_i \mid i \in [k]\}$.
Let us verify that $S$ is an independent set. Let $H \in \H$. As $W'^{\ell_H}_H \subseteq S'$ and $S'$ is an independent set, we deduce that there are no edges in $G'$
between $W'^{\ell_H}_H$ and any of the $\{y^a_{i},y^b_{i}\}$ for $i \in [k]$, implying that $v_{H_{\ell_H}} \notin S$, and therefore $S$ is indeed an independent set.
\end{proof}

Putting pieces together we get the main theorem of this section, whose proof is now immediate.

\begin{theorem}\label{thm:kernel-VC}
 For every integer $c \geq 1$, \IS/$c$-\tdmod (or equivalently, \VC/$c$-\tdmod) admits a polynomial kernel on general graphs with $\O\left(x^{2^{\frac{1}{2}(c+1)(c+2)+1}}\right)$ vertices, where $x$ is the size of a $c$-treedepth modulator.
\end{theorem}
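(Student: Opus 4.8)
The statement is an immediate composition of the three ingredients built in this section, so the plan is simply to chase an instance through the pipeline and carry out the exponent bookkeeping.

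\textbf{Step 1 (make the modulator explicit and annotate).} Starting from an instance $(G,x,k)$ of \IS/$c$-\tdmod, I would first run the $2^c$-approximation algorithm of~\cite{gajarsky2013kernelization} to obtain, in polynomial time, a $c$-treedepth modulator $X$ of $G$ with $|X| \le 2^c x$; this is exactly the reduction \IS/$c$-\tdmod $\le_{\PPT}$ $c$-\tdmod-\IS/$|X|$ mentioned above. Reading $G$ together with $X$ as a hypergraph whose hyperedge set $\H$ records the non-edges inside $X$ (so that independence in the hypergraph coincides with independence in $G$ and $|\H| \le |X|^2$) yields an equivalent instance $I=(G,X,k)$ of a-$c$-\tdmod-\IS whose parameter $|X|+|\H|$ is polynomial in $x$ for fixed $c$.

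\textbf{Step 2 (apply the annotated kernel).} Next I would feed $I$ to Algorithm~\ref{algo:kernel}. By Theorem~\ref{thm:kernel-ann} it runs in polynomial time for fixed $c$ and outputs an equivalent instance $\tilde I=(\tilde G,\tilde X,\tilde k)$ of a-$c$-\tdmod-\IS with $\tilde R=\emptyset$, $|\tilde X| = \O\bigl(|X|^{2^{(c+1)(c+2)/2}}\bigr)$, and $|\tilde \H| = |\H| + \O\bigl(|X|^{2^{(c+1)(c+2)/2}}\bigr)$. Since $|\H|\le |X|^2$ and the exponent $2^{(c+1)(c+2)/2}$ is at least $2$ for every $c\ge 1$, both $|\tilde X|$ and $|\tilde \H|$ are $\O\bigl(|X|^{2^{(c+1)(c+2)/2}}\bigr)$.

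\textbf{Step 3 (translate back to plain \IS{} and count).} Finally I would apply the explicit reduction of Lemma~\ref{lem:explicit-reduction-annotated} to $\tilde I$, obtaining in polynomial time an equivalent instance $(G',k')$ of classical \IS{} with $|V(G')| = \O(|\tilde X|\cdot|\tilde \H|)$. Plugging in Step 2, $|V(G')| = \O\bigl(|X|^{2\cdot 2^{(c+1)(c+2)/2}}\bigr) = \O\bigl(|X|^{2^{(c+1)(c+2)/2+1}}\bigr)$, and since $|X|\le 2^c x$ with the factor $2^c$ absorbed into $\O(\cdot)$ for fixed $c$, this is $\O\bigl(x^{2^{\frac12(c+1)(c+2)+1}}\bigr)$; the total bit-size of $(G',k')$ is polynomial in this quantity. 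To upgrade this compression into an actual polynomial kernel for \IS/$c$-\tdmod one invokes Theorem~\ref{thm:ppt}: every intermediate map is a PPT reduction, \IS{} is \NP-hard, a-$c$-\tdmod-\IS{} is in \NP, and a-$c$-\tdmod-\IS/$(|X|+|\H|)$ admits a polynomial kernel by Theorem~\ref{thm:kernel-ann}, hence so does \IS/$c$-\tdmod; the explicit chain above is what pins down the size. Equivalence is clear since each of the three steps preserves it, and \IS/$c$-\tdmod and \VC/$c$-\tdmod are equivalent by complementation.

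\textbf{Main point requiring care.} There is no genuine obstacle, since every component has already been established; the only thing to be careful about is the exponent arithmetic. The annotated kernel is polynomial in $|X|$ rather than in $x$, so the $2^c$ in $|X|\le 2^c x$ must be checked to contribute only a (fixed-$c$) constant; and the reduction of Lemma~\ref{lem:explicit-reduction-annotated} multiplies $|\tilde X|$ by $|\tilde \H|$, which squares the relevant bound and thus raises the exponent from $\frac12(c+1)(c+2)$ to $\frac12(c+1)(c+2)+1$. Once these are tracked, the claimed bound $\O\bigl(x^{2^{\frac12(c+1)(c+2)+1}}\bigr)$ on the number of vertices follows.
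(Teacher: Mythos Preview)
Your argument is essentially the paper's own: it too declares the theorem immediate from chaining the $2^c$-approximation for the modulator, Theorem~\ref{thm:kernel-ann}, and Lemma~\ref{lem:explicit-reduction-annotated}, and your exponent bookkeeping is correct.

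One slip to fix in Step~1: you say $\H$ ``records the non-edges inside $X$'', but it must record the \emph{edges} of $G[X]$. In the annotated problem an independent set is a set $S$ with $h \nsubseteq S$ for every $h \in E \cup \H$, so a hyperedge forbids its vertices from all being chosen; putting the non-edges of $G[X]$ into $\H$ would make the instances inequivalent. With $\H$ taken to be the edge set of $G[X]$ the bound $|\H| \le |X|^2$ still holds and the rest of your computation goes through unchanged.
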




\section{Excluding polynomial kernels for  \DS/$c$-\tdmod on degenerate graphs}
\label{sec:neg1}

Given a graph $G$, we define $G^{c\text{-sub}}$  as the graph obtained from $G$ by subdividing each edge $c$
times. In other words, we add a set $X_e=\{x_e^\ell \mid \ell \in [c]\}$ of $c$ vertices of degree $2$ for every edge
$e \in E$ of $G$.

\begin{observation}\label{obs:sub}
For every $c \ge 0$ and every $k \ge 0$, $G$ has a dominating set of size
$k$ if and only if \emph{$G^{3c\text{-sub}}$} has a dominating set of size $k+mc$, where $m$ is the number of edges of $G$.
\end{observation}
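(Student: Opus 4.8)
The plan is to prove Observation~\ref{obs:sub} by analyzing how a dominating set interacts with each subdivided edge. Fix an edge $e = \{u,v\}$ of $G$; in $G^{3c\text{-sub}}$ it is replaced by a path $u, x_e^1, x_e^2, \dots, x_e^{3c}, v$ of $3c$ new degree-$2$ vertices. The key local fact I would establish first is a counting lemma: for \emph{any} dominating set $D'$ of $G^{3c\text{-sub}}$, and for each edge $e$, the set $D'$ must contain at least $c$ vertices among the internal path vertices $X_e = \{x_e^1,\dots,x_e^{3c}\}$, and moreover, if it contains exactly $c$ of them, one may assume (after a local exchange argument) that these $c$ vertices are $x_e^2, x_e^5, \dots, x_e^{3c-1}$, i.e.\ the ``middle of each block of three'', which dominates all of $X_e$ but neither $u$ nor $v$. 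This is the standard ``subdivide three times'' trick: a path on $3c$ internal vertices needs $c$ vertices to dominate its interior, and the only way to spend exactly $c$ is to place them so that $u$ and $v$ get no help from inside the gadget.

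With that lemma in hand, the forward direction is easy: if $D$ is a dominating set of $G$ with $|D| = k$, take $D' = D \cup \bigcup_{e \in E}\{x_e^2, x_e^5, \dots, x_e^{3c-1}\}$, which has size $k + mc$. Every original vertex is dominated by $D$ as before (original vertices are still adjacent to their $D$-neighbours only through the gadget endpoints $x_e^1$ and $x_e^{3c}$ — wait, no: after subdivision $u$ is adjacent to $x_e^1$, not to $v$), so I need to be slightly careful: original vertices of $G^{3c\text{-sub}}$ are dominated because $D$ dominates them in $G$ and domination of $u$ by $v \in D$ in $G$ corresponds to $u \in D$ or some neighbour of $u$ being in $D$; if $u \notin D$ then $u$ has a neighbour $v \in D$ along some edge $e$, and then — hmm, $u$ is not adjacent to $v$ in $G^{3c\text{-sub}}$. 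The correct statement is: $u$ is dominated by $D'$ iff $u \in D$ (so $u \in D'$) or $u$ has an incident edge $e$ with $x_e^1 \in D'$; since we put $x_e^2, x_e^5,\dots$ in $D'$ and not $x_e^1$, this does \emph{not} dominate $u$. So actually I must revisit: the forward construction should handle the case $u\notin D$ by the fact that $u$ IS dominated in $G$ by some neighbour, but in the subdivided graph we need $u$ itself or $x_e^1$ in $D'$. The resolution is that since $D$ is a dominating set of $G$ and $u\notin D$, there is $v\in D$ with $\{u,v\}=e\in E$; but that still doesn't put $x_e^1$ in $D'$. The honest fix: choose the $c$ gadget vertices for edge $e$ to be $x_e^1, x_e^4, \dots, x_e^{3c-2}$ \emph{only when needed}, but that breaks uniformity. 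I expect the actual argument picks the middle vertices $x_e^2,\dots$ and relies on the fact that in $G$ every vertex either is in $D$ or is nonadjacent-irrelevant — so the correct claim is that original vertices of $G^{3c\text{-sub}}$ that are not in $D$ are dominated via a gadget vertex, meaning the uniform ``middle'' choice does \emph{not} work and one instead argues: $D$ dominates $G$, so for $u\notin D$ pick an edge $e=\{u,v\}$ with $v\in D$; dominating $u$ in the big graph then forces including $x_e^1$; doing this consistently while keeping the count at exactly $c$ per edge is where the real bookkeeping lives. The hard part will be exactly this: showing the per-edge budget of $c$ can always be met \emph{simultaneously} with the endpoint-domination requirements, i.e.\ that contributions needed at the two ends of an edge never conflict, which follows because a single block of three internal vertices $x_e^{3j-2},x_e^{3j-1},x_e^{3j}$ can be positioned to help the $u$-side, the $v$-side, or neither, as required.

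For the reverse direction, suppose $D'$ dominates $G^{3c\text{-sub}}$ with $|D'| = k + mc$. By the counting lemma each gadget contributes at least $c$ vertices, so $\sum_e |D' \cap X_e| \ge mc$, leaving at most $k$ vertices of $D'$ on original vertices, and moreover every gadget contributes \emph{exactly} $c$ and — after local exchanges that never increase $|D'|$ — contributes them ``in the middle'' so that each endpoint $u$ of $G$ is dominated either by $u\in D'$ or by an original neighbour-gadget vertex. Define $D = (D'\cap V(G)) \cup \{$one endpoint of $e$ for each edge $e$ whose gadget was used to dominate that endpoint$\}$; a careful accounting shows $|D|\le k$ and $D$ dominates $G$. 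I would organize the write-up as: (1) state and prove the local structural lemma about a path of $3c$ subdivision vertices; (2) forward direction via the explicit construction with the block-positioning choice; (3) reverse direction via the exchange argument plus the counting bound. The single genuine obstacle is the local exchange argument in step (1) — proving that any dominating set can be pushed into the canonical per-block form without increasing its size and without losing domination of the two endpoints — and everything else is routine bookkeeping on top of it.
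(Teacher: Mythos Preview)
Your overall approach matches the paper's: per-edge analysis, a lower bound of $c$ gadget vertices per edge, an explicit forward construction, and an exchange argument for the converse. Two points deserve cleanup, one in each direction.

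\textbf{Forward direction.} You overcomplicate the ``conflict'' issue. The paper's rule is entirely local to each edge $e=\{u,v\}$: if exactly one endpoint, say $u$, is in $S$, take $\{x_e^{3\ell}\mid 1\le\ell\le c\}$ (every third vertex counted from the $u$-side), which covers $x_e^2,\dots,x_e^{3c}$ and $v$, while $u\in S$ covers $x_e^1$; otherwise take the ``middle'' set $\{x_e^{3\ell+2}\mid 0\le\ell\le c-1\}$, which covers all of $X_e$. There is no conflict at the two ends because the choices are independent across edges: if neither endpoint of $e$ lies in $S$, then each of $u,v$ has a $G$-neighbour in $S$ along some \emph{other} edge $f$, and the rule applied to $f$ places a gadget vertex adjacent to that endpoint. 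So one never needs a single $c$-vertex gadget to help both its endpoints.

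\textbf{Reverse direction.} Here your plan has a real gap. You propose exchanging every gadget into ``middle'' position; but the middle set dominates neither endpoint, so after this exchange every original vertex not in $D'_0=D'\cap V(G)$ would be undominated---the exchange is not domination-preserving. The correct exchange only brings each $|D'\cap X_e|$ down to exactly $c$ (adding at most one original endpoint when the excess is $1$, which keeps $|D'|$ non-increasing). After this, take $D=D'_0$; no extra endpoints are needed. Indeed, if $u\notin D'_0$ then $u$ is dominated by some $x_e^1\in D'$ with $e=\{u,v\}$; but a size-$c$ subset of $X_e$ containing $x_e^1$ and dominating the $3c-2$ inner vertices cannot also dominate $x_e^{3c}$ (a quick count: covering $x_e^3,\dots,x_e^{3c-2}$ already needs $c-1$ further vertices, leaving none to reach $x_e^{3c}$). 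Hence $v\in D'_0$, so $u$ has a $G$-neighbour in $D'_0$. This replaces your ``careful accounting'' on added endpoints, which as written would not give $|D|\le k$.
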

\begin{proof}
Indeed, if $S$ is a dominating set of $G$ of size $k$, then we construct
a dominating set of $G^{3c\text{-sub}}$ of size $k+mc$ by taking $S$ and the
following vertices. For every $e = \{u,v\}$ with $u \in S$ and $v \notin
S$ we take $\{x_e^{3\ell} \mid 1 \le \ell \le c \}$ (we add to the dominating set every third vertex in the set $X_e$ starting from $u$).
Otherwise (if both or none of $\{u,v\}$ belong to $S$) we take $\{x_e^{3\ell+2} \mid 0 \le \ell \le c-1 \}$.
The other direction is also true, as every solution must include at least
$c$ vertices in each $X_e$, and every solution can be modified so
that it does not include more than $c$ vertices in each $X_e$. Thus, the
$k$ vertices of a solution of $G^{3c\text{-sub}}$ corresponding to original
vertices of $G$ form a dominating set of $G$.
\end{proof}

Let us start with the following proposition, which follows from
existing negative results for \textsc{Dominating Set} parameterized by the size of a vertex cover~\cite{dom2014kernelization}.

\begin{proposition}\label{prop:ds/tdmod}
\DS/$c$-\tdmod does not admit a polynomial kernel on $2$-degenerate graphs for every $c \ge 3$,  unless \emph{$\text{NP} \subseteq \text{coNP}/\text{poly}$}.
\end{proposition}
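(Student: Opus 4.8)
The plan is to establish \DS/$c$-\tdmod on $2$-degenerate graphs as having no polynomial kernel by exhibiting a polynomial parameter transformation from \DS/$1$-\tdmod (equivalently, \DS parameterized by the size of a vertex cover) on general graphs, which is known not to admit a polynomial kernel unless $\text{NP} \subseteq \text{coNP}/\text{poly}$~\cite{dom2014kernelization}. By Theorem~\ref{thm:ppt}, together with the facts that the derived classical problem (\textsc{Dominating Set}) is \NP-complete and in \NP, it suffices to construct, in polynomial time, from any instance $(G,k)$ of \DS with a vertex cover $X$ of size $x$, an equivalent instance of \DS on a $2$-degenerate graph admitting a $c$-treedepth modulator whose size is polynomially bounded in $x$.

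\textbf{The construction} I would use is exactly the edge-subdivision operation already set up in Observation~\ref{obs:sub}: take $G^{3c'\text{-sub}}$ for a suitable small constant $c'$ (I expect $c'=1$ suffices here, giving $G^{3\text{-sub}}$). First, I would argue that $G^{3\text{-sub}}$ is $2$-degenerate: the subdivision vertices all have degree $2$, so repeatedly deleting a subdivision vertex (each time removing an endpoint of some original edge) eventually leaves the original vertex set $V(G)$ as an independent set, every vertex of which now has degree $0$; hence every subgraph has a vertex of degree at most $2$. Second, I would check that $X$ remains a small treedepth modulator in $G^{3\text{-sub}}$: removing $X$ from $G^{3\text{-sub}}$ leaves the independent set $V(G)\setminus X$ together with the subdivision paths, but since $X$ was a vertex cover of $G$, every original edge has an endpoint in $X$, so each subdivision path of length $3$ loses a vertex and what remains is a disjoint union of paths on at most $2$ vertices plus isolated vertices — a graph of treedepth at most $2$. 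With a bit more care, even $\td(G^{3c'\text{-sub}}[V\setminus X]) \le c$ for the relevant range, so $X$ is a $c$-treedepth modulator of size $x$ for every $c \ge 3$ (the $c=3$ case being the tightest, and larger $c$ following a fortiori, or by using $c'>1$). Third, by Observation~\ref{obs:sub}, $(G,k)$ is a YES-instance iff $(G^{3\text{-sub}}, k+m)$ is, where $m = |E(G)|$; since $m \le x \cdot |V(G)|$ is polynomially bounded but the new \emph{parameter} is the modulator size $x$, which is unchanged, the parameter bound $k' \le p(x)$ required by Definition~\ref{def:ppt} holds trivially (indeed $k'$ here is the treedepth-modulator parameter, namely $x$ itself, not $k+m$).

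\textbf{The main subtlety} I anticipate is getting the treedepth bookkeeping exactly right for the smallest admissible $c$, namely $c=3$: one must verify that after subdividing enough times and removing $X$, the residual graph's treedepth does not exceed $c$, and one may need to tune how many times to subdivide (the "$3c$" in Observation~\ref{obs:sub} is chosen so that domination is preserved cleanly, so the number of subdivisions is pinned down, and I must then simply confirm the resulting path lengths give treedepth at most $c$ after deleting the vertex cover $X$). A secondary point to state carefully is that \DS/$1$-\tdmod coincides with \DS parameterized by vertex cover — this is immediate since a $1$-treedepth modulator is precisely a vertex cover (a graph has treedepth $\le 1$ iff it is edgeless) — so the lower bound of~\cite{dom2014kernelization} applies verbatim. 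Assembling these three observations (degeneracy, modulator size, equivalence) and invoking Theorem~\ref{thm:ppt} completes the proof; I expect the whole argument to be short, with the only genuine content being the degeneracy and treedepth verifications for the subdivided graph.
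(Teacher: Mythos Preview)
Your approach matches the paper's exactly: a PPT reduction from \DS/\VC to \DS/$3$-\tdmod on $2$-degenerate graphs via the $3$-subdivision of Observation~\ref{obs:sub}. The degeneracy argument and the invocation of Theorem~\ref{thm:ppt} are both fine.

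However, your treedepth analysis contains a concrete error. You write that after removing $X$ from $G^{3\text{-sub}}$, ``each subdivision path of length $3$ loses a vertex and what remains is a disjoint union of paths on at most $2$ vertices plus isolated vertices.'' This is wrong: the three subdivision vertices $x_e^1, x_e^2, x_e^3$ introduced for each edge $e$ are \emph{new} vertices and none of them lies in $X$; what is deleted is the original endpoint(s) $u,v \in X$. Moreover, a vertex $v \in V(G)\setminus X$ is not isolated after the deletion: since $X$ is a vertex cover, every neighbour of $v$ in $G$ lies in $X$, so for each edge $e=\{u,v\}$ the surviving path $x_e^1-x_e^2-x_e^3-v$ remains attached to $v$. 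Hence the connected component of $v$ in $G^{3\text{-sub}}-X$ is a \emph{spider} centred at $v$ with legs of three vertices each (height $4$), and the edges fully inside $X$ contribute isolated $P_3$'s. Such a spider has treedepth exactly $3$ (root at $v$, then each leg is a $P_3$ of treedepth $2$), not $2$. This is precisely why the reduction only yields a $3$-treedepth modulator and the proposition requires $c \ge 3$; your tentative claim of treedepth $\le 2$ would incorrectly suggest the result already holds for $c=2$. Once you replace your description of $G^{3\text{-sub}}-X$ with the correct ``$P_3$'s and spiders of height $4$'' picture, the proof goes through as in the paper.
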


\begin{proof}
Let us prove that \DS/\VC $\le_{\PPT}$ \textsc{DS}$_{\C_{\text{dege}}}$/$3$-\tdmod, where $\C_{\text{dege}}$ is the class of $2$-degenerate graphs. As \DS/\VC
(and even \DS/$k$+\VC) does not admit a polynomial kernel unless $\text{NP} \subseteq \text{coNP}/\text{poly}$  \cite{dom2014kernelization}, we will get the desired
result using Theorem~\ref{thm:ppt}.
Let $(G,k)$ be an instance of \DS/\VC with  $G=(V,E)$ and $m=|E|$. We
define $G'=G^{3\text{-sub}}$, and let $V'$ be the set of vertices of $G'$. By Observation~\ref{obs:sub}, $G$ has a dominating set of
size $k$ if and only if $G'$ has a dominating set of size
$k+m$. Moreover, it is clear that $G'$ is $2$-degenerate. Finally, every vertex cover $X$ of $G$ is a
$3$-treedepth modulator of $G'$. Indeed, in $G'[V'\setminus X]$, to each edge $e
\in E$ entirely contained in $X$ corresponds in $G'[V'\setminus X]$ an
isolated $P_3$, and to each $v \in V \setminus X $ corresponds in
$G'[V' \setminus X]$ a spider (that is, a tree with only one vertex of degree more than two) rooted at $v$ of height $4$ with
$x\ge 1$ leaves. Thus, $G'[V'\setminus X]$ is a disjoint collection of
$P_3$'s and spiders of height $4$, both having treedepth at most $3$.
As $3$-\tdmod$(G') \le {\sf vc}(G)$ (the size of a minimum vertex cover of $G$), this is a PPT reduction and we get
the desired result. We can get the same result for
\textsc{DS}$_{\C_{\text{dege}}}$/$c$-\tdmod for $c \ge 4$ by subdividing $3f(c)$ times each
edge of $G$, for an appropriate function $f$.\end{proof}

\begin{observation}\label{obs:ds/vc}
\DS/$1$-\tdmod (or equivalently \DS/\VC) admits a polynomial kernel on degenerate graphs.
Indeed, given an instance $(G,k)$ of \DS/\VC, we compute in polynomial time a $2$-approximate vertex cover
$X$ of $G$. If $|X| \le k$ then we output a trivial \textsc{Yes}-instance,
otherwise VC$(G) \ge \frac{k}{2}$ and we can apply the polynomial
kernel for \DS/$k$ on degenerate graphs of Philip et al.~\cite{philip2009solving}.
\end{observation}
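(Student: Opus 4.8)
The plan is to reduce \DS/$1$-\tdmod to the standard parameterization \DS/$k$ and then invoke the known polynomial kernel for the latter on degenerate graphs. First I would observe that a $1$-treedepth modulator is exactly a vertex cover: with the convention that height counts vertices on a root-to-leaf path, a rooted forest of height at most $1$ is a set of isolated vertices and its closure is edgeless, so $\td(G-X)\le 1$ if and only if $G-X$ has no edge, i.e. $X$ covers every edge of $G$. Hence \DS/$1$-\tdmod and \DS/\VC are literally the same parameterized problem, and $1$-$\tdmod(G)={\sf vc}(G)$.

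Second, I would preprocess the instance $(G,k)$ by deleting all isolated vertices of $G$ and decreasing $k$ by their number (outputting a trivial \textsc{No}-instance if $k$ becomes negative); this is safe since every isolated vertex lies in every dominating set, and deleting vertices cannot increase the degeneracy. After this step every vertex of $G$ is incident to an edge. Then compute in polynomial time a vertex cover $X$ of $G$ from a maximal matching, so that $|X|\le 2\,{\sf vc}(G)$. If $|X|\le k$, then, as $X$ covers all edges and $G$ has no isolated vertices, $X$ is itself a dominating set of size at most $k$, and we output a trivial \textsc{Yes}-instance. Otherwise $k<|X|\le 2\,{\sf vc}(G)$, so $k=\O({\sf vc}(G))$.

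Third, in the remaining case I would apply the kernelization algorithm of Philip et al.~\cite{philip2009solving}, which on $d$-degenerate graphs produces in polynomial time an equivalent \DS instance of size polynomial in $k$ for every fixed $d$. Since $k=\O({\sf vc}(G))=\O(1\text{-}\tdmod(G))$, the resulting instance has size polynomial in the parameter, giving the desired polynomial kernel. (Equivalently, one could phrase the last step as a polynomial parameter transformation and apply Theorem~\ref{thm:ppt}, using that \DS is \NP-complete, but invoking the degenerate-graph kernel directly is more transparent and yields an explicit bound.)

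I do not expect a genuine obstacle here, which is why the statement is phrased as an observation; the content is just the combination of the three facts above. The only points that need a little care are: (i) the identification of a $1$-treedepth modulator with a vertex cover; (ii) the isolated-vertex reduction, which is what makes a small vertex cover an actual dominating set and hence the $|X|\le k$ branch correct; and (iii) noting that $d$-degeneracy is monotone under vertex deletion, so the preprocessed graph is still $d$-degenerate and the hypotheses of~\cite{philip2009solving} still apply.
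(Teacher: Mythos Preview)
Your proposal is correct and follows essentially the same approach as the paper's observation: compute a $2$-approximate vertex cover $X$, declare a \textsc{Yes}-instance when $|X|\le k$, and otherwise exploit $k=\O({\sf vc}(G))$ to invoke the kernel of Philip et al.\ for \DS/$k$ on degenerate graphs. Your treatment is in fact slightly more careful than the paper's: you explicitly remove isolated vertices before the case split, which is what justifies that a vertex cover of size at most $k$ is indeed a dominating set (the paper's one-line justification tacitly assumes this).
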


Thus, by Proposition~\ref{prop:ds/tdmod} and Observation~\ref{obs:ds/vc}, the only remaining case for degenerate
graphs is \DS/$2$-\tdmod.
We would like to point out that the composition of~\cite{dom2014kernelization} for \DS/($k+$\VC) on general graphs cannot be easily
adapted to \DS/$2$-\tdmod on degenerate graphs, as for example subdividing each edge also leads to a result for \DS/$3$-\tdmod.
Thus, we treat the case \DS/$2$-\tdmod on degenerate graphs using an ad-hoc reduction.


\begin{theorem}\label{thm:DS}
\emph{\DS/$2$-\tdmod} does not admit a polynomial kernel on $4$-degenerate graphs unless \emph{$\text{NP} \subseteq \text{coNP}/\text{poly}$}.
\end{theorem}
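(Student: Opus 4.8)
The plan is to give an \OR-cross-composition from $3$-\textsc{Sat} into \DS/$2$-\tdmod restricted to $4$-degenerate graphs; by Theorem~\ref{thm:cross} together with the \NP-hardness of $3$-\textsc{Sat}, this rules out a polynomial compression, and a fortiori a polynomial kernel, unless $\text{NP}\subseteq\text{coNP}/\text{poly}$. As polynomial equivalence relation $R$ I would declare two well-formed $3$-\textsc{Sat} formulas equivalent when they have the same number of variables and the same number of clauses (all malformed strings forming one further class), which gives a polynomial number of classes on strings of bounded length. Hence it suffices to compose $t$ formulas $\phi_1,\dots,\phi_t$ sharing a common variable set $\{v_1,\dots,v_n\}$ and a common number $m$ of clauses.

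I would build the graph $G$ as follows. For each variable $v_j$, create a triangle on $\{x_j,\bar x_j,f_j\}$, the vertex $f_j$ serving only to force that the dominating set hits the triangle. For each $\phi_i$ and each clause $C$ of $\phi_i$, create a clause vertex $c_{C,i}$ adjacent to the (at most three) vertices among the $x_j$'s and $\bar x_j$'s that correspond to the literals of $C$. For each $i\in[t]$, add two vertices $p_i$ and $q_i$, the edge $p_iq_i$, and edges from $p_i$ to every clause vertex $c_{C,i}$ of $\phi_i$. Finally, add a single vertex $z$ adjacent to all the $q_i$'s, and set the target value to $k=n+t$. The proposed modulator is $X=\{x_j,\bar x_j\mid j\in[n]\}\cup\{z\}$, of size $2n+1$, so the parameter of the produced instance is polynomially bounded in $\max_i|x_i|+\log t$ as required. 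Indeed $G-X$ consists of the isolated vertices $f_j$ together with, for each $i$, a star centered at $p_i$ with leaves $q_i$ and the $c_{C,i}$'s, so $\td(G-X)\le 2$. For $4$-degeneracy I would exhibit the elimination order: remove first all clause vertices (each of degree at most $4$, forming an independent set), then all $f_j$'s (degree $2$), then the $p_i$'s, the $q_i$'s, the vertex $z$, and finally the $x_j$'s and $\bar x_j$'s, every vertex having degree at most $4$ when deleted.

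For correctness, the easy direction is: if some $\phi_{i_0}$ is satisfied by an assignment $\tau$, put into the solution one endpoint of each variable triangle according to $\tau$, all $p_i$ with $i\neq i_0$, and $q_{i_0}$; this is a dominating set of size $n+t$ (clauses of $\phi_i$ for $i\ne i_0$ are dominated by $p_i$, clauses of $\phi_{i_0}$ by a satisfied literal, $z$ by $q_{i_0}$, and $p_{i_0}$ by $q_{i_0}$). For the converse, the key is a counting argument: let $S$ be a dominating set with $|S|\le n+t$. If $z\in S$, then each $p_i$ must still be dominated by a vertex of $\{p_i,q_i\}\cup\{c_{C,i}:C\in\phi_i\}$, and these $t$ sets, the $n$ variable triangles and $\{z\}$ are pairwise disjoint, forcing $|S|\ge n+t+1$, a contradiction; hence $z\notin S$. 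Then dominating each $q_i$ forces $S\cap\{p_i,q_i\}\neq\emptyset$, and with the variable triangles we obtain $n+t$ pairwise disjoint demands, so $S$ contains exactly one vertex of each variable triangle, exactly one of each pair $\{p_i,q_i\}$, and no other vertex, in particular no clause vertex. Now $z$ is dominated only through some $q_{i_0}\in S$, so $p_{i_0}\notin S$ and therefore every clause vertex $c_{C,i_0}$ is dominated through one of its literal vertices; reading off the assignment ``$v_j$ true iff $x_j\in S$'' then satisfies $\phi_{i_0}$, so the composed instance is a yes-instance.

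I expect the main difficulty to be reconciling three requirements simultaneously, which is also why the $c\ge 3$ reduction of Proposition~\ref{prop:ds/tdmod} does not transfer: (i) $4$-degeneracy forbids a clause vertex from having more than one ``instance-selecting'' neighbor, which is why the selection is funneled through a single per-instance vertex $p_i$ dominating all of $\phi_i$'s clauses; (ii) there are $t$ vertices $p_i$, so they cannot be placed in the modulator, and keeping $\td(G-X)\le 2$ then forces the $p_i$-gadget to be exactly a star; and (iii) one must block the ``cheating'' solution that simply buys all the $p_i$'s and ignores satisfiability, which is precisely the role of the extra vertex $z$ together with the tight budget $k=n+t$ in the disjointness argument above. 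Once the gadget is set up so that these constraints hold at once, the remaining verifications (equivalence, parameter bound, polynomial running time) are routine.
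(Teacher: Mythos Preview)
Your proposal is correct and follows essentially the same approach as the paper. Your construction is isomorphic to the paper's under the renaming $f_j\leftrightarrow a_\ell$, $p_i\leftrightarrow r^i$, $q_i\leftrightarrow y^i$, $z\leftrightarrow\alpha$; the only cosmetic difference is that you leave the $f_j$'s outside the modulator (which is fine, since they become isolated in $G-X$), and your backward argument is organized around a case split on $z\in S$ rather than the paper's direct counting, but both reach the same tight packing of $n+t$ disjoint demands.
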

\begin{proof}
We prove this result by using an \OR-cross-composition from $3$-\textsc{Sat} (see Definition~\ref{def:ORcross}).
We consider $t$ instances of $3$-\textsc{Sat}, where for every $i \in [t]$, instance $I^i$ has $m_i$ clauses $\{C^i_j \mid j \in [m_i]\}$ and $n_i$
variables $X^i=\{x^i_{\ell} \mid {\ell} \in [n_i]\}$,  each clause containing $3$ variables. We can choose the equivalence relation  of Definition~\ref{def:ORcross} such that for every $i \in [t]$, we have
$m_i=m$ and $n_i=n$.

\begin{figure}[t]
\begin{center}\vspace{-.4cm}
\includegraphics[width=0.57\textwidth]{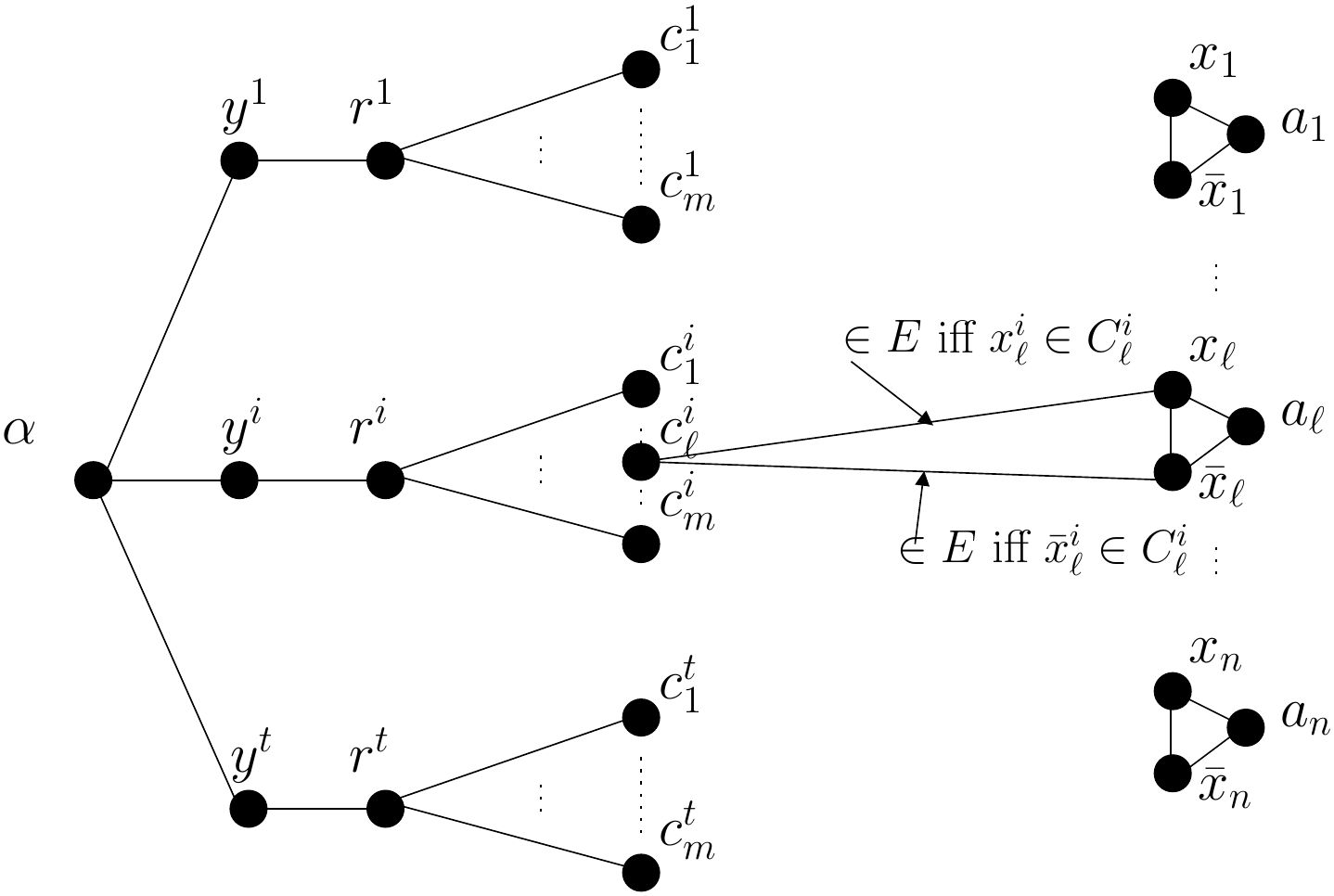}
\end{center}
\caption{Example of the \OR-cross-composition of Theorem~\ref{thm:DS}.}
\label{fig:cross}
\end{figure}

Let us now construct a graph $G=(V,E)$ as follows; see Fig.~\ref{fig:cross} for an illustration. We start by adding to $V$
the set of vertices $\X=\bigcup_{{\ell} \in [n]} \{\x_{\ell}, \bar{\x}_{\ell}\}$ (and
thus $|\X|=2n$) and $\C^i = \{\c^i_{\ell} \mid {\ell} \in [m]\}$ for every $i \in [t]$.
Let $\C = \bigcup_{i \in [t]} \C^i$.
For every $i \in [t]$, ${\ell} \in [n]$, $j \in [m]$, we set
$\{\x_{\ell},\c^i_j\}\in E^i$ (resp. $\{\bar{\x}_{\ell},\c^i_j\}\in E^i$) if and only if
$C^i_j$ contains $x^i_{\ell}$ (resp. $\bar{x}^i_{\ell}$).
We add to $E$ the set $\bigcup_{i \in [t]} E^i$. Then, we add to $V$ the set $\A=\{a_{\ell} \mid {\ell} \in [n]\}$, and create $n$
triangles by adding to $E$ edges $\{\x_{\ell},\bar{\x}_{\ell}\}$, $\{a_{\ell},\x_{\ell}\}$,
and $\{a_{\ell},\bar{\x}_{\ell}\}$ for every ${\ell} \in [n]$. Finally, we add to $V$ the set $\Y = \{\y^i \mid i \in [t]\}$, $\R = \{\r^i \mid i
\in [t]\}$, and a vertex $\alpha$. Then, for every $i \in [t]$, we add
to $E$ edges $\{\r^i,\c^i_{\ell} \}$ for every ${\ell} \in [m]$, edges
$\{\r^i,\y^i\}$, and edges $\{\y^i,\alpha\}$. This concludes the construction of $G$. To summarize, $G$ has $3n+t(m+2)+1$ vertices (vertices are
partitioned into $V=(\X \cup \A) \cup (\C \cup \Y \cup \R) \cup \{\alpha\}$) and, in particular, for every $i \in [t]$, $G[\{r^i\} \cup C^i \cup \y^i]$
is a star, and $G[\{\alpha\} \cup Y]$ is also a star.

\medskip
\noindent
\emph{The \OR-equivalence}.
Let us prove that there exists $i \in [t]$ such that $I^i$ is satisfiable
if and only if $G$ has a dominating set of size at most $k=n+t$.
Suppose first, without loss of generality, that $I^1$ is satisfiable, and let
$S_{\X} \subseteq \X$ be the set of $n$ literals corresponding to this
assignment (thus for every $\ell \in [n]$ we have $|S_\X \cap
\{\x_{\ell},\bar{\x}_{\ell}\}|=1$). Let $S = S_\X \cup \y^1 \cup (R \setminus
\{\r^1\})$. We have $|S|=n+t$, and  $S$ is a dominating
set of $G$ as
\begin{itemize}
\item[$\bullet$] $\X \cup \A$ is dominated by $S_\X$,
\item[$\bullet$] $\C^1$ is dominated by $S_\X$ as it corresponds to an assignment
  satisfying $I^1$, and for every $i \in [t], i \ge 2$,  $\C^i$ is dominated by $\r^i$,
\item[$\bullet$] $\y^1 \in S$, and for every $i \in [t], i \ge 2$,  $\y^i$ is dominated by $\r^i$,
\item[$\bullet$] $\r^1$ is dominated by $\y^1$, and for every $i \in [t], r \ge 2$,
  $\r^i \in S$, and
\item[$\bullet$] $\alpha$ is dominated by $\y^1$.
\end{itemize}

For the other direction, let $S = S_1 \cup S_2$, with $S_1 = S \cap (\X \cup \A)$, be a dominating set of $G$ of size at most $k=n+t$.
Without loss of generality, we can always suppose that $S_1 \subseteq \X$,
as if $\a_{\ell} \in S$ we can always remove $a_{\ell}$ from  $S$ and add (arbitrarily)
$\x_{\ell}$ or $\bar{\x}_{\ell}$.

Let us first prove that $|S_1|=n$.
Observe first that $|S_1| \ge n$ as dominating $A$ requires at
least $n$ vertices. Suppose now by contradiction that $|S_1| >
n$. Then, there would remain at most $t-1$ vertices to dominate $R$,
which is not possible. Note that we even have that for every $\ell \in [n]$, $|S_1 \cap\{\x_{\ell},\bar{\x}_{\ell}\}|=1$, as every $\a^{\ell}$ must be dominated and
$|S_2|=t$.

Let us now analyze $S_2$ (recall that, by definition, $S_2 \subseteq (\C \cup \Y \cup \R) \cup \{\alpha\}$).
We cannot have that for every $i \in [t]$, $|S_2 \cap
(\C^i\cup \r^i)| \ge 1$, as otherwise there would be no remaining vertex to dominate
$\alpha$.
Thus, there exists $i_0$ such that $|S_2 \cap (\C^{i_0}\cup
\r^{i_0})|= 0$. This implies that $\C^{i_0}$ is dominated by $S_1$.
As for every ${\ell} \in [n]$, $|S_1 \cap\{\x_{\ell},\bar{\x}_{\ell}\}|=1$, $S_1$
corresponds to a valid truth assignment that satisfies all the
$C^i_{\ell}$'s, ${\ell} \in [m]$, and the instance $I^{i_0}$ is satisfiable.

\medskip
\noindent
\emph{Size of the parameter}. Let $M = \X \cup \A \cup \{\alpha\}$. As $G[V \setminus M]$ contains
$t$ disjoint stars, we have that $2$-\tdmod$(G) \le |M| \le \text{poly}(n)$, as
required.

\medskip
\noindent
\emph{Degeneracy}. Let us prove that $G$ is $4$-degenerate.
Observe that every vertex in $\C$ has degree at most $4$ (three neighbors
in $\X$ and one in $\R$). Thus, every ordering of $V(G)$ of
the form $(\C,\R,\Y,\alpha,\X,\A)$ (with arbitrary order within each set) is a $4$-elimination order of $G$, that is, the vertices of $G$ can be removed according to this ordering so that the current first vertex has at most 4 neighbors in the current graph. \end{proof}

\section{Excluding polynomial kernels parameterizing by $\tw$ or $\td$}
\label{sec:neg2}
Our objective in this section is to show that the meta-result of Gajarsk{\`y} et al.~\cite{gajarsky2013kernelization} cannot be improved by replacing $c$-\tdmod with $\td$, even when restricting ourselves to planar graphs of bounded maximum degree.


When proving lower bounds on the size of kernel for a parameter $\kappa$ such that $\kappa(\bigcup G_i) \le \text{poly}(\max(\kappa(G_i)))$ (such as $\tw$ or $\td$),
compositions are generally simple, as taking the union of the input
graphs preserves the parameter as required (which is obviously not
true, for example, when parameterizing by the size of a solution).
However, there is a problem occurring when proving that if the large
union graph is a \textsc{Yes}-instance, then every (or there exists, depending
if we are designing an \AND- or \OR-composition) instance is a \textsc{Yes}-instance, as the sizes of the solutions in the different $G_i$'s are not
necessarily balanced. This explains why in~\cite{bodlaender2009problems} and in
this work we introduce \emph{bounded} versions of decision problems, where we
already know that either a small solution exists, or there
is no larger solution.

Let us now explain in detail how Bodlaender et al.~\cite{bodlaender2009problems} prove that several problems
(including \IS~and \DS) do not admit a polynomial kernel
parameterized by $\tw$ unless $\text{NP} \subseteq \text{coNP}/\text{poly}$. To that end, they
first define a \emph{refinement} problem, where the input of the classical
problem is augmented with a witness $I$ (corresponding to a subset of
vertices or edges), and the question is to decide whether there exists a
solution of size $|I|+1$ (or $|I|-1$ for a minimization problem).
For example, in the \IS-\REF problem, given a graph $G$ and an independent set $I$,
the question is to decide whether $G$ has an independent set of size
$|I|+1$.
Then, they show that
\begin{enumerate}
\item \IS-\REF is Karp \NP-hard (by a Karp reduction
  from \IS that simply adds $k-1$ independent vertices connected to all
  the old vertices),
\item \IS-\REF/$\tw$ is \OR-compositional,
\item \IS-\REF/$\tw$ does not admit a polynomial kernel unless $\text{NP} \subseteq \text{coNP}/\text{poly}$  (which is a direct consequence of the two previous
points using Theorem~\ref{thm:compo}), and
\item \IS/$\tw$ does not admit a polynomial kernel unless $\text{NP} \subseteq \text{coNP}/\text{poly}$  (by simply observing that \IS-\REF/$\tw$ $\le_{\PPT}$ \IS/$\tw$ and using Theorem~\ref{thm:ppt}).
\end{enumerate}

A drawback of this approach is that we lose planarity in Step 1. To obtain the same
results for planar graphs, we propose the following modification of the
previous approach, where we replace the positive witness by an upper
bound (or a lower bound, for minimization problems), and use an \AND-composition instead.

In the following, $\Pi_{\C}$ will denote any NP optimization graph problem where input graphs belong to a graph class $\C$ and,
and $\Pi^{\dec}_{\C}$ its associated decision problem (given $G \in \C$ and $k$, we have to decide whether $\opt(G) \ge k$ for a maximization problem, or $\opt(G) \le k$ for a minimization one).

\begin{definition}[Decision problem with negative witness]
Given a maximization (resp. minimization) problem $\Pi_{\C}$, we define $\Pi_{\C}^{\sup}$ (resp $\Pi_{\C}^{\inf}$) as follows:
\begin{itemize}
\item[] \textbf{Input}: An instance $(G,k)$ of $\Pi^{\dec}_{\C}$ such that $\opt(G) \le k$ (resp. $\opt(G) \ge k$).
\item[] \textbf{Question}: Decide whether $\opt(G)=k$.
\end{itemize}
\end{definition}

\begin{definition}
We say that an optimization problem is \emph{additive} if for every two graphs $G_1$ and $G_2$, $\opt(G_1 \cup G_2) = \opt(G_1)+\opt(G_2)$.
\end{definition}

Observe that many classical optimization problems are additive, like \IS or \DS.

\begin{proposition}
Let $\C$ be a graph class stable under disjoint union and let $\Pi_\C$ be an additive optimization problem.
Then $\Pi^{\sup}_\C/\td$ and $\Pi^{\inf}_\C/\td$ are \AND-compositional.
\end{proposition}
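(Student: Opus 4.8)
The plan is to verify directly that the AND-cross-composition (equivalently, a plain AND-composition since the parameter is unbounded and we are free to use the ``$t$ copies of the same instance'' trick via a polynomial equivalence relation) of the input instances is simply their disjoint union, and that taking unions behaves well both for the parameter $\td$ and for the constraint ``$\opt(G) \le k$'' (resp.\ ``$\opt(G) \ge k$''). I will write the argument for $\Pi^{\sup}_\C/\td$; the case of $\Pi^{\inf}_\C/\td$ is symmetric. So let $(G_1,k_1),\dots,(G_t,k_t)$ be instances of $\Pi^{\sup}_\C$, all of roughly the same size (using the polynomial equivalence relation that puts two instances in the same class iff they have the same number of vertices and the same value of $k$, so that within a class all $k_i$ coincide and, if we wish, all $G_i$ have the same order). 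The composed instance is $G = \bigcup_{i\in[t]} G_i$ together with the target value $k = \sum_{i\in[t]} k_i$.

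First I would check that $(G,k)$ is a legal instance of $\Pi^{\sup}_\C$: since $\C$ is stable under disjoint union, $G \in \C$; since $\Pi_\C$ is additive, $\opt(G) = \sum_i \opt(G_i) \le \sum_i k_i = k$, so the promise ``$\opt(G)\le k$'' holds. Next, the parameter: $\td$ of a disjoint union is the maximum of the treedepths of the components (a treedepth decomposition of the union is just the forest obtained by juxtaposing optimal decompositions of the $G_i$'s, and conversely each $G_i$ sits in one subtree), so $\td(G) = \max_{i\in[t]} \td(G_i) \le \max_i |x_i|$, which is polynomially (indeed linearly) bounded in $\max_i |x_i| + \log t$ as Definition~\ref{def:ORcross} requires. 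Finally, the AND-equivalence: by additivity, $\opt(G) = \sum_i \opt(G_i)$, and since each $\opt(G_i) \le k_i$ by the promise, we have $\opt(G) = k = \sum_i k_i$ if and only if $\opt(G_i) = k_i$ for every $i\in[t]$, i.e.\ iff every $(G_i,k_i)$ is a \textsc{Yes}-instance of $\Pi^{\sup}_\C$. The construction is clearly computable in time polynomial in $\sum_i |G_i|$, so all the conditions of Definition~\ref{def:ORcross} are met.

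There is essentially no hard step here: the whole point of introducing the ``decision problem with negative witness'' $\Pi^{\sup}_\C$ (and $\Pi^{\inf}_\C$) is precisely to make the AND-composition trivial, by building the one-sided promise $\opt(G)\le k$ into the instance so that the partial-sums argument goes through. The only points that need a word of care are (i) invoking the polynomial equivalence relation so that the $k_i$'s can be taken equal — strictly speaking one should define $R$ to refine instances by $(|V(G_i)|, k_i)$ and note it has polynomially many classes on instances of bounded size — and (ii) the (standard) fact $\td(G_1 \cup G_2) = \max(\td(G_1), \td(G_2))$, together with additivity of $\opt$, which are exactly the two closure properties assumed in the statement. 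Hence the proposition follows.
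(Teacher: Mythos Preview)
Your proposal is correct and follows essentially the same route as the paper: form the disjoint union $G=\bigcup_i G_i$ with target $k=\sum_i k_i$, use stability of $\C$ and additivity of $\opt$ to check that $(G,k)$ is a legal $\Pi^{\sup}_\C$ instance, use $\td(\bigcup_i G_i)=\max_i \td(G_i)$ for the parameter bound, and derive the AND-equivalence from $\opt(G)=\sum_i \opt(G_i)$ together with the per-instance promise $\opt(G_i)\le k_i$. The only cosmetic difference is that you phrase the construction as an \AND-cross-composition with a polynomial equivalence relation, whereas the paper uses the plain \AND-composition (all inputs already share the same parameter $\td$); in particular your step forcing the $k_i$'s to coincide is unnecessary, since the $k_i$'s are part of the instance string rather than the parameter.
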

\begin{proof}
Let us only prove the result for $\Pi^{\sup}_\C$, as the proof is similar for $\Pi^{\inf}_\C$.
Let $t$ be an integer and let, for $i \le t$, $((G_i,k_i),\td)$ be an instance of $\Pi_\C^{\sup}$, where $\td(G_i)=\td$.
Let $G'$ be the disjoint union of the $G_i$'s and let $k'=\sum_{i=1}^t k_i$. We have $\td(G') \le \max(\td(G_i))$.
As $\C$ is stable under disjoint union, to verify that $(G',k')$ is an instance of
$\Pi_\C^{\sup}$ it only remains to prove that $\opt(G') \le k'$.
However, as $\Pi$ is additive, we have $\opt(G')=\sum_{1 \le i \le t} \opt(G_i) \le k'$.

It remains to verify that $\opt(G')=k' \Leftrightarrow \forall i, \opt(G_i)=k_i$.
$\Leftarrow:$ is straightforward by the additivity of $\Pi$.
$\Rightarrow:$ Let us suppose that $\sum_{1 \le i \le t} \opt(G_i)=k'$, and let $\ell \le t$. Again, as for every $i$ we have $\opt(G_i) \le k_i$, we deduce $\opt(G_\ell) \ge k_\ell$, and thus $\opt(G_\ell)=k_\ell$.
\end{proof}


According to Theorem~\ref{thm:compo}, 
we get the following results. Note that as $\Pi_\C^{\sup} \le_{\PPT} \Pi^{\dec}_\C$, we use Theorem~\ref{thm:ppt} and also state the result for $\Pi^{\dec}_\C$ in the next theorem.
\begin{theorem}
Let $\C$ be a graph class stable under disjoint union.
\begin{itemize}
\item[$\bullet$] For any additive maximization problem
$\Pi$ such that \emph{$\Pi_\C^{\sup}$} is Karp \NP-hard,
 \emph{ $\Pi^{\sup}_\C/\td$} (and thus \emph{$\Pi^{\dec}_\C/\td$}) does not admit a polynomial
  kernel unless \emph{$\text{NP} \subseteq \text{coNP}/\text{poly}$}.
\item[$\bullet$] For any additive minimization problem
$\Pi$ such that \emph{$\Pi_\C^{\inf}$} is Karp \NP-hard,
  \emph{$\Pi^{\inf}_\C/\td$} (and thus \emph{$\Pi^{\dec}_\C/\td$}) does not admit a polynomial
  kernel unless \emph{$\text{NP} \subseteq \text{coNP}/\text{poly}$}.
\end{itemize}
\end{theorem}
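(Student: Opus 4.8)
The plan is to derive the statement by feeding the \AND-compositionality furnished by the preceding proposition into Theorem~\ref{thm:compo}, and then transferring the conclusion to $\Pi^{\dec}_\C$ via Theorem~\ref{thm:ppt}. First I would record that, under the hypotheses, the derived classical problem of $\Pi^{\sup}_\C/\td$ (resp. $\Pi^{\inf}_\C/\td$), namely $\Pi^{\sup}_\C$ (resp. $\Pi^{\inf}_\C$), is \NP-complete: it is Karp \NP-hard by assumption, and it lies in \NP because, given a promise instance $(G,k)$ with $\opt(G)\le k$ (resp. $\opt(G)\ge k$), a ``yes''-certificate is a feasible solution of value exactly $k$; feasibility and value are checkable in polynomial time since $\Pi$ is an \NP optimization problem, and together with the promise such a certificate witnesses $\opt(G)=k$.

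Second, since the preceding proposition shows that $\Pi^{\sup}_\C/\td$ and $\Pi^{\inf}_\C/\td$ are \AND-compositional and we have just seen that their derived classical problems are \NP-complete, Theorem~\ref{thm:compo} immediately yields that neither $\Pi^{\sup}_\C/\td$ nor $\Pi^{\inf}_\C/\td$ admits a polynomial kernel unless $\text{NP}\subseteq\text{coNP}/\text{poly}$.

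Third, to obtain the parenthetical statement for $\Pi^{\dec}_\C$, I would use the identity map $f(G,k)=(G,k)$, reading an instance of $\Pi^{\sup}_\C$ (which carries the promise $\opt(G)\le k$, hence is a ``yes''-instance iff $\opt(G)=k$ iff $\opt(G)\ge k$) as an instance of $\Pi^{\dec}_\C$; since the graph is untouched, $\td$ is preserved, so $\Pi^{\sup}_\C/\td\le_{\PPT}\Pi^{\dec}_\C/\td$, and similarly $\Pi^{\inf}_\C/\td\le_{\PPT}\Pi^{\dec}_\C/\td$. The same map is a Karp reduction of classical problems, so $\Pi^{\dec}_\C$ is \NP-hard, and being an \NP optimization problem it is in \NP, hence \NP-complete; Theorem~\ref{thm:ppt} then applies with $P=\Pi^{\sup}_\C/\td$ (resp. $\Pi^{\inf}_\C/\td$) and $Q=\Pi^{\dec}_\C/\td$, so a polynomial kernel for $\Pi^{\dec}_\C/\td$ would produce one for $\Pi^{\sup}_\C/\td$ (resp. $\Pi^{\inf}_\C/\td$), contradicting the second step. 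I expect no genuine obstacle: the only point needing care is the \NP-membership of the witness problems (and of $\Pi^{\dec}_\C$), which is precisely where the assumption that $\Pi$ is an \NP optimization problem is used; everything else is a direct chaining of the quoted theorems.
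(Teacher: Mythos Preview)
Your proposal is correct and follows essentially the same route as the paper: invoke the \AND-compositionality established in the preceding proposition together with Theorem~\ref{thm:compo} to rule out polynomial kernels for $\Pi^{\sup}_\C/\td$ (resp. $\Pi^{\inf}_\C/\td$), and then transfer this to $\Pi^{\dec}_\C/\td$ via the trivial PPT reduction and Theorem~\ref{thm:ppt}. You are in fact more explicit than the paper about the \NP-membership of the witness problems and about why the identity map is a valid PPT reduction, both of which the paper leaves implicit.
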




According to the previous theorem, it turns out that to exclude a kernel by
treedepth we only have to prove that the ``negative witness'' version of
a decision problem is Karp \NP-hard, which is usually almost already given by
classical reductions.

\begin{proposition}
\textsc{IS}$^{\sup}_{\C}$ is Karp \NP-hard, where $\C$ is the class of planar graphs of maximum degree at most $4$.
\end{proposition}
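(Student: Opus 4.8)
The plan is to give a polynomial-time Karp reduction from \textsc{Planar 3-Sat}, which is well known to be \NP-hard. The key observation is that the textbook reduction from \textsc{3-Sat} to \IS --- one triangle per clause, plus consistency gadgets linking the occurrences of each variable --- already outputs, for every formula $\phi$, a pair $(G',k')$ in which $\alpha(G') \le k'$ \emph{unconditionally} and $\alpha(G') = k'$ if and only if $\phi$ is satisfiable: the upper bound comes from taking at most one vertex per clause triangle and at most the independence number of each variable gadget. In other words, such an instance already lies in the promise set of $\textsc{IS}^{\sup}_{\C}$, and it is a yes-instance of $\textsc{IS}^{\sup}_{\C}$ precisely when $\phi \in \textsc{Planar 3-Sat}$. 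Hence the only thing that requires care is to realise this reduction with a \emph{planar} output graph of maximum degree at most $4$, which is exactly the setting in which \IS is classically known to remain \NP-hard.

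Concretely, I would start from a normalised instance $\phi$ of \textsc{Planar 3-Sat} in which every clause contains three distinct literals and every variable $x$ occurs the same number $p_x$ of times positively and negatively (both normalisations are standard and preserve \NP-hardness, the latter by adding harmless tautological occurrences). For each variable $x$ I build an even cycle $C^x$ of length $2p_x$ whose vertices alternate between ``positive'' and ``negative'' copies of $x$, one copy per occurrence; for each clause $C_j$ I build a triangle $T_j=\{v_{j,1},v_{j,2},v_{j,3}\}$; and for the $a$-th literal of $C_j$, an occurrence of a literal $\ell$ of variable $x$, I add one edge joining $v_{j,a}$ to the corresponding copy of $\ell$ on $C^x$. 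Using the planar embedding of the clause--variable incidence graph of $\phi$, this can be drawn so that $G'$ is planar, and since each cycle vertex is tapped by at most one clause and each triangle vertex taps at most one cycle, every vertex has degree at most $4$ (in fact at most $3$). I then set $k' := m + \sum_x p_x$, where $m$ is the number of clauses. The bound $\alpha(G') \le k'$ is immediate: restricting any independent set to $\bigcup_j T_j$ gives at most $m$ vertices, and restricting it to $\bigcup_x C^x$ gives at most $\sum_x \alpha(C^x) = \sum_x p_x$.

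It then remains to prove $\alpha(G') = k'$ if and only if $\phi$ is satisfiable. For the forward direction, from a satisfying assignment I take on each $C^x$ the alternating class encoding the truth value of $x$ (which has $p_x$ vertices), and in each $T_j$ one vertex whose literal is satisfied; the edges joining triangles to cycles are not violated exactly because the chosen cycle class encodes the value that makes the chosen literal true, so this set is independent of size $k'$. Conversely, an independent set of size $k'$ must meet every $T_j$ in exactly one vertex and every $C^x$ in exactly $p_x$ vertices; but an independent set of $C^x$ of size $p_x$ is necessarily one of the two alternating classes, which I read as a truth value of $x$, and the connecting edges then force the literal picked in each $T_j$ to be true under this assignment, so $\phi$ is satisfiable. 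Thus $\phi \mapsto (G',k')$ is a polynomial Karp reduction from \textsc{Planar 3-Sat} to $\textsc{IS}^{\sup}_{\C}$, proving the proposition. I expect the main obstacle to be bookkeeping and layout rather than anything deep: simultaneously guaranteeing planarity, the degree bound, and the ``tightness on a cycle forces one of two alternating classes, with the correct polarity'' correspondence, and verifying that the normalisations of \textsc{Planar 3-Sat} used above are legitimate --- these are precisely the places where a careless gadget would break either the promise $\alpha(G') \le k'$ or the equivalence.
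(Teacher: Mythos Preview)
Your overall strategy coincides with the paper's: reduce from a planar \textsc{3-Sat} variant, replace each clause by a triangle and each variable by an even cycle, inherit planarity from the incidence embedding, and observe that $k'$ equals the sum of the gadgets' independence numbers so that $\alpha(G')\le k'$ holds unconditionally and the output already lies in $\textsc{IS}^{\sup}_{\C}$. The paper in fact just recalls Lichtenstein's reduction and points out that it has this promise property.

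The gap is exactly where you anticipate it, in the planarity of the variable gadget. On $C^x$ you force the positive and negative copies to \emph{alternate}, but when you replace the variable node of the planar incidence graph by a cycle, the cycle vertices receiving the incident clause edges must respect the \emph{given} cyclic order of those edges around $x$. Merely balancing occurrence counts does not make these two orders compatible: if the polarities around $x$ read, say, $+,+,-,-$, there is no planar way to send them to alternating positions on $C^x$. The paper sidesteps this by reducing from planar \textsc{3-Sat(5)} (at most five occurrences per variable, both signs present) and using a fixed $C_4$ per variable with two $x$-labelled and two $\bar{x}$-labelled vertices; with at most five incident edges and up to two per cycle vertex, one can always choose which of the two same-sign vertices to hit so that no crossing appears and the degree stays at most $4$. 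An equivalent fix for your version, with no balancing normalisation needed, is to allot each occurrence \emph{two} consecutive cycle vertices (a ``true'' and a ``false'' one), making $|C^x|$ twice the total number of occurrences of $x$; then the cycle order matches the embedding order regardless of polarity, the two maximum independent sets are precisely the all-true and all-false classes, and every vertex has degree at most~$3$.
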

\begin{proof}
It is sufficient to observe that the reduction from planar $3$-\textsc{Sat}(5) (where each variable appears in at most $5$ clauses) to planar \IS provided in \cite{lichtenstein1982planar}
is in fact a reduction to planar \textsc{IS}$^{\sup}$. For the sake of completeness let us recall this reduction.

%
An instance of planar $3$-\textsc{Sat}(5) is described by a set $C$ of $m$
clauses $c_i$ and a set $X$ of $n$ variables $x_j$, where each $c_i$ contains
exactly three literals (where a literal is of the form $x_j$ or $\bar{x}_j$). We can clearly assume that each variable appears both positively and negatively.
Consider the incidence graph (which is planar) $G=(V,E)$ that has $V=C \cup X$ and has an
edge from a variable to a clause if the variable or its negation
appear in the clause. We define $G'$ by replacing each $c_i$ with a
triangle $T_{c_i}$ (each vertex of the triangle is associated with a literal of
the clause) and each $x_j$ with a cycle on 4 vertices
$C_{x_j}=\{v^1_{x_j},v^1_{\bar{x}_j},v^2_{x_j},v^2_{\bar{x}_j}\}$
with edges $\{v^t_{x_j},v^{t'}_{\bar{x}_j} \mid t,t' \in \{1,2\}\}$.
Then, we add edges between $T_{c_i}$ and $C_{x_j}$ in the following
way. If $T_{c_i}$ contains a vertex $v$ corresponding to $x_\ell$
(resp. $\bar{x}_\ell$), we add exactly one of the two edges $\{v,v^t_{\bar{x}_\ell}\}$
(resp. $\{v,v^t_{x_\ell}\}$) where $t \in \{1,2\}$, by choosing  $t$ such that $G'$ remains planar and of maximum degree 4. Indeed, as each
variable appears in at most $5$ clauses and appears both positively and negatively, it is always possible to
embed the $C_{x_j}$'s such that the at most $5$ edges of the form $\{v_1,v_2\}$
with $v_1$ in a clause triangle and $v_2 \in C_{x_j}$ do not cross when connecting to $C_{x_j}$, while avoiding to connect more than 2 edges to the same vertex in $C_{x_j}$.

Observe that $G'$ can be partitioned into $m$ triangles and $n$
$C_4$'s, implying $\alpha(G') \le m+2n$. As $\alpha(G')=m+2n$ if and only if the
original instance of planar $3$-\textsc{Sat}(5) is satisfiable, we get the
desired result.
\end{proof}




Note that it could be tempting to directly conclude that
\textsc{IS}$^{\sup}$ is \NP-hard on planar graphs using the classical Turing reduction from
\IS on planar graphs (\ie starting with an instance of \textsc{IS}$^{\sup}$ with
$k=n$, and decreasing $k$ while the oracle for \textsc{IS}$^{\sup}$ answers `\textsc{No}').
However, this would only prove that \textsc{IS}$^{\sup}$ is Turing \NP-hard on planar
graphs, which is not sufficient to use Theorem~\ref{thm:compo} that requires Karp
\NP-hardness, even if this detail is generally not mentioned in the statement.

As \IS is an additive problem, we immediately deduce the following corollary.
\begin{corollary}\label{corollary:ISCVC}
\textsc{IS}$/$\td\ does not admit a polynomial kernel on planar graphs of maximum degree at most $4$ unless \emph{$\text{NP} \subseteq \text{coNP}/\text{poly}$}.
\end{corollary}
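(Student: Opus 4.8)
The plan is to read the statement off directly from the two preceding results. First I would record the two structural ingredients demanded by the hypotheses of the preceding theorem: (i) the class $\C$ of planar graphs of maximum degree at most $4$ is obviously stable under disjoint union, and (ii) \IS\ is an additive maximization problem, since a maximum independent set of $G_1 \cup G_2$ decomposes as the disjoint union of a maximum independent set of $G_1$ and one of $G_2$, i.e. $\alpha(G_1 \cup G_2) = \alpha(G_1) + \alpha(G_2)$. The preceding proposition supplies the third and only non-routine ingredient, namely that $\textsc{IS}^{\sup}_{\C}$ is Karp \NP-hard on this class.

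Given these three facts, the first bullet of the preceding theorem applies verbatim and yields that $\textsc{IS}^{\sup}_{\C}/\td$ does not admit a polynomial kernel unless $\text{NP} \subseteq \text{coNP}/\text{poly}$; combining this with the observation $\Pi^{\sup}_{\C} \le_{\PPT} \Pi^{\dec}_{\C}$ and Theorem~\ref{thm:ppt} (as already noted just before that theorem) transfers the conclusion to $\textsc{IS}^{\dec}_{\C}/\td$, which is precisely \IS/\td\ restricted to planar graphs of maximum degree at most $4$. This is the claimed statement, so the proof amounts to a one-line invocation.

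I do not expect any real obstacle to remain at this point: the work has been front-loaded into the proof of the preceding proposition, where one checks that the reduction of~\cite{lichtenstein1982planar} from planar $3$-\textsc{Sat}(5) not only preserves satisfiability but also outputs a graph $G'$ whose independence number equals the a priori upper bound $m+2n$ \emph{precisely} when the formula is satisfiable (so that the output is a legitimate instance of the negative-witness version $\textsc{IS}^{\sup}$), and into the proof of the preceding theorem, which uses additivity to build an \AND-composition while $\td$ is preserved up to a maximum over the components. The only point I would make explicit in the write-up is that one must go through the \AND-composition route rather than an \OR-composition, as that is exactly what allows planarity and the degree bound $4$ to be retained through Step~1 of the scheme.
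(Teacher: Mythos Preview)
Your proposal is correct and follows exactly the paper's approach: the paper's entire proof is the single sentence ``As \IS\ is an additive problem, we immediately deduce the following corollary,'' and you have simply made explicit the three ingredients (closure of $\C$ under disjoint union, additivity of \IS, and the preceding proposition's Karp \NP-hardness of $\textsc{IS}^{\sup}_{\C}$) that feed into the preceding theorem.
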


To propagate the previous result to almost all
problems covered by the meta-result of Gajarsk{\`y} et al.~\cite{gajarsky2013kernelization}, we will use Theorem~\ref{thm:ppt} on folklore reductions and verify that the treedepth is polynomially preserved.
To avoid tedious enumeration of problems, we restrict our attention to problems mentioned in Corollary~\ref{corollary:nokernel} below.
Note that for problems like \textsc{Longest path} and \textsc{Treewidth} where $\opt(G_1 \cup G_2) = \max(\opt(G_1),\opt(G_2))$, we also get that a polynomial kernel is unlikely to exist
on planar graphs, as taking the union of input graphs provides a trivial \OR-composition.

\begin{corollary}\label{corollary:nokernel}
The following problems do not admit a polynomial kernel when parameterized by $\td$ or $\tw$ on planar graphs of bounded maximum degree unless \emph{$\text{NP} \subseteq \text{coNP}/\text{poly}$}:
\VC, \FVS, \OCT, \DS, $r$-\DS, \textsc{Chordal Vertex Deletion}, \textsc{Induced Matching}.
\end{corollary}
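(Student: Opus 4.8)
The plan is to obtain every item of the statement from Corollary~\ref{corollary:ISCVC} --- the non-existence of a polynomial kernel for \IS/$\td$ on planar graphs of maximum degree at most $4$ --- by means of polynomial parameter transformations, and to conclude with Theorem~\ref{thm:ppt}. Two of the reductions are for free: since $\tw(G)\le\td(G)$ for every graph $G$, a polynomial kernel parameterized by $\tw$ is in particular a polynomial kernel parameterized by $\td$, so it suffices to prove the $\td$ statements; and since $\mathrm{vc}(G)=|V(G)|-\alpha(G)$ while $\td(G)$ is unchanged, a polynomial kernel for \VC/$\td$ on a graph class would give one for \IS/$\td$ on the same class, so \VC\ is already settled. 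In particular \VC\ is \NP-hard on planar graphs of maximum degree at most $4$, which serves as the starting point for the remaining problems.

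For \FVS, \OCT\ and \textsc{Chordal Vertex Deletion} I would use a PPT reduction from \VC, for \DS\ a PPT reduction from \VC, for $r$-\DS\ a PPT reduction from \DS, and for \textsc{Induced Matching} a PPT reduction from \IS. All of these are folklore: from $G$ one builds $G'$ by attaching to each edge $uv$ a constant-size gadget made only of vertices of degree at most $2$ --- one extra vertex adjacent to $u$ and $v$ (creating a triangle) for \FVS\ and \OCT, a single degree-two ``edge vertex'' adjacent to $u$ and $v$ for \DS, a pair of non-adjacent degree-two vertices each adjacent to $u$ and $v$ (creating an induced $C_4$) for \textsc{Chordal Vertex Deletion} --- or, for \textsc{Induced Matching}, by attaching one pendant vertex to every vertex of $G$, and, for $r$-\DS, by subdividing each edge $\Theta(r)$ times and hanging pendant paths of length $\Theta(r)$ from suitable vertices. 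In each case one checks the standard value identity, namely $\mathrm{fvs}(G')=\mathrm{oct}(G')=\mathrm{vc}(G)$, $\gamma(G')=\mathrm{vc}(G)$, $\mathrm{cvd}(G')=\mathrm{vc}(G)$, $\mathrm{im}(G')=\alpha(G)$, and, for $r$-\DS, an identity $\gamma_r(G')=\gamma(G)+h(G)$ for a polynomial-time computable function $h$. The forward direction is immediate, and the reverse direction is routine because every gadget vertex has degree at most $2$ and participates only in the short cycle (or is the pendant) it was designed for, so any solution of $G'$ can be pushed into $V(G)$ without increasing its size.

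It then remains to verify three invariants for each construction: planarity is preserved, since every gadget can be drawn in a small neighbourhood of the edge (or vertex) it is attached to; the maximum degree stays bounded, since each original vertex has its degree multiplied by a constant and every new vertex has degree $O(1)$; and the parameter is polynomially (in fact linearly) controlled, namely $\td(G')\le\td(G)+O(1)$ for the edge-gadget reductions and $\td(G')\le\td(G)+O(r)$ for $r$-\DS. The last point is the crux, and the argument is uniform: fix an optimal treedepth decomposition $F$ of $G$; every gadget vertex inserted for an edge $uv$ has all its neighbours among $u$, $v$ and its gadget-siblings, and since $uv\in E(G)$ the vertices $u$ and $v$ lie on a common root-to-leaf path of $F$, so the whole gadget (each pendant path as one descending branch) can be appended below the deeper of $u$ and $v$, giving a valid treedepth decomposition of $G'$ of height $\td(G)+O(1)$ (respectively $\td(G)+O(r)$, which is $O(\td(G))$ since $r$ is fixed). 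Finally, each target decision problem is in \NP\ and \NP-complete on planar graphs of bounded maximum degree --- its \NP-hardness following from that of \VC, \IS\ or \DS\ on such graphs via the same (Karp) reductions, and the \NP-hardness of \IS, hence of \VC, and then of \DS\ and the rest, on planar graphs of bounded maximum degree being already contained in the excerpt --- so Theorem~\ref{thm:ppt} applies and rules out polynomial kernels parameterized by $\td$ (hence by $\tw$) for all the listed problems unless $\text{NP}\subseteq\text{coNP}/\text{poly}$. The only genuinely non-routine point is designing, for $r$-\DS, a gadget that realises \DS\ ``at distance scale $r$'' while keeping planarity, bounded degree and treedepth simultaneously under control; all the other reductions are entirely classical.
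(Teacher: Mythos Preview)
Your overall strategy---PPT reductions from \IS/\VC\ via small edge- or vertex-gadgets, then Theorem~\ref{thm:ppt}---is exactly the paper's, and your treedepth-preservation argument (append each gadget below the deeper endpoint in an optimal decomposition of $G$) is the right one. Two points deserve correction or simplification.

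\textbf{The \textsc{Chordal Vertex Deletion} gadget is wrong.} Adding two non-adjacent vertices $x_1,x_2$, each adjacent to both $u$ and $v$, does \emph{not} produce an induced $C_4$: the $4$-cycle $u\,x_1\,v\,x_2$ has the chord $uv$, and the whole gadget on $\{u,v,x_1,x_2\}$ is in fact chordal (its only cycles are the two triangles $uvx_1$, $uvx_2$ and that chorded $4$-cycle). Hence nothing forces a deletion at the edge $uv$, and the identity $\mathrm{cvd}(G')=\mathrm{vc}(G)$ fails. The paper instead adds $x_1,x_2$ with edges $\{u,x_1\},\{x_1,x_2\},\{x_2,v\}$, so that $u\,x_1\,x_2\,v$ is a genuinely chordless $4$-cycle; then any chordal deletion set must hit $\{u,v,x_1,x_2\}$, and since $x_1,x_2$ have degree $2$ one can push the hit to $\{u,v\}$, yielding $\mathrm{cvd}(G')=\mathrm{vc}(G)$ exactly as you intended.

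\textbf{The $r$-\DS\ reduction is more involved than necessary.} The paper reduces from \DS\ to $r$-\DS\ with no subdivisions and no additive shift: simply attach a pendant path $P_r$ to every vertex of $G$ and keep $k'=k$. A dominating set of $G$ is then an $r$-dominating set of $G'$, and conversely any $r$-dominating set of $G'$ can be pushed into $V(G)$. This keeps planarity and bounded degree, and gives $\td(G')\le\td(G)+r$ directly. Your subdivide-and-hang construction can be made to work, but it is not needed.
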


\begin{proof} We split the proof into several problems.

\vspace{.2cm}
\noindent
\textsc{FVS, OCT, \DS.}
For these three problems we use the same folkore reduction. Given an input $(G,k)$ of \VC, we define $G'$ by adding for each edge $\{u,v\}$ of $G$ a vertex $x_{uv}$,
and two edges $\{x_{uv},u\}$ and $\{x_{uv},v\}$. It is straightforward
to verify that $G$ admits a vertex cover of size at most $k$ if and only if $G'$ admits a
FVS (or OCT, or \DS) of size at most $k$. As $\td(G') \le \td(G)+1$
(in the treedepth decomposition of $G$, for each vertex $u$ in $G$ we add
degree of $u$ new leaves attached to $u$), this is a PPT reduction.

\vspace{.2cm}
\noindent
\textsc{Chordal vertex deletion.} We use almost the same reduction as above: given an input $(G,k)$
of \VC, we define $G'$ by adding for each edge $\{u,v\}$ of $G$ two
vertices $x^1_{uv}$, $x^2_{uv}$, and edges
$\{u,x^1_{uv}\},\{x^1_{uv},x^2_{uv}\}, \{x^2_{uv},v\}$.

\vspace{.2cm}
\noindent
\textsc{$r$-\DS.}
Given an input $(G,k)$ of \DS, we define $G'$ by adding a pendant
$P_{r}$ attached to each vertex of $G$, and we set $k'=k$. As $r$
is constant, the treedepth is polynomially preserved.

\vspace{.2cm}
\noindent
\textsc{Induced Matching.}
Given an input $(G,k)$ of \IS, we define $G'$ by adding a pendant vertex
to each vertex of $G$, and we set $k'=k$. \end{proof}

Note that Corollary~\ref{corollary:nokernel} does not apply for $\C^1$,
defined as the class of {\sl connected} planar graphs of bounded maximum degree, as $\C^1$ is obviously not stable under disjoint union.
However, there are two ways to get the same result for all problems of
Corollary~\ref{corollary:nokernel} for $\C^1$, using the following observations.

If a problem $\Pi$ is solvable on planar graphs in time $\O^*(c^{\td})$ for some constant $c$ (which is often the case~\cite{DPBF10,CNP11,RueST14}),
we can show using an ad-hoc argument (\ie depending on the problem) that $\Pi_{\C}/$\td $\le_{\PPT} \Pi_{\C^1}/$\td. Let us illustrate this for \IS.
Given an instance $(G,k)$ of \textsc{IS}$_{\C}/\td$ having $k_1 \le k$ connected components $X_i$, let $v_i \in X_i$ be a vertex on the outer face of $X_i$.
For every $i$ we add two vertices $a_i,b_i$ and edges $\{a_i,b_i\}$, $\{b_i,v_i\}$ so that any optimal solution takes $a_i$ and not $b_i$, and we connect all the $b_i$'s by creating a path.
We get a graph $G'$, and we set $k'=k+k_1$ so that $(G,k)$ and
$(G',k')$ are equivalent. $G'$ is still planar of bounded maximum degree and
$\td(G') \le \td(G)\lceil \log_2(k_1+1) \rceil )$ (as $\td(P_{k_1})
\le \lceil \log_2(k_1+1) \rceil$).
If $\log_2(k_1) \le \td(G)$ then this is PPT reduction. Otherwise, we
can solve the original input in polynomial time and also get the reduction.

Alternatively, it is generally possible to directly cross-compose from $\Pi^{\inf}_{\C^1}$ (or $\Pi^{\sup}_{\C^1}$) to $\Pi_{\C^1}/\td$ using again an ad-hoc argument to connect the graph.
Given the $t$ instances $\{G_i\}$ of $\Pi^{\inf}_{\C^1}$, we define $G'$ by adding a dummy vertex $v_i$ on the outer face of $G_i$ and connecting the $v_i$'s by creating a path.
If the dummy vertices are added such that $G'$ is a \textsc{yes}-instance if and only if all the $G_i$'s are \textsc{yes}-instances,  we have an \AND-cross-composition, as again $\td(G') \le \max(\td(G_i))\lceil \log_2(k_1+1) \rceil )$, and this $\log$ factor is allowed in the parameter of a cross-composition.

\section{Concluding remarks and further research}
\label{sec:conclusions}

In this article we studied the existence of polynomial kernels for problems parameterized by the size of a $c$-treedepth modulator, on graphs that are not necessarily sparse. On the positive side, we proved that \textsc{Vertex Cover} (or equivalently, \textsc{Independent Set})  parameterized by the size $x$ of a $c$-treedepth  modulator admits a polynomial kernel on general graphs with $x^{2^{\O(c^2)}}$ vertices, for every $c \geq 1$. A natural direction is to improve the size of this kernel. Since
\textsc{Vertex Cover} parameterized by the distance to a disjoint collection of cliques of size at most $c$ does not admit a kernel with $\O(x^{c-\epsilon})$ vertices unless $\text{NP} \subseteq \text{coNP}/\text{poly}$~\cite{majumdar2015kernels}, and since a clique of size $c$ has treedepth $c$, the same lower bound applies to our parameterization; in particular, this rules out the existence of a {\sl uniform} kernel. However, there is still a large gap between both bounds, hence there should be some room for improvement.

On the negative side, we proved that \textsc{Dominating Set}  parameterized by the size of a $c$-treedepth modulator does not admit a polynomial kernel on $4$-degenerate graphs for any $c \geq 2$. As \textsc{Dominating Set} with this parameterization admits a polynomial kernel on nowhere dense graphs~\cite{gajarsky2013kernelization}, it follows that sparse graphs constitute the border for the existence of polynomial kernels for \textsc{Dominating Set}. This leads us to the following natural question: are there smaller parameters for which \textsc{Dominating Set} still admits polynomial kernels on sparse graphs? Since considering as parameter the treedepth of the input graph does not allow for polynomial kernels (see Section~\ref{sec:neg2}), we may consider as parameter the size $x$ of a vertex set whose removal results in a graph of treedepth at most $b(x)$, for a function $b$ that is not necessarily constant. We prove in Appendix~\ref{sec:negative-DS-pushed} that \textsc{Dominating Set} does {\sl not} admit polynomial kernels on graphs of bounded expansion for $b(x) = \Omega (\log x)$, unless $\text{NP} \subseteq \text{coNP}/\text{poly}$. On the other hand, by combining the approach of Garnero et al.~\cite{garnero2015explicit} to obtain explicit kernels via dynamic programming  with the techniques of Gajarsk{\`y} et al.~\cite{gajarsky2013kernelization} on graphs of bounded expansion, it can be shown -- we omit the details here -- that \textsc{Dominating Set} admits a polynomial kernel  for $b(x) = \O(\log \log \log x)$ on graphs of bounded expansion whose expansion function $f$ is not too ``large''\footnote{That is, the function $F$ that bounds the grad with rank $d$ of the graphs in the family, see~\cite{sparsity}.}, namely $f(d) = 2^{\O(d)}$. While this result is somehow anecdotal, we think that it may be the starting point for a systematic study of this topic.

We leave as an open question the existence of polynomial kernels on general graphs for other natural problems  parameterized by the size of a treedepth modulator, such as \textsc{Feedback Vertex Set} (already studied by Jansen et al.~\cite{jansen2014parameter}  under several parameterizations) or \textsc{Odd Cycle Transversal}.  In the spirit of the recent meta-kernelization results on sparse graphs~\cite{FLST10,KLPRRSS16,gajarsky2013kernelization,BFL+09,garnero2015explicit}, it would be interesting to find generic conditions for a problem to admit polynomial kernels on general graphs with this parameter. To the best of our knowledge, the only meta-kernelization results with structural parameters on general graphs are the work of Ganian et al.~\cite{ganian2016meta}, where
the parameter is the minimum number of parts $V_i$'s required in a vertex partition such that every $V_i$ is a module (for every $v \in V \setminus V_i$, either all or no vertex of $V_i$ is adjacent to $v$) and $G[V_i]$ has bounded rankwidth\footnote{Note that our kernel for \VC/$c$-\tdmod is not covered by the meta-result of~\cite{ganian2016meta}. Indeed, given a $c$-treedepth modulator $X=\{v_i \mid i \in [|X|]\}$, we could define a partition of $V(G)$ with $V_i = \{v_i\}$ for $i \in [|X|]$ and $V_{|X|+1} = V(G)\setminus X$.
The number of parts is polynomial in $|X|$,  each satisfying the rankwidth condition: $\rw(V_{|X|+1}) \le \tw(V_{|X|+1})+1 \le \td(V_{|X|+1})+1 \le c+1$. However, $V_{|X|+1}$ is not a module in general.}, and the further extension provided by Eiben et al.~\cite{EibenGS15}, which also subsumes the meta-kernelization framework of Gajarsk{\`y} et al.~\cite{gajarsky2013kernelization}.

Finally, it is worth studying whether the $2^c$-approximation algorithm of~\cite{gajarsky2013kernelization} for computing a $c$-treedepth modulator can be improved (maybe, using the algorithm of Reidl et al.~\cite{ReidlRVS14} for computing treedepth), or whether a  lower bound can be proved.

\medskip

{\small \noindent \textbf{Acknowledgement}. We would like to thank the anonymous reviewers  for helpful comments that improved the presentation of the manuscript.}


%
%
%




\bibliographystyle{abbrv}
\bibliography{biblio,biblio-Ignasi}


\newpage

\begin{appendix}

\section{List of problems considered in this article}
\label{ap:problems}

~

\vspace{-.2cm}
\begin{boxedminipage}{.99\textwidth}
\textsc{Independent Set (\IS)}\vspace{.1cm}

\begin{tabular}{ r l }
\textbf{~~~~Instance:} & $(G,k)$ with $G=(V,E)$ a graph and $k$ an integer. \\
\textbf{Question:} & Decide whether $\alpha(G) \ge k$, \\
& \ie if $\exists S \subseteq V$ such that $\forall e \in E, e \nsubseteq S$ and $|S| \ge k$. \\
\end{tabular}
\end{boxedminipage}

\vspace{.4cm}
\begin{boxedminipage}{.99\textwidth}
\textsc{$c$-treedepth modulator Independent Set ($c$-\tdmod-\IS)}\vspace{.1cm}

\begin{tabular}{ r l }
\textbf{~~~~Instance:} & $(G,X,k)$ with $G=(V,E)$ a graph, $X$ a $c$-treedepth modulator, $k \in \mathds{N}$. \\
\textbf{Question:} & Decide whether $\alpha(G)\ge k$.\\
\end{tabular}
\end{boxedminipage}
\vspace{.4cm}

\begin{boxedminipage}{.99\textwidth}
\textsc{Annotated $c$-treedepth modulator Independent Set} (a-$c$-\tdmod-\IS)\vspace{.1cm}

\begin{tabular}{ r l }
\textbf{~~~~Instance:} & $(G,X,k)$ where \\
 & ~\textbullet\ $G=(V,E,\H)$ is a hypergraph structured as follows: $V = X \uplus R$,\\
 & ~~ $E = E_{X,R} \uplus E_{R,R}$ is a set of edges where edges in $E_{A,B}$ have one \\
 & ~~  endpoint in $A$ and the other in $B$, and $\H \le 2^{X}$ is a set of \\
 & ~~  hyperedges where each $H \in \H$ is entirely contained in $X$.\\
 & ~\textbullet\ $X$ is a $c$-treedepth modulator (as $G[V\setminus X]$ is not a hypergraph, \\
 & ~~ its treedepth is correctly defined and we have $\td(V\setminus X) \le c$).\\
 & ~\textbullet\ $k$ is a positive integer.\vspace{.1cm}\\

\textbf{Question:} & Decide whether $\alpha(G) \ge k$ (an independent set in a hypergraph is a \\
 & subset of vertices that does not contain any hyperedge, corresponding  \\
 & here to a subset $S \subseteq V$ such that for every $h \in E \cup H$, $h \nsubseteq S$).\\
\end{tabular}
\end{boxedminipage}
\vspace{.4cm}

\begin{boxedminipage}{.99\textwidth}
\textsc{Vertex Cover (\VC)}\vspace{.1cm}

\begin{tabular}{ r l }
\textbf{~~~~Instance:} & $(G,k)$ with $G=(V,E)$ a graph and $k$ an integer \\
\textbf{Question:} & decide whether $G$ has a vertex cover of size at most $k$, \ie if \\
 &  there exists $S \subseteq V$ such that $\forall e \in E, e \cap S \neq \emptyset$ and $|S| \le k$.
\end{tabular}
\end{boxedminipage}
\vspace{.4cm}

%
%
%
%

\begin{boxedminipage}{.99\textwidth}
\textsc{Feedback Vertex Set (FVS)}\vspace{.1cm}

\begin{tabular}{ r l }
\textbf{~~~~Instance:} & $(G,k)$ with $G=(V,E)$ a graph and $k$ an integer.\\
\textbf{Question:} & Decide whether $G$ has a feedback vertex set of size at  most $k$, \ie if\\
&  there exists $S \subseteq V$ such that $G[V\setminus S]$ is a forest and $|S| \le k$ \\
\end{tabular}
\end{boxedminipage}
\vspace{.4cm}

\begin{boxedminipage}{.99\textwidth}
\textsc{Dominating Set (\DS)}\vspace{.1cm}

\begin{tabular}{ r l }
\textbf{~~~~Instance:} & $(G,k)$ with $G=(V,E)$ a graph and $k$ an integer.\\
\textbf{Question:} & Decide whether $G$ has a dominating set of size at most $k$,  \ie if there\\
& exists $S \subseteq V$ such that $\forall u \in V\setminus S, \exists v \in S \mid \{u,v\} \in E$ and $|S| \le k$.
\end{tabular}
\end{boxedminipage}
\vspace{.4cm}

\begin{boxedminipage}{.99\textwidth}
\textsc{Red Blue Dominating Set (\RBDS)}\vspace{.1cm}

\begin{tabular}{ r l }
\textbf{~~~~Instance:} & $(U,W,E,k)$ where $(U,W,E)$ is a bipartite graph and $k$ is an integer.\\
\textbf{Question:} & Decide whether there exists $S \subseteq W$ such that $N(S)=U$ and $|S| \le k$. \\
\end{tabular}
\end{boxedminipage}
\vspace{.4cm}



\section{Stronger negative results for \textsc{Dominating Set}}
\label{sec:negative-DS-pushed}

In this section we rule out the existence of polynomial kernels for \textsc{Dominating Set} on graphs of bounded expansion for a parameter that is smaller than the size of a $c$-treedepth modulator. Let $b: \N \rightarrow \R$ be a function. A \emph{$b$-treedepth modulator} of a graph $G=(V,E)$ is a subset of vertices $X \subseteq V$ such that $\td(G[V\setminus X]) \le b(|X|)$, and we denote by
$b$-$\tdmod(G)$ the size of a smallest $b$-treedepth modulator of $G$. Note that, in the particular case where the function $b$ is constantly equal to a positive integer $c$, $b$-treedepth modulators correspond exactly to $c$-treedepth modulators. In the following proposition we show that $\DS$ does not admit polynomial kernels on graphs of bounded expansion parameterized by the size of a $b$-treedepth modulator with $b(x) = \Omega (\log x)$.

\begin{proposition}\label{prop:ds/tdmod-pushed}
$\DS/\log$-$\tdmod$ does not admit a polynomial kernel on graphs of bounded expansion unless
\emph{$\text{NP} \subseteq \text{coNP}/\text{poly}$}.
\end{proposition}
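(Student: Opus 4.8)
The plan is to use a polynomial parameter transformation from \DS/$1$-\tdmod (equivalently, \DS/\VC) on general graphs, which is known not to admit a polynomial kernel unless $\text{NP} \subseteq \text{coNP}/\text{poly}$~\cite{dom2014kernelization}, combined with Theorem~\ref{thm:ppt}. The key point is that the $c$-subdivision gadget of Observation~\ref{obs:sub}, which was used to prove Proposition~\ref{prop:ds/tdmod}, produces graphs whose post-removal treedepth is \emph{constant}; here we instead want to push the number of subdivisions so that the resulting graph lies in a class of bounded expansion while the treedepth of the part outside the modulator grows only logarithmically in the size of the modulator. Concretely, starting from an instance $(G,k)$ of \DS/\VC with $G=(V,E)$, $n = |V|$, $m = |E|$, and a vertex cover $X$ of $G$ with $|X| \le k$, I would subdivide each edge $3s$ times for an appropriate $s = \Theta(\log n)$ (chosen to be a multiple of the block size so that Observation~\ref{obs:sub} applies with $c = s$), obtaining $G' = G^{3s\text{-sub}}$ and the equivalent instance $(G', k + ms)$.

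First I would check the equivalence: by Observation~\ref{obs:sub}, $G$ has a dominating set of size $k$ if and only if $G'$ has a dominating set of size $k + ms$. Second, I would verify the graph-class membership: $G'$ is obtained from $G$ by subdividing edges, hence $G'$ is planar whenever $G$ is, but more to the point, the class of graphs obtained by subdividing each edge of an arbitrary graph $\Omega(\log)$ times is of bounded expansion --- in fact one can subdivide every edge of every graph $\Theta(\log)$ times and land in a fixed class of bounded expansion (this is a standard fact about subdivisions and grads, see~\cite{sparsity}); alternatively $G'$ is $2$-degenerate and sparse enough in all shallow minors. Third, and this is the crux, I would bound the treedepth of $G' - X$: exactly as in the proof of Proposition~\ref{prop:ds/tdmod}, removing $X$ from $G'$ leaves a disjoint union of subdivided stars (spiders) and subdivided edges ($P_{3s+1}$'s), and a subdivided star with subdivision length $3s$ has treedepth $\O(\log s) = \O(\log\log n)$, well within $\O(\log |X|)$ since $|X|$ can be assumed to be $\Theta(n)$ after the usual preprocessing (if $|X|$ is small, i.e. $\log |X|$ is too small relative to the subdivision length, then $|X| \le k$ is tiny and we solve the instance in polynomial time directly). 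So $X$ is a $b$-treedepth modulator of $G'$ with $b(x) = \O(\log x)$, and $|X| \le k$, giving the required polynomial (indeed linear) bound on the parameter.

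The transformation is clearly polynomial-time, \DS is \NP-complete and in \NP, so Theorem~\ref{thm:ppt} applies and yields that a polynomial kernel for $\DS/\log$-$\tdmod$ on bounded expansion graphs would give one for \DS/\VC on general graphs, contradicting~\cite{dom2014kernelization} unless $\text{NP} \subseteq \text{coNP}/\text{poly}$. The main obstacle I anticipate is the bookkeeping around the parameter: one must ensure simultaneously that (i) the subdivision length is large enough that the treedepth of $G'-X$ does not exceed $b(|X|)$ for the chosen function $b(x) = \Omega(\log x)$, and (ii) it is small enough (polynomial) that the construction stays polynomial-time and $G'$ remains in a bounded-expansion class, while also (iii) handling the degenerate case where $|X|$ is so small that $\log|X|$ cannot accommodate even a short subdivided spider --- there the instance is solved by brute force in polynomial time since $k \le |X|$ is a constant-ish bound. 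A clean way to package (iii) is: if $n \le 2^{2^{|X|}}$ (say) we subdivide as above, otherwise $|X|$ is enormous relative to $\log n$ and any short subdivision already fits, so the constraint is never binding; the precise constants are routine to tune.
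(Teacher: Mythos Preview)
Your overall strategy---a PPT reduction via Observation~\ref{obs:sub} together with Theorem~\ref{thm:ppt}---is the same as the paper's, but the paper proceeds differently in two respects. First, it reduces from $\RBDS/|U|$ rather than $\DS/\VC$, and subdivides each edge $3u$ times (i.e., proportionally to the \emph{parameter}, not to $\log n$). Second, and more importantly, it bypasses any case analysis by a padding trick: it takes the disjoint union of $G^{3u\text{-sub}}$ with a $(3u+1)^2\times(3u+1)^2$ grid plus an apex, and puts \emph{all} of the grid into the modulator $X$. This forces $|X|\ge (3u+1)^4$, so $\log|X|\ge 4\log(3u+1)$, which comfortably dominates the treedepth $\le 2+\log(3u+1)$ of the spiders left after removing $X$; the inequality $\td(G'-X)\le\log|X|$ then holds unconditionally. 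Bounded expansion is argued directly: the apex-grid is an apex graph, and for the subdivided part one checks $\tilde\nabla_r(K_{u,w}^{3u\text{-sub}})\le r+2$ using that $K_{u,w}$ is $u$-degenerate. This yields a linear expansion function, whereas your $\Theta(\log n)$-subdivision of arbitrary graphs only gives an exponential one (still bounded expansion, so still valid).

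Your argument is salvageable but the edge-case handling in~(iii) is wrong as written. The claim ``$|X|\le k$'' has no justification (the vertex-cover size and the dominating-set target are unrelated), and the threshold $n\le 2^{2^{|X|}}$ goes the wrong way: it gives $\log\log n\le |X|$, not the needed $\log\log n\lesssim\log|X|$. The correct dichotomy is on $|X|$ versus $\log n$: if $|X|< C\log n$ for a suitable constant $C$, then since $\tw(G)\le |X|$ you can solve \DS\ outright in time $2^{O(|X|)}n^{O(1)}=n^{O(1)}$ and output a trivial instance; otherwise $|X|\ge C\log n$, and the spiders in $G'-X$ have treedepth $1+\lceil\log(3s+1)\rceil=O(\log\log n)\le\log|X|$ for $C$ chosen appropriately. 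With this fix your route works; the paper's padding gadget simply sidesteps the issue.
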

\begin{proof} As in the proof of Proposition~\ref{prop:ds/tdmod},
it is sufficient to prove that $\RBDS/U \le_{\PPT}
\DS_{\C_{BE}}/\log$-\tdmod, as $\RBDS/U$ (and even $\RBDS/(k+U)$) does not
admit a polynomial kernel unless $\text{NP} \subseteq \text{coNP}/\text{poly}$~\cite{dom2014kernelization}.
Let $(G=(U,W,E),k)$ be an instance of $\RBDS/U$ with $u=|U|$, $w=|W|$, and $m=|E|$. Let
$G'=G^{3u\text{-sub}} \cup \tilde{G}$, where $\tilde{G}$ is a square grid on
$(3u+1)^4$ vertices plus a vertex $\alpha$ connected to all vertices of
this grid. $G$ has a RB-dominating set of size $k$ if and only if $G'$ has a
dominating set of size $k+um+1$ (as we also have to take $\alpha$ in
the solutions of $G'$). Let $X=U \cup \tilde{V}$ where $\tilde{V}$ is
the vertex set of $\tilde{G}$. Note that $G'[V'\setminus X]$ is a disjoint
collection of spiders $S_v$ (one rooted at each $v \in W$) of height
$3u+1$. As $\td(S_v) \le 1+\td(P_{3u}) \le 2+ \log(3u+1) \le 2\log(3u+1) = \O(\log |X|)$ and $\td(G') \ge
\td(\tilde{G}) \ge (3u+1)^2$, we get that $X$ is a log-treedepth modulator of
$G'$, and that $\log$-\tdmod$(G') \le |X| \le \text{poly}(|U|)$. To summarize, we added a large grid to artificially increase
the treedepth of $G'$. Moreover, observe that we could not reduce
directly from \DS/\VC as before, as we need a lower bound depending on
\VC of the form $\frac{1}{\text{poly}(u)}$.
Let us finally verify that $G'$ has bounded expansion. As $\tilde{G}$
is an apex graph, it has bounded expansion (as, for instance, planar graphs are well-known to have bounded expansion, and the addition of an apex vertex preserves this property),
and thus it remains to verify that
$G^{3u\text{-sub}}$ has bounded expansion. Let $K=K^{3u\text{-sub}}_{u,w}$. As $G^{3u\text{-sub}} \subseteq K$ as a subgraph, it is sufficient to prove that $K$ verifies the condition of bounded expansion.

To that end, we will prove that $\tilde\nabla_r(K)\leq r+2$, where $\tilde\nabla_r(G)$ denotes the density of a depth-$r$ topological minor using the notation of~\cite{sparsity}. Let $H$ be a depth-$r$ topological minor of $K$.
If $r<3u$, then $H$ is clearly $2$-degenerate. If $r\geq 3u$, observe that every vertex of $K$ that was originally in $W$  has degree at most $u$, and every subdivision vertex (\ie a vertex which is not already a vertex of $K_{u,w}$) has degree $2$. As in a topological minor a vertex cannot have a higher degree than in the original graph, and $K$ is bipartite, we conclude that $H$ is $u$-degenerate. Hence, taking into account both cases, we have that $\tilde\nabla_r(K)\leq r+2$. This proves that the class $\{K_{u,w}^{3u\text{-sub}} : u,w\in \mathbb{N}\}$ has bounded expansion. Thus, this is a PPT reduction and we get the desired result. \end{proof}

\end{appendix}

\end{document}